\documentclass[11pt]{article}
\usepackage[margin=1in]{geometry}

\geometry{letterpaper}
\usepackage{enumerate}
\usepackage[utf8]{inputenc}
\usepackage{amsmath, amsthm, amssymb,tabu, thm-restate}
\usepackage{color}
\usepackage{calc}
\usepackage{tikz}
\usetikzlibrary{positioning,arrows,decorations.pathreplacing,shapes}
\usetikzlibrary{calc}
\usepackage{multirow}
\usepackage{boxedminipage}
\usepackage{xifthen}
\usepackage{tabularx}
\usepackage{xcolor}
\usepackage{setspace}
\singlespacing
\usepackage{authblk}
\usepackage{bm}
\usepackage{nameref}
\usepackage{framed}
\usepackage{mathpazo}
\usepackage{amsthm}
\usepackage{caption, subcaption}
\usepackage{framed}
\usepackage{algpseudocode}
\usepackage[algo2e, ruled, linesnumbered]{algorithm2e}


\usepackage[linesnumbered,ruled]{algorithm2e}
\usepackage{hyperref}
\usepackage{cleveref}

\newcommand{\ignore}[1]{}
\setlength{\parskip}{0.5em}
\setlength{\parindent}{0pt}
\newlength{\bibitemsep}\setlength{\bibitemsep}{0.2\baselineskip plus .0\baselineskip minus .0\baselineskip}
\newlength{\bibparskip}\setlength{\bibparskip}{2pt}
\let\oldthebibliography\thebibliography
\renewcommand\thebibliography[1]{%
	\oldthebibliography{#1}%
	\setlength{\parskip}{\bibitemsep}%
	\setlength{\itemsep}{\bibparskip}%
}

\newcommand{\R}{\mathbb{R}}

\newcommand{\x}{\mathbf{x}}
\newcommand{\y}{\mathbf{y}}

\newcommand{\pr}{\mathrm{par}}
\newcommand{\nsw}{\mathrm{NSW}}
\newcommand{\opt}{\mathrm{OPT}}
\newcommand{\cvx}{\mathrm{cvx}}
\newcommand{\ncvx}{\mathrm{ncvx}}


\newcommand{\RR}{\mathbb{R}}

\newcommand{\cA}{\mathcal{A}}

\newcommand{\cG}{\mathcal{G}}

\newcommand{\cP}{\mathcal{P}}
\newcommand{\cQ}{\mathcal{Q}}

\newcommand{\cS}{\mathcal{S}}

\newcommand{\kl}[2]{D_{\mathrm{KL}}(#1 \, || \,#2)}


\newcommand\mb[1]{\mathbf{#1}}

\newtheorem{claim}{Claim}[section]
\newtheorem{assumption}{Assumption}[section]

\newtheorem{fact}{Fact}[section]

\newtheorem{theorem}{Theorem}
\newtheorem{corollary}[theorem]{Corollary}

\newtheorem{definition}{Definition}
\newtheorem{lemma}[theorem]{Lemma}

\usepackage[textsize=tiny,textwidth=2cm,color=green!50!gray]{todonotes} 

\newcommand*\samethanks[1][\value{footnote}]{\footnotemark[#1]}

\title{Approximation Algorithms for the Weighted Nash Social Welfare via Convex and Non-Convex Programs}

\author[1]{Adam Brown\thanks{ajmbrown@gatech.edu, msingh94@gatech.edu; supported in part by NSF CCF-2106444 and NSF CCF-1910423.}}
\author[2]{Aditi Laddha \thanks{aditi.laddha@yale.edu; supported in part by the Institute for Foundations of Data Science at Yale and  NSF CCF-2007443.}}
\author[3]{Madhusudhan Reddy Pittu \thanks{mpittu@andrew.cmu.edu; supported in part by NSF awards CCF-1955785 and CCF-2006953.}}
\author[1]{Mohit Singh\samethanks[1]}
\affil[1]{Georgia Institute of Technology}
\affil[2]{Yale University}
\affil[2]{Carnegie Mellon University}

\date{}

\begin{document}
\maketitle

\begin{abstract}
    In an instance of the weighted Nash Social Welfare problem, we are given a set of $m$ indivisible items, $\mathcal{G}$, and $n$ agents, $\mathcal{A}$, where each agent $i \in \mathcal{A}$ has a valuation $v_{ij}\geq 0$ for each item $j\in \mathcal{G}$. In addition, every agent $i$ has a non-negative weight $w_i$ such that the weights collectively sum up to $1$. The goal is to find an assignment $\sigma:\mathcal{G}\rightarrow \mathcal{A}$ that maximizes  $\prod_{i\in \mathcal{A}} \left(\sum_{j\in \sigma^{-1}(i)} v_{ij}\right)^{w_i}$, the product of the weighted valuations of the players. When all the weights equal $\frac1n$, the problem reduces to the classical Nash Social Welfare problem, which has recently received much attention. In this work, we present a $5\cdot\exp\left(2\cdot D_{\mathrm{KL}}(\mb{w}\, ||\, \frac{\vec{\mathbf{1}}}{n})\right) = 5\cdot\exp\left(2\log{n} + 2\sum_{i=1}^n w_i \log{w_i}\right)$-approximation algorithm for the weighted Nash Social Welfare problem, where $D_{\mathrm{KL}}(\mb{w}\, ||\, \frac{\vec{\mathbf{1}}}{n})$ denotes the KL-divergence between the distribution induced by $\mb{w}$ and the uniform distribution on $[n]$. 

    We show a novel connection between the convex programming relaxations for the unweighted variant of Nash Social Welfare presented in \cite{cole2017convex, anari2017nash}, and generalize the programs to two different mathematical programs for the weighted case. The first program is convex and is necessary for computational efficiency, while the second program is a non-convex relaxation that can be rounded efficiently. The approximation factor derives from the difference in the objective values of the convex and non-convex relaxation.
\end{abstract}

\section{Introduction}

In an instance of the weighted Nash Social Welfare problem, we are given a set of $m$ indivisible items $\cG$, and a set of $n$ agents, $\cA$. Every agent $i \in \cA$ has a weight $w_i \geq 0$ such that $\sum_{i \in \cA} w_i = 1$ and an additive valuation function $\mb{v}_i: 2^{\cG}\rightarrow \RR_{\geq 0}$. Let $v_{ij}:=\mb{v}_{i}(\{j\})$. The goal is to find an assignment of items, $\sigma: \cG \rightarrow \cA$, to maximize the following welfare function:
\begin{equation}
    \prod_{i\in \cA} \left( \sum_{j \in \sigma^{-1}(i)} v_{ij} \right)^{w_i} .
\end{equation}
For ease of notation, we will work with the log objective and denote 
\begin{equation}
    \nsw(\sigma)=\sum_{i\in \cA} w_i \, \log\left( \sum_{j \in \sigma^{-1}(i)} v_{ij} \right) .
\end{equation}
Let $\opt=\max_{\sigma: \cG \rightarrow \cA} \nsw(\sigma)$ denote the optimal log objective.
The case where $w_i=\frac1n$ for each $i\in \cA$ is the much-studied ``symmetric'' or unweighted Nash social welfare problem, where the objective is the geometric mean of agents' valuations. 

Fair and efficient division of resources among agents is a fundamental problem arising in various fields \cite{barbanel_taylor_2005, brams_taylor_1996, brandt_2016, robertson_webb_1998,rothe_2015, young_1994}.
While there are many social welfare functions which can be used to evaluate the efficacy of an assignment of goods to the agents, the Nash Social Welfare function is well-known to interpolate between fairness and overall utility. The unweighted Nash Social Welfare function first appeared as the solution to an arbitration scheme proposed by Nash for two-person bargaining games and was later generalized to multiple players~\cite{nash1950bargaining, kaneko1979nash}. Since then, it has been widely used in numerous fields to model resource allocation problems. An attractive feature of the objective is that it is invariant under scaling by any of the agent's valuations, and therefore, each agent can specify its utility in its own units (see~\cite{chae_moulin_2004} for a detailed treatment). While the theory of Nash Social Welfare objective was initially developed for divisible items, more recently, it has been applied in the context of indivisible items. We refer the reader to ~\cite{caragiannis2019unreasonable} for a comprehensive overview of the problem in the latter setting. Indeed, optimizing the Nash Social Welfare objective also implies notions of fairness, such as \emph{envy-free} allocation in an approximate sense~\cite{caragiannis2019unreasonable,barman2018finding}.
 
The Nash Social Welfare function with weights (also referred to as asymmetric or non-symmetric Nash Social Welfare) was first studied in the seventies \cite{harsanyi_selten_1972, kalai1977nonsymmetric} in the context of two-person bargaining games. For example, in the bargaining context, it allows different agents to have different weights. Due to this flexibility, problems in many diverse domains can be modeled using the weighted objective, including bargaining theory \cite{chae_moulin_2004,laruelle_valenciano_2007}, water resource allocation \cite{fu_leinberg_lavi_2012,houba_laan_zeng_2013}, and climate agreements \cite{yu_ierland_2017}. From a context of indivisible goods, the study of this problem has been much more recent~\cite{garg2020approximating,garg2021approximating,garg2023approximating}. In this work, we aim to shed light on the weighted Nash Social Welfare problem, mainly focusing on mathematical programming relaxations for the problem.

\subsection{Our Results and Contributions}

 Our main result is an
 $\exp\left(2\log{2} + \frac{1}{2e}+2\kl{\mb{w}}{\mb{u}}\right) \approx 4.81\cdot\exp\left(2\log{n} - 2\sum_{i=1}^n w_i \log {\frac{1}{{w_i}}}\right)$-approximation algorithm for the weighted Nash Social Welfare problem with additive valuations. When all the weights are the same, this gives a constant factor approximation. Our algorithm builds on and extends a convex programming relaxation for the unweighted variant of Nash Social Welfare presented in \cite{cole2015approximating,cole2017convex, anari2017nash}. In the following theorem, we state the guarantee in terms of the log-objective, and therefore, the guarantee becomes an additive one. 

\begin{theorem}\label{thm:convex-main}
    Let $(\cA, \cG, \mb{v}, \mb{w})$ be an instance of the weighted Nash Social Welfare problem with $\sum_{i\in \cA} w_i = 1$ and $|\cA| = n$ agents. There exists a polynomial time algorithm (Algorithm~\ref{alg:nsw-algo}) that, given $(\cA, \cG, \mb{v}, \mb{w})$, returns an assignment $\sigma: \cG \rightarrow \cA$ such that
    \begin{equation*}
        \nsw(\sigma) \geq \opt- 2\log{2} - \frac{1}{2e} - 2\cdot\kl{\mb{w}}{\mb{u}},
    \end{equation*}
    where $\opt$ is the optimal log-objective for the instance and $D_{KL}(\mb{w}||\mb{u})=\log n -\sum_{i\in \cA} w_i\log \frac{1}{w_i}.$
\end{theorem}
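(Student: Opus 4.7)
The plan is to instantiate the ``two program'' paradigm advertised in the abstract. I would first write down both a convex relaxation $\mathrm{CVX}$ and a non-convex relaxation $\mathrm{NCVX}$ over fractional allocation variables $x_{ij}\ge 0$ with $\sum_{i\in\cA} x_{ij}=1$ for each $j\in\cG$, both generalizing the programs of \cite{cole2017convex, anari2017nash} by inserting the weights $w_i$ into the objective. The convex program would use a concave surrogate (an entropic / weighted-log-sum form) so it can be solved to arbitrary accuracy in polynomial time, while the non-convex program would retain the true weighted log-product objective $\sum_i w_i \log(\sum_j v_{ij} x_{ij})$ but be expressed through a generating polynomial whose coefficients are indexed by integer assignments, in the spirit of the real-stable / permanent-based relaxations used in the unweighted case. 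Two basic sandwich inequalities should then follow routinely from the definitions: every integer assignment gives a feasible $\mathrm{NCVX}$ point matching its $\nsw$ value, so $\mathrm{NCVX}^\star\ge\opt$, and a simple fractional lifting shows $\mathrm{CVX}^\star\ge\opt$ as well.

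The core of the proof is a rounding-with-comparison step: given the optimum $x^\star$ of $\mathrm{CVX}$, produce an integer assignment $\sigma$ with
\[
\nsw(\sigma)\;\ge\;\mathrm{CVX}^\star\;-\;2\log 2\;-\;\tfrac{1}{2e}\;-\;2\,\kl{\mb{w}}{\mb{u}}.
\]
The constant part $2\log 2+\tfrac{1}{2e}$ is the gap already present in the unweighted analyses of \cite{cole2017convex,anari2017nash}, obtained by a pipage- or swap-rounding argument that pays an AM--GM / permanent-vs-determinant type loss on a polymatroid or bipartite matching structure. The KL term $2\,\kl{\mb{w}}{\mb{u}}$ is the new ingredient and measures the cost of having used a concave surrogate that treats agents symmetrically while the true weights are asymmetric. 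I would quantify it by an explicit multiplicative rescaling (e.g.\ reweighting $x^\star_{ij}$ by a function of $nw_i$ before rounding), then track, term by term, the gap between $w_i\log(\sum_j v_{ij}x^\star_{ij})$ and its surrogate; collecting these contributions should yield exactly $\sum_i w_i\log(nw_i)=\log n+\sum_i w_i\log w_i=\kl{\mb{w}}{\mb{u}}$, with the factor $2$ arising from a paired application (once in rescaling the $\mathrm{CVX}$ solution into an $\mathrm{NCVX}$-feasible point, and once more when bounding the rounding loss).

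The main obstacle, and the place where the weighted generalization really departs from the unweighted analyses, is executing this rescaling/comparison step cleanly so that the constants line up exactly as claimed. Choosing the right rescaling exponent, handling boundary cases where some $w_i$ is very small, and ensuring that the pipage/swap rounding still interacts well with the weighted objective should all require careful potential-function or log-concavity arguments reminiscent of the ``helper variable'' analysis in \cite{cole2017convex}. Once these pieces are in place, the theorem follows by chaining $\nsw(\sigma)\ge\mathrm{CVX}^\star-2\log 2-\tfrac{1}{2e}-2\,\kl{\mb{w}}{\mb{u}}\ge\opt-2\log 2-\tfrac{1}{2e}-2\,\kl{\mb{w}}{\mb{u}}$, together with polynomial-time solvability of the convex program.
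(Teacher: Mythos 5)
Your proposal correctly reproduces the high-level architecture of the paper's argument --- two relaxations, a sandwich showing the convex optimum upper-bounds $\opt$, a rounding step losing $2\log 2+\tfrac{1}{2e}$, and a factor of $2$ on the KL term coming from two separate comparisons between the programs --- but it stops at the level of a plan, and the concrete ingredients you do commit to are not the ones that make the argument go through. First, your feasible region is not the right one: you place both programs over assignment variables $x_{ij}$ with $\sum_i x_{ij}=1$ per item, whereas the paper's programs live over ``spending'' variables $\mb{b}\in\mathcal{P}(\cA,\cG)$ with $\sum_j b_{ij}=1$ per agent and $\sum_i b_{ij}\le 1$ per item, and the two objectives $f_{\cvx}$ and $f_{\ncvx}$ are \emph{identical except} that the price term is $\log\left(\sum_i w_i b_{ij}\right)$ in one and $\log\left(\sum_i b_{ij}\right)$ in the other. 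It is precisely this structure that makes the gap $f_{\cvx}(\mb{b})-f_{\ncvx}(\mb{b})$ equal to $\kl{\mb{w}}{\mb{u}}-\kl{\mu}{\theta}$ for explicit item distributions $\mu,\theta$ (Lemma~\ref{lem:diff}), which is the sole source of the KL term. Your proposed ``multiplicative rescaling by a function of $nw_i$'' followed by a term-by-term tracking is a guess at this mechanism, not an argument; nothing in the sketch establishes that the contributions ``collect to exactly'' $\kl{\mb{w}}{\mb{u}}$, and your non-convex program (a generating polynomial over integer assignments) is not something one can cycle-cancel or prune.

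Second, the rounding is not pipage or swap rounding. The reason the paper needs \emph{both} programs is that the convex one can be solved but its optima need not be acyclic, while the non-convex one admits a cycle-cancelling step (Lemma~\ref{lem:forest}) that produces a forest-supported point without decreasing $f_{\ncvx}$ --- this is where one of the two KL losses is paid, via Corollary~\ref{cor:acyclic}. The forest is then pruned at items with $q_j<1/2$ and completed by a maximum-weight bipartite matching; bounding the loss there requires a KKT-based stability lemma for support-restricted optima of the convex program (Lemma~\ref{lem:a-change}) and a redistribution argument (Lemma~\ref{lem:redistribution}), and it is inside this rounding analysis that the second KL loss and the constants $2\log 2+\tfrac{1}{2e}$ actually arise (Theorem~\ref{thm:non-convex-main}). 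None of these steps is supplied, nor replaced by a workable alternative, so as written the proposal does not constitute a proof.
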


 Observe that the KL-divergence term $\kl{\mb{w}}{\mb{u}}=\left(\log n -\sum_{i\in \cA} w_i \log \frac{1}{w_i}\right)$ is always upper bounded by $\log(nw_{\max})$, which is exactly the guarantee of previous work~\cite{garg2023approximating}. In many settings, the term $2\cdot \kl{\mb{w}}{\mb{u}}$ can be significantly smaller than $nw_{\max}$. For example, consider the setting where $w_1=\frac{1}{\log n}$ and $w_i=\frac{1}{n-1}(1-\frac{1}{\log n})$ for $i=2,\ldots, n$, i.e., one agent has a significantly higher weight than the others. 
  Then \allowdisplaybreaks
    \begin{align*}
       \kl{\mb{w}}{\mb{u}} &= \frac{1}{\log{n}} \log\left(\frac{n}{\log{n}}\right) + \left(1-\frac{1}{\log{n}}\right) \log\left(\frac{n}{n-1} \left(1-\frac{1}{\log{n}}\right)\right) \\
       &= 1- \frac{\log\log{n}}{\log{n}} + \left(1-\frac{1}{\log{n}}\right) \log\left(\frac{n}{n-1} \right) + \left(1-\frac{1}{\log{n}}\right) \log \left(1-\frac{1}{\log{n}}\right)\\
       &\leq 1 + \log\left(\frac{n}{n-1} \right) \leq 2.
    \end{align*} 
In this case, our results imply an $O(1)$-approximation, while previous results imply an $O(\frac{n}{\log n})$-approximation.

Our algorithm relies on two mathematical programming relaxations for the weig\-hted Nash Social Welfare problem, both of which generalize the convex relaxation for the unweighted version \cite{cole2015approximating,cole2017convex,anari2017nash}. The first relaxation, \nameref{eq:b-non-cvx}, is non-convex but retains a lot of structural insights obtained for the convex relaxation in the symmetric version. We show that the same rounding algorithm as in the symmetric version~\cite{cole2015approximating} gives an $O(1)$-approximation for the weighted version when applied to a fractional solution of the non-convex program. Although \nameref{eq:b-non-cvx} can be rounded efficiently, unfortunately, we cannot solve this relaxation due to its non-convex nature. Now, the second mathematical programming relaxation, \nameref{eq:b-cvx}, comes to the rescue. This relaxation is convex and thus can be solved efficiently, but is challenging to round. Our algorithm solves the convex relaxation, then uses the non-convex relaxation to measure the change in objective as it processes the solution and eventually rounds to an integral assignment. The approximation factor of $\kl{\mb{w}}{\mb{u}}$ arises due to the difference in objective values of these two programs. Section~\ref{sec:techoverview} provides a technical overview of the properties of the two relaxations.

Before stating our second result, we describe the two previous convex programming relaxations for the unweighted Nash Social Welfare problem presented in \cite{cole2017convex} and \cite{anari2017nash}.

\textbf{Equivalence of Relaxations.} Building on the algorithm of \cite{cole2015approximating}, \cite{cole2017convex} introduced the following relaxation for the unweighted Nash Social Welfare problem.
\begin{align}
   \max_{b,q} \quad & \sum_{i \in \cA} \sum_{j \in \cG} b_{ij} \log\left( v_{ij} \right) - \sum_{j \in \cG} \left(\sum_{i\in \cA} b_{ij}  \right)\log\left(\sum_{i\in \cA} b_{ij} \right) \label{eq:cvx-sym} \tag{CVX-Unweighted}\\
    \mathrm{s.t.} \quad &  \sum_{j} b_{ij} = 1 \quad \forall i \in \cA\notag\\
      \quad & \sum_{i} b_{ij} \leq 1 \quad \forall j \in \cG \notag\\
   \quad & b_{ij} \geq 0 \quad \forall (i,j) \in \cA\times \cG .\notag
\end{align}

They showed that \eqref{eq:cvx-sym} is a convex relaxation of the Nash Social Welfare objective, and the prices used by the algorithm presented in \cite{cole2015approximating} can be obtained as dual variables of \eqref{eq:cvx-sym}. Interestingly, the convex relaxation is not in terms of the assignment variables. Indeed, given an optimal assignment $\sigma: \cG \rightarrow \cA$, the corresponding setting of the variables $b_{ij}$ is 
\begin{equation}
  \label{eq:b-assignment}  b_{ij} = \begin{cases}
        \frac{v_{ij}}{\sum_{k \in \sigma^{-1}(i)} v_{ik}} & \text{ if } \sigma(j) = i \\
        0 & \text{ otherwise}.
    \end{cases}
\end{equation}
One can verify that $\mb{b}$ satisfies all the constraints in~\eqref{eq:cvx-sym}, and its objective value is equal to the log of the geometric means of the valuations.

\cite{anari2017nash} presented a different convex programming relaxation\footnote{The program is concave in $x$ and convex in $\log(y)$. A change of variable $y_j \mapsto \exp(-z_j)$ gives a concave-convex program in $x$ and $z$.}, \eqref{eq:sym-poly}, for unweighted NSW. They showed that the objective of \eqref{eq:sym-poly} is a log-concave function in $x$ and convex in $\log y$ to obtain an $e$-approximation for unweighted NSW. 
\begin{align}
\label{eq:sym-poly}    \max_{\x\geq \bm{0}}\inf_{\mb{y} > \bm{0}} \quad & \sum_{i \in \cA} \log\left(\sum_{j \in \cG} x_{ij} \; v_{ij} \;y_j\right) \tag{LogConcave-Unweighted}\\
    \mathrm{s.t.} \quad &  \sum_{i\in \cA} x_{ij} = 1 \quad \forall j \in \cG\notag\\
      \quad & \prod_{j \in S} y_j \geq 1 \quad \forall S \in \binom{\cG}{n}. \notag
\end{align}

Here, $\binom{\cG}{n}$ denotes the collection of subsets of $\cG$ of size $n$, where $n = |\cA|$.

On the surface, the programs \eqref{eq:sym-poly} and \eqref{eq:cvx-sym}, and their corresponding rounding algorithm are quite different: \cite{cole2017convex} used intuition from economics and market equilibrium to both arrive at \eqref{eq:cvx-sym} and also to round it, while \cite{anari2017nash} uses the properties of log-concave polynomials to round \eqref{eq:cvx-sym}.

However, our next result shows that these two convex programs indeed optimize the same objective. 
\begin{theorem} \label{thm:sym-eq}
    The optimal values of \eqref{eq:sym-poly} and \eqref{eq:cvx-sym} are the same.
\end{theorem}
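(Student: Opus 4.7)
The plan is to derive \eqref{eq:cvx-sym} from \eqref{eq:sym-poly} through a Gibbs-type variational identity for $\log$, a minimax swap, and closed-form elimination of the auxiliary variables. First, substituting $y_j = e^{z_j}$ converts $\prod_{j\in S} y_j \geq 1$ into $\sum_{j\in S} z_j \geq 0$, so the inner infimum is over the convex cone $K := \{z \in \RR^m : \sum_{j \in S} z_j \geq 0 \; \forall S \in \binom{\cG}{n}\}$. Then the identity
$$\log \sum_j a_j \;=\; \sup\Big\{\sum_j p_j \log a_j - \sum_j p_j \log p_j \,:\, p \geq 0,\; \textstyle\sum_j p_j = 1\Big\}$$
applied to each term $\log \sum_j x_{ij} v_{ij} e^{z_j}$ rewrites the objective of \eqref{eq:sym-poly} as
$$\max_{\substack{x\geq 0 \\ \sum_i x_{ij}=1}} \;\inf_{z \in K}\; \sup_{p} \Big[\sum_{i,j} p_{ij} \log(x_{ij} v_{ij}) + \sum_j \mu_j(p)\, z_j - \sum_{i,j} p_{ij} \log p_{ij}\Big],$$
where $\mu_j(p) := \sum_i p_{ij}$ and $p$ ranges over $\{p \geq 0 : \sum_j p_{ij} = 1 \;\forall i\}$.

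With $x$ fixed, I apply Sion's minimax theorem to swap $\inf_z$ and $\sup_p$: the integrand is linear (hence quasi-convex) in $z$ and strictly concave in $p$, and $p$ lies in a compact convex set. The inner infimum then collapses to $\inf_{z \in K} \sum_j \mu_j(p)\, z_j$, which equals $0$ when $\mu(p)$ belongs to the dual cone $K^\ast$ and $-\infty$ otherwise. Since $K^\ast$ is the conic hull of the indicator vectors $\{\mathbf{1}_S : |S|=n\}$ and $\sum_j \mu_j(p) = n$ identically, membership $\mu(p) \in K^\ast$ simplifies to the capacity constraints $\sum_i p_{ij} \leq 1$ for every $j$.

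What remains is $\max_{x,p}\; \sum_{i,j} p_{ij} \log(x_{ij} v_{ij}) - \sum_{i,j} p_{ij} \log p_{ij}$ subject to $\sum_i x_{ij}=1$, $\sum_j p_{ij}=1$, $\sum_i p_{ij}\leq 1$, $x,p\geq 0$. For fixed $p$ the maximization over $x$ decouples across items and the KKT conditions give $x_{ij} = p_{ij}/\mu_j(p)$; back-substitution causes $-\sum p_{ij} \log p_{ij}$ to telescope against $\sum p_{ij} \log x_{ij}$ up to a residual $-\sum_j \mu_j(p) \log \mu_j(p)$, leaving the objective $\sum_{i,j} p_{ij} \log v_{ij} - \sum_j \mu_j(p) \log \mu_j(p)$. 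Under the identification $b = p$ this is exactly \eqref{eq:cvx-sym}, and the constraint sets coincide, proving the two programs have the same optimum.

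The main obstacle is justifying the minimax swap, since $K$ is an unbounded cone rather than a compact set. However, because the integrand is linear in $z$, Sion's theorem applies with the compact $p$-side; one can also bypass it entirely by noting directly that $\inf_{z \in K} \langle \mu, z\rangle$ is $0$ or $-\infty$ according to whether $\mu \in K^\ast$, and imposing this as an explicit constraint. Degeneracies such as $\mu_j(p) = 0$ (where $x_{\cdot j}$ becomes indeterminate) are absorbed by the convention $0 \log 0 = 0$.
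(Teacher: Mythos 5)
Your argument is correct and reaches the right program, but it takes a genuinely different route from the paper. The paper's proof (Appendix B.1) proceeds through a chain of three dualizations: it first invokes Lemma~4.3 of \cite{anari2017nash} together with Sion's theorem to replace the constraint $\prod_{j\in S}y_j\ge 1$ by the penalty $\sum_j\max(0,\delta_j)$, arriving at an Eisenberg--Gale-type primal; it then takes the Lagrangian dual of the inner market program to obtain a Shmyrev-style dual in price variables $r_j,\gamma_i,\delta_j$; and finally it dualizes once more, with the $b_{ij}$ of \eqref{eq:cvx-sym} emerging as the multipliers of the constraints $r_j+\gamma_i+\delta_j\ge\log v_{ij}$. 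You instead introduce the $b$-variables at the outset via the entropy (Gibbs) variational formula $\log\sum_j a_j=\sup_{p\in\Delta}\{\sum_j p_j\log a_j-\sum_j p_j\log p_j\}$, use Sion on the compact $p$-side against the cone $K=\{z:\sum_{j\in S}z_j\ge 0\}$, and replace the appeal to Lemma~4.3 by an explicit computation of $K^{\ast}$ as the cone over $\{\mathbf{1}_S:|S|=n\}$, whose slice at $\sum_j\mu_j=n$ is the hypersimplex $\{0\le\mu\le 1\}$ --- which is exactly where the item constraints $\sum_i b_{ij}\le 1$ come from. The closed-form elimination $x_{ij}=p_{ij}/\mu_j(p)$ then does in one line what the paper's second and third dualizations do with KKT conditions and Slater verifications. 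Your version is more self-contained (no black-box log-concavity lemma, no intermediate price programs) and makes the identification $b=p$ transparent; the paper's version has the advantage of producing the intermediate duals \eqref{cp:p1} and \eqref{cp:fSR-dual} explicitly, which it reuses to connect to the market/pricing interpretation of \cite{cole2017convex} and to motivate the weighted generalization in Appendix~B.2. The only points worth stating carefully in a polished write-up are the semicontinuity hypotheses of Sion when some $x_{ij}v_{ij}=0$ (where the integrand takes the value $-\infty$ for $p_{ij}>0$) and the degenerate items with $\mu_j(p)=0$, both of which you already flag and which are handled by the $0\log 0=0$ convention and Assumption~\ref{assumption:1}.
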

The proof of this theorem, presented in Appendix \ref{app:sym}, relies on a series of transformations using convex duality.

Besides providing a novel connection between two very different approaches to the unweighted problem, Theorem \ref{thm:sym-eq} is also vital to derive our main algorithm for weighted Nash Social Welfare. Independently generalizing either of these approaches to the weighted case is challenging: \cite{cole2017convex,cole2015approximating} use intuition from economics to arrive at \eqref{eq:cvx-sym}, and these concepts do not generalize to the weighted case. On the other hand, there is a natural convex generalization of \eqref{eq:sym-poly} for the weighted case\footnote{We present this generalization in Appendix \ref{sec:asymmetric}}, it is not log-concave, and therefore the machinery introduced in cannot be used to analyze it.

Our approach leverages the connection between \cite{cole2017convex} and \cite{anari2017nash} stated in Theorem \ref{thm:sym-eq} to derive a more natural convex relaxation of weighted Nash Social Welfare, given by \nameref{eq:b-cvx}. We provide more concrete details on this relationship and how it leads to the two programs for weighted NSW in Appendix \ref{sec:mathematical_programs}.

\subsection{Preliminaries}

\paragraph{KL-Divergence.} For two probability distributions $\mb{p},\mb{q}$ over the same discrete domain $\mathcal{X}$, the KL-divergence between $\mb{p}$ and $\mb{q}$ is defined as 
\begin{equation*}
    \kl{\mb{p}}{\mb{q}}=\sum_{x \in \mathcal{X}}p(x)\log \left(\frac{p(x)}{q(x)}\right).
\end{equation*}

It is well-known, via Gibb's inequality, that the KL-divergence between two distributions is non-negative and is zero if and only if $\mb{p}$ and $\mb{q}$ are identical.

We use this fact crucially in the following claim. 
\begin{claim} \label{cor:val}
    Given positive reals $z_1, \ldots, z_d$, for any $y_1, y_2, \ldots, y_d \geq 0$,
\begin{align*}
    \sum_{j=1}^d y_j \log\left(\sum_{j=1}^d z_j\right) -  \sum_{j=1}^d y_j \log\left(\sum_{j=1}^d y_j\right) \geq \sum_{j=1}^d y_j \log z_j - \sum_{j=1}^d y_j \log y_j.
\end{align*}
\end{claim}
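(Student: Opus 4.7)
The plan is to recognize this inequality as (essentially) the log-sum inequality, and to prove it directly from the non-negativity of KL-divergence that was just stated before the claim. First I would rewrite the left-hand side in closed form using $Y := \sum_j y_j$ and $Z := \sum_j z_j$, so that $\text{LHS} = Y\log(Z/Y)$, while the right-hand side is $\sum_j y_j \log(z_j/y_j)$ (with the convention $0 \log 0 = 0$ for any vanishing $y_j$). If $Y=0$ both sides are zero and the claim is trivial, so I would henceforth assume $Y > 0$.

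Next, I would introduce the probability distributions
\begin{equation*}
p_j := \frac{y_j}{Y}, \qquad q_j := \frac{z_j}{Z},
\end{equation*}
over $\{1,\dots,d\}$. Dividing the desired inequality by $Y$ reduces it to
\begin{equation*}
\log\!\left(\frac{Z}{Y}\right) \;\geq\; \sum_{j=1}^d p_j \log\!\left(\frac{z_j}{y_j}\right).
\end{equation*}
Substituting $y_j = p_j Y$ and $z_j = q_j Z$ on the right-hand side and simplifying gives
\begin{equation*}
\sum_{j=1}^d p_j \log\!\left(\frac{z_j}{y_j}\right)
= \log\!\left(\frac{Z}{Y}\right) - \sum_{j=1}^d p_j \log\!\left(\frac{p_j}{q_j}\right)
= \log\!\left(\frac{Z}{Y}\right) - \kl{\mb{p}}{\mb{q}}.
\end{equation*}
The inequality then follows immediately from Gibbs' inequality, $\kl{\mb{p}}{\mb{q}} \geq 0$, which is precisely the property recalled in the paragraph preceding the claim.

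The main (and only) subtlety will be handling degenerate cases: possibly $y_j = 0$ for some indices (where $0 \log 0$ must be interpreted as $0$, so that such terms drop out of both sides and do not affect either probability distribution), and the case $Y = 0$, which I would simply dispatch up front. With these conventions there is no real obstacle: the derivation is a short algebraic rearrangement plus a single appeal to the non-negativity of KL-divergence.
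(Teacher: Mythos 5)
Your proof is correct and follows essentially the same route as the paper's: normalize $\mb{y}$ and $\mb{z}$ into probability distributions and observe that the inequality is exactly the non-negativity of $\kl{\widetilde{\mb{y}}}{\widetilde{\mb{z}}}$. The only difference is that you spell out the algebra and the degenerate cases ($Y=0$, vanishing $y_j$) that the paper leaves implicit.
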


\begin{proof}
Define vectors $\mb{y}=(y_1,\ldots, y_d)$ and $\mb{z}=(z_1,\ldots, z_d)$. Then $\widetilde{\mb{y}}=\frac{\mb{y}}{\|\mb{y}\|_1}$ and $\widetilde{\mb{z}}=\frac{{\mb{z}}}{\|\mb{z}\|_1}$ define two probability distributions on $[d]$. The inequality is equivalent to $\kl{\widetilde{\mb{y}}}{\widetilde{\mb{z}}}\geq 0$.
\end{proof}

Moreover, if $\mb{q}$ is the uniform distribution on $\mathcal{X}$ and $\mb{p}$ is an arbitrary distribution on the same domain, then 
\begin{equation*}
    \kl{\mb{p}}{\mb{q}}=\log |\mathcal{X}| -\sum_{x\in \mathcal{X}} p(x) \log \frac{1}{p(x)}.
\end{equation*}

\paragraph{Feasibility Polytope.} 
Consider a complete bipartite graph $G = (\cG \cup \cA, E)$ where $E$ contains an edge $(i,j)$ for each $i \in \cA$ and $j\in \cG$. Let $\mathcal{M}(\cA)$ denote the set of all matchings in $G$ of size $|\cA|$, i.e., matchings which have an edge incident to every vertex in $\cA$. The convex hull of $\mathcal{M}(\cA)$, denoted by $\mathcal{P}(\cA, \cG)$, is defined by the following polytope.

\begin{definition}[Feasibility Polytope]\label{def:matching-polytope}
For a set of $m$ indivisible items, $\cG$, and a set of $n$ agents, $\cA$, the feasibility polytope, denoted by  $\mathcal{P}(\cA, \cG)$, is defined as
\begin{align*}
   \mathcal{P}(\cA, \cG) := \left\{ \mb{b} \in \mathbb{R}_{\geq 0}^{|\cA| \times |\cG|} \,:\,
     \sum_{j \in \cG} b_{ij} = 1 \; \forall i \in \cA \;,
     \sum_{i \in \cA} b_{ij} \leq 1 \; \forall j \in \cG \right\}.
\end{align*}
The constraint $\sum_{j \in \cG} b_{ij} = 1$ is called the Agent constraint for agent $i$, and the constraint $\sum_{i \in \cA} b_{ij} \leq 1$ is referred to as the Item constraint for item $j$.
\end{definition}

We call $\mathcal{P}(\cA, \cG)$ the feasibility polytope of $(\cA, \cG)$ and will refer to points in $ \mathcal{P}(\cA, \cG)$ as either feasible points or solutions.
In the next section, we use $\mathcal{P}(\cA, \cG)$ to define the feasible regions for both mathematical programs.

\subsection{Technical Overview}\label{sec:techoverview}

\paragraph{Programs for Weighted NSW.} 

We introduce two mathematical programs, \nameref{eq:b-cvx} and \nameref{eq:b-non-cvx} below as relaxations of the weighted Nash Social Welfare objective.
By setting $\mb{b}$ to be the same value as~\eqref{eq:b-assignment}, it is natural to see that both programs are indeed relaxations. 
The first program \nameref{eq:b-cvx} is a convex program for any non-negative weights $\mb{w}$, whereas the second program is not a convex program when the weights are not identical.

\begin{figure}[h]
  \noindent\begin{minipage}[t]{.5\linewidth}
\begin{framed}
    \begin{align}
 \max_{\mb{b}}  \quad &  f_{\cvx}(\mb{b}) := \sum_{i \in \cA} \sum_{j \in \cG} w_i \, b_{ij} \log{v_{ij}}\notag\\
  &- \sum_{j \in \cG} \sum_{i\in \cA} w_i\, b_{ij} \log\left( \sum_{i\in \cA} w_i\, b_{ij}\right) \notag\\
  &+ \sum_{i\in \cA} w_i \log{w_i} \notag\\
    \mathrm{s.t.} \quad &  \sum_{j\in \cG} b_{ij} = 1 \quad \forall i \in \cA\notag\\
   \quad &  \sum_{i\in \cA} b_{ij} \leq 1 \quad \forall j \in \cG \notag\\
   \quad & b_{ij} \geq 0 \quad \forall (i,j) \in \cA\times \cG\notag
\end{align}
\end{framed}
\captionsetup{labelformat=empty}
\caption{\textup{(CVX-Weighted)}}
\label{eq:b-cvx}
\end{minipage}\ignorespaces
\noindent\begin{minipage}[t]{.5\linewidth}
\begin{framed}
    \begin{align}
   \max_{\mb{b}}  \quad &f_{\ncvx}(\mb{b}) := \sum_{i \in \cA} \sum_{j \in \cG} w_i \, b_{ij} \log{v_{ij}} \notag\\
    &- \sum_{j \in \cG} \sum_{i\in \cA} w_i\, b_{ij} \log\left( \sum_{i\in \cA}  b_{ij}\right) \notag\\
    & \phantom{+ \sum_{i\in \cA} w_i \log\left(w_i\right)}\notag\\
    \mathrm{s.t.} \quad &  \sum_{j\in \cG} b_{ij} = 1 \quad \forall i \in \cA\notag\\
   \quad &  \sum_{i\in \cA} b_{ij} \leq 1 \quad \forall j \in \cG \notag\\
   \quad & b_{ij} \geq 0 \quad \forall (i,j) \in \cA\times \cG\notag
\end{align}
\end{framed}
\captionsetup{labelformat=empty}
 \caption{\textup{(NCVX-Weighted)}}
 \label{eq:b-non-cvx}
\end{minipage}%
\end{figure}

\begin{lemma}\label{lem:relaxation}
    \nameref{eq:b-cvx} and \nameref{eq:b-non-cvx} are relaxations of the weighted Nash Social Welfare problem. Moreover, when the weights are symmetric, i.e., $w_i = 1/n$ for all $i \in \cA$, the programs \nameref{eq:b-cvx} and~\nameref{eq:b-non-cvx} are equivalent to the convex program \eqref{eq:cvx-sym}.
\end{lemma}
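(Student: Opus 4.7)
The plan is to exhibit, for any assignment $\sigma: \cG \to \cA$, a single feasible point $\mb{b}^\sigma \in \mathcal{P}(\cA,\cG)$ at which the objectives of both~\nameref{eq:b-cvx} and~\nameref{eq:b-non-cvx} equal $\nsw(\sigma)$; specializing $\sigma$ to a welfare-optimal assignment then gives the relaxation claim at one stroke. I would take $\mb{b}^\sigma$ to be exactly the canonical point from~\eqref{eq:b-assignment}, namely $b_{ij}^\sigma = v_{ij}/V_i$ for $j \in \sigma^{-1}(i)$ and $0$ otherwise, where $V_i := \sum_{k\in \sigma^{-1}(i)} v_{ik}$. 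Feasibility in $\mathcal{P}(\cA,\cG)$ is immediate: each agent-row sums to $1$ by construction, and since for each item $j$ only the entry in row $\sigma(j)$ is nonzero, each item-column sums to at most $1$.

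The central computational observation is that at $\mb{b}^\sigma$ both inner sums appearing inside the logarithms collapse to a single term: $\sum_i b_{ij}^\sigma = b_{\sigma(j),j}^\sigma$ and $\sum_i w_i b_{ij}^\sigma = w_{\sigma(j)}\, b_{\sigma(j),j}^\sigma$. Substituting into $f_{\ncvx}$, expanding $\log b_{\sigma(j),j}^\sigma = \log v_{\sigma(j),j} - \log V_{\sigma(j)}$, regrouping the $j$-sum by preimages $\sigma^{-1}(i)$, and using $\sum_{j\in \sigma^{-1}(i)} v_{ij}/V_i = 1$, the $\log v_{ij}$ contributions from the first summand of $f_{\ncvx}$ cancel exactly against those produced by the entropy term, leaving $f_{\ncvx}(\mb{b}^\sigma) = \sum_i w_i \log V_i = \nsw(\sigma)$. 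The analogous expansion for $f_{\cvx}$ produces an additional $\sum_{i} w_i \log w_i$ contribution from splitting $\log(w_{\sigma(j)} b_{\sigma(j),j}^\sigma)$ into $\log w_{\sigma(j)} + \log b_{\sigma(j),j}^\sigma$; this is precisely cancelled by the additive $+\sum_i w_i \log w_i$ in the objective, again giving $f_{\cvx}(\mb{b}^\sigma) = \nsw(\sigma)$. Choosing $\sigma$ to be NSW-optimal then proves both programs are relaxations.

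For the symmetric specialization $w_i = 1/n$, I would just substitute and simplify. In $f_{\ncvx}$ the $1/n$ factor pulls out globally and the expression becomes $\tfrac{1}{n}$ times the~\eqref{eq:cvx-sym} objective on the same polytope. In $f_{\cvx}$, the entropy piece is $\sum_j (\tfrac{1}{n}\sum_i b_{ij})\log(\tfrac{1}{n}\sum_i b_{ij})$; splitting the inner logarithm produces a $-\log n$ per $j$ that, after invoking the agent-row identity $\sum_i\sum_j b_{ij}=n$, contributes exactly $+\log n$ and cancels the additive $\sum_i w_i \log w_i = -\log n$. What remains is once more $\tfrac{1}{n}$ times the~\eqref{eq:cvx-sym} objective, so both weighted programs coincide with the unweighted relaxation up to the inessential overall scalar $1/n$ (which leaves optimizers and feasibility unchanged).

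I do not anticipate a real obstacle here: the argument is entirely bookkeeping with logarithms. The one mildly delicate point is to keep the ``weighted column sum'' $\sum_i w_i b_{ij}$ appearing inside $f_{\cvx}$ carefully distinguished from the ``unweighted column sum'' $\sum_i b_{ij}$ appearing inside $f_{\ncvx}$, since it is precisely this distinction that will later produce the $\kl{\mb{w}}{\mb{u}}$ gap between the two relaxations driving Theorem~\ref{thm:convex-main}.
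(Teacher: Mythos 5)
Your proposal is correct and takes essentially the same route as the paper's proof: both evaluate $f_{\cvx}$ and $f_{\ncvx}$ at the canonical point~\eqref{eq:b-assignment} induced by an assignment, obtain $\nsw(\sigma)$ through the same logarithmic cancellations (including the cancellation of the additive $\sum_{i} w_i \log w_i$ term in $f_{\cvx}$), and then substitute $w_i = 1/n$ to recover $\tfrac{1}{n}$ times the objective of \eqref{eq:cvx-sym}. There are no gaps.
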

We formally prove Lemma \ref{lem:relaxation} in Appendix \ref{sec:proofs}.

Note that the constraints for both \nameref{eq:b-cvx} and \nameref{eq:b-non-cvx} are identical to the feasibility polytope $\mathcal{P}(\cA, \cG)$.

While analogous to \eqref{eq:cvx-sym}, \nameref{eq:b-cvx} does not inherit a crucial property of \eqref{eq:cvx-sym}, making \nameref{eq:b-cvx} challenging to round: optimal solutions of \nameref{eq:b-cvx} need not be acyclic. Furthermore, the integrality gap of \nameref{eq:b-cvx} is non-trivial even in the case when there are exactly $n$ items. To circumvent these issues, we use \nameref{eq:b-non-cvx} as an intermediate step in our rounding algorithm which has a desirable property: given a  point $\mb{b} \in \mathcal{P}(\cA, \cG)$, one can efficiently find another point $\widetilde{\mb{b}} \in \mathcal{P}(\cA, \cG)$ without decreasing the objective $f_{\ncvx}$ such that the graph formed by support of $\widetilde{\mb{b}}$ is a forest, as stated in the following lemma. We formally define the support graphs in Definition \ref{def:supp-graph}.

\paragraph{The two relaxations.} We observe that the objective of~\nameref{eq:b-cvx} is a concave function, and thus, it is a polynomial time tractable convex program. However, the objective of \nameref{eq:b-non-cvx} is not necessarily concave when the weights $w_i$ are not uniform. Despite this, \nameref{eq:b-non-cvx} still satisfies many desirable properties:

\begin{restatable}{lemma}{forest}
    \label{lem:forest}
Let $\overline{\mb{b}}$ be any feasible point in $\mathcal{P}(\cA, \cG)$. Then there exists an acyclic solution, $\mb{b}^{\mathrm{forest}}$, in the support of $\overline{\mb{b}}$ such that   \begin{equation*}
       f_{\ncvx}(\mb{b}^{\mathrm{forest}}) \geq f_{\ncvx}(\overline{\mb{b}}).
    \end{equation*}
Moreover, such a solution can be found in time polynomial in $|\cA|$ and $|\cG|$.
\end{restatable}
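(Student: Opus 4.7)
The plan is to iteratively remove edges from the support graph $G(\overline{\mb{b}}) := \{(i,j) : \overline{b}_{ij} > 0\}$ by cycle-cancellation, where each cancellation weakly improves $f_{\ncvx}$ and strictly reduces the support. First I would look for a cycle $C$ in $G(\overline{\mb{b}})$ via depth-first search; if none exists, we are already done. Since $G(\overline{\mb{b}})$ is bipartite, $C$ has even length, and its edges split into two alternating classes $E^+$ and $E^-$. Define $\delta \in \{-1, 0, +1\}^{\cA \times \cG}$ by $\delta_{ij} = +1$ on $E^+$, $\delta_{ij} = -1$ on $E^-$, and $0$ elsewhere. Because every vertex in $C$ meets exactly one $E^+$-edge and one $E^-$-edge, one verifies $\sum_j \delta_{ij} = 0$ for all $i$ and $\sum_i \delta_{ij} = 0$ for all $j$, so the perturbation $\overline{\mb{b}} + t\delta$ satisfies the agent and item constraints for every $t$.

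Next I would compute the change in $f_{\ncvx}$ along this direction. The crucial observation is that $B_j := \sum_i \overline{b}_{ij}$ is invariant under the perturbation, so the potentially troublesome $-\sum_j \bigl(\sum_i w_i b_{ij}\bigr) \log\bigl(\sum_i b_{ij}\bigr)$ term contributes only $-t \sum_j \log B_j \cdot \sum_i w_i \delta_{ij}$. Hence the full change
\begin{equation*}
    f_{\ncvx}(\overline{\mb{b}} + t\delta) - f_{\ncvx}(\overline{\mb{b}}) = t \sum_{(i,j)\in C} w_i\, \delta_{ij} \bigl(\log v_{ij} - \log B_j\bigr) =: t\cdot L
\end{equation*}
is \emph{linear} in $t$. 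The non-negativity constraints confine $t$ to $[-t^-, t^+]$, with $t^+ = \min\{\overline{b}_{ij} : (i,j) \in E^-\}$ and $t^- = \min\{\overline{b}_{ij} : (i,j) \in E^+\}$, both strictly positive because every cycle edge lies in the support. I would pick $t = t^+$ if $L \geq 0$ and $t = -t^-$ otherwise: this keeps $\overline{\mb{b}}$ feasible, does not decrease $f_{\ncvx}$, and zeros out at least one coordinate. Iterating at most $|\cA|\cdot|\cG|$ times drives $G$ to a forest, and each step is polynomial (cycle detection plus elementary arithmetic). The edge case $v_{ij} = 0$ with $\overline{b}_{ij} > 0$ forces $f_{\ncvx}(\overline{\mb{b}}) = -\infty$ and is vacuous, so throughout we may assume $v_{ij} > 0$ on the support.

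The hard part will be recognizing that the non-convexity of $f_{\ncvx}$ does not obstruct the support-reduction argument. Unlike in the symmetric case, where $f_{\ncvx}$ is concave and a generic extreme-point argument on $\mathcal{P}(\cA, \cG)$ would suffice, here one must exhibit a specific direction of improvement. The cycle perturbation is engineered precisely so that the $\log \sum_i b_{ij}$ factor is frozen, collapsing the objective change to a single linear functional in $t$; after that, the proof follows the classical support-reduction pattern on the transportation polytope.
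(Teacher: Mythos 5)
Your proposal is correct and follows essentially the same route as the paper's proof: cancel a cycle in the support graph with a $\pm 1$ alternating perturbation that leaves every column sum $q_j=\sum_i b_{ij}$ unchanged, observe that $f_{\ncvx}$ is then linear in the step size, and move in whichever sign direction is non-decreasing until a support coordinate hits zero. The only differences are cosmetic (you state the step size as a $\min$ over the relevant edge class, which is the intended reading of the paper's formula, and you explicitly note the vacuous $v_{ij}=0$ edge case).
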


Next, we establish that one can efficiently round any feasible point whose support graph is a forest to an integral assignment. 

\begin{restatable}{theorem}{nonconvexmain}\label{thm:non-convex-main}
    For a Nash Social Welfare instance $(\cA, \cG, \mb{v}, \mb{w})$, given a vector $\mb{b}\in \mathcal{P}(\cA, \cG)$ such that the support of $\mb{b}$ is a forest, there exists a deterministic polynomial time algorithm (Algorithm~\ref{alg:non-convex-algo}) which returns an assignment $\sigma:\cG\rightarrow \cA$ such that
    \begin{equation*}
         \nsw(\sigma)  \geq f_{\cvx}(\mb{b}) -\kl{\mb{w}}{\mb{u}}- 2\log{2} - \frac{1}{2e} .
    \end{equation*}
\end{restatable}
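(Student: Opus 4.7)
The approach is to route the analysis through the non-convex relaxation, decomposing
\begin{align*}
f_{\cvx}(\mb{b}) - \nsw(\sigma) = \bigl[f_{\cvx}(\mb{b}) - f_{\ncvx}(\mb{b})\bigr] + \bigl[f_{\ncvx}(\mb{b}) - \nsw(\sigma)\bigr],
\end{align*}
and bounding the first bracket (a purely algebraic program gap) by $\kl{\mb{w}}{\mb{u}}$ and the second bracket (the rounding loss incurred by Algorithm~\ref{alg:non-convex-algo}) by $2\log 2 + \frac{1}{2e}$.

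For the program gap, a direct expansion of the two objectives yields
\begin{align*}
f_{\cvx}(\mb{b}) - f_{\ncvx}(\mb{b}) = \sum_{j \in \cG} t_j \log\!\frac{s_j}{t_j} \;+\; \sum_{i \in \cA} w_i \log w_i,
\end{align*}
where $s_j := \sum_i b_{ij}$ and $t_j := \sum_i w_i b_{ij}$. Since $\sum_j s_j = \sum_i \sum_j b_{ij} = n$ and $\sum_j t_j = \sum_i w_i = 1$, the log-sum inequality gives $\sum_j t_j \log(t_j/s_j) \geq 1 \cdot \log(1/n) = -\log n$, hence $\sum_j t_j \log(s_j/t_j) \leq \log n$. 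Combined with the identity $\sum_i w_i \log w_i = \kl{\mb{w}}{\mb{u}} - \log n$, the program gap is at most $\kl{\mb{w}}{\mb{u}}$ uniformly over all feasible $\mb{b}$ (the forest hypothesis is not needed for this step).

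For the rounding gap, Algorithm~\ref{alg:non-convex-algo} exploits the forest structure. Since every agent satisfies $\sum_j b_{ij} = 1$, every tree in the support contains at least one agent; root each tree at such an agent, so that every non-root item has a unique parent agent while every non-leaf agent has a set of child items. Following the spirit of the Cole--Gkatzelis rounding for the symmetric case, assign each item either to its parent agent or to one of its child agents via a local per-tree decision. The per-agent analysis compares $w_i \log V_i$ (where $V_i := \sum_{j \in \sigma^{-1}(i)} v_{ij}$) to the contribution of agent $i$ to $f_{\ncvx}(\mb{b})$, invoking Claim~\ref{cor:val} to pull $\log$ outside the sum and concavity to control the integrality slack. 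The constant $2\log 2$ arises from a two-way charging (parent item versus best child item) when an agent's preferred child is claimed by a neighboring subtree, and $\frac{1}{2e}$ equals $\max_{u \in (0,1]} (-u\log u)$, surfacing in a degenerate single-item transfer case where a near-leaf agent yields up a fractionally-owned item.

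The principal obstacle is the rounding step. In the asymmetric setting, swapping an item from agent $i$ to agent $i'$ changes $\nsw$ by $w_{i'}\log v_{i'j} - w_i \log v_{ij}$, a weight-dependent and asymmetric charge that must be tracked uniformly across all trees and all edges. The forest hypothesis is what makes this tractable: it removes cycles and reduces the global rounding to a collection of independent local decisions along each tree, with the aggregate loss controlled tree by tree and charged back against $f_{\ncvx}(\mb{b})$ rather than $f_{\cvx}(\mb{b})$; paying the program gap $\kl{\mb{w}}{\mb{u}}$ separately via the clean log-sum argument above is what lets the rounding analysis avoid any weight-dependent loss.
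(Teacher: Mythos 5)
Your treatment of the program gap is fine: the identity
\begin{equation*}
f_{\cvx}(\mb{b}) - f_{\ncvx}(\mb{b}) = \sum_{j \in \cG} t_j \log\frac{s_j}{t_j} + \sum_{i \in \cA} w_i \log w_i, \qquad s_j = \sum_i b_{ij},\ t_j = \sum_i w_i b_{ij},
\end{equation*}
together with $\sum_j s_j = n$, $\sum_j t_j = 1$ and the log-sum inequality, is exactly the upper bound of Lemma~\ref{lem:diff}, and you are right that the forest hypothesis plays no role there. The problem is the second bracket. You assert that Algorithm~\ref{alg:non-convex-algo} satisfies $f_{\ncvx}(\mb{b}) - \nsw(\sigma) \leq 2\log 2 + \frac{1}{2e}$, i.e., that the rounding loses only an absolute constant relative to $f_{\ncvx}$ \emph{at the input point} $\mb{b}$, with the entire KL term paid separately up front. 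The paper does not prove this, and your sketch does not either. The actual analysis cannot avoid re-entangling the two losses: Algorithm~\ref{alg:non-convex-algo} first re-solves \nameref{eq:b-cvx} restricted to the support of $\mb{b}$ to obtain $\mb{b}^\star$, and the pruning loss (items with $q_j^\star < 1/2$ becoming leaves) is bounded by $\log 2$ \emph{in $f_{\cvx}$} only because $\mb{b}^\star$ satisfies the KKT conditions of the convex program, via the stability identity of Lemma~\ref{lem:a-change} combined with $b^{\mathrm{pruned}}_{ij} \leq \min\{1, 2b^\star_{ij}\}$. There is no analogous stability statement for $f_{\ncvx}$ (its optimum cannot be computed and its stationary points are not characterized), so there is no mechanism in your plan to control what pruning does to $f_{\ncvx}(\mb{b})$. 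Note also that passing from $\mb{b}$ to $\mb{b}^\star$ is free for $f_{\cvx}$ but can cost up to $\kl{\mb{w}}{\mb{u}}$ for $f_{\ncvx}$; this is precisely why the paper states the guarantee relative to $f_{\cvx}(\mb{b})$ and pays the KL divergence \emph{inside} the rounding, at the pruned point ($f_{\ncvx}(\mb{b}^{\mathrm{pruned}}) \geq f_{\cvx}(\mb{b}^{\mathrm{pruned}}) - \kl{\mb{w}}{\mb{u}}$), rather than at $\mb{b}$ as you propose.

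Beyond this structural issue, the rounding itself is only gestured at. The paper's argument is not a ``local per-tree decision'' with ``two-way charging'': after pruning, leaves are bundled to their parents and the remaining assignment is a \emph{global} maximum-weight bipartite matching under the augmented weight $w_{\widetilde F}$, analyzed by exhibiting a fractional matching in the integral bipartite matching polytope (Lemmas~\ref{lem:frac-1} and~\ref{lem:frac-2}). The constants also arise differently from what you describe: one $\log 2$ comes from the pruning bound above, the other from $-\log q_j \leq \log 2$ for surviving items with $q_j \geq 1/2$, and $\frac{1}{2e}$ from $-x\log x - x\log 2 \leq \frac{1}{2e}$ applied to $x = \sum_{j \in S(i)} b_{ij}$ (not from $\max_u(-u\log u) = \frac{1}{e}$ at a ``degenerate transfer''; note $\max_u(-u\log u)$ is $\frac 1e$, not $\frac{1}{2e}$). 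As written, the proposal establishes only Lemma~\ref{lem:diff}; the heart of the theorem --- the pruning lemma, the stability of the support-restricted convex optimum, and the fractional-matching argument --- is missing.
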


We remark that our algorithm for rounding \nameref{eq:b-non-cvx} (Algorithm \ref{alg:non-convex-algo}) is the same as that in \cite{cole2015approximating}. However, our analysis is quite different. Rather than using ideas from market interpretations of the problem, we utilize properties of \nameref{eq:b-cvx} and \nameref{eq:b-non-cvx}, which generalize to both the unweighted and the weighted versions of the problem.

 Our analysis relies crucially on two facts: the relative stability of optimal points of \nameref{eq:b-cvx} and the interplay between the values of $f_{\cvx}$ and $f_{\ncvx}$. First, we establish that any optimal point of \nameref{eq:b-cvx} is relatively stable; the difference between the objective values of an optimal solution and any feasible solution is independent of the valuations $\mb{v}$ and, therefore, can be bounded effectively. Second, we show that for any feasible solution, the difference between $f_{\cvx}$ and $f_{\ncvx}$ is, at most, the KL-Divergence between the weights and the constant vector.

Our analysis uses this stability property along with the structure of the feasibility polytope to iteratively sparsify an optimal solution and obtain a matching between the agents and bundles of items while only losing a constant factor in the objective. It is worth noting that the first term in the objective $f_{\ncvx}$ (and $f_{\cvx}$) is linear in the variable $\mb{b}$. As the constraint set on $\mb{b}$ is a matching polytope, the solution that optimizes a linear objective would be a matching in which all agents receive exactly one item. While such a matching would be very suboptimal compared to $\opt$, our algorithm constructs an augmented graph containing a matching with a value comparable to $\opt$. The crux of our algorithm is to find a feasible vector in the matching polytope for which $f_{\ncvx}$ is close to $\opt$, and the additional non-linear terms in  $f_{\ncvx}$ are relatively small.

The remaining challenge to our approach is that \nameref{eq:b-non-cvx} is not a convex program, and therefore, we cannot efficiently find a global optima that maximizes $f_{\ncvx}$. However, we show that the objective of \nameref{eq:b-non-cvx} and \nameref{eq:b-cvx} differ by at most the $\kl{\mb{w}}{\mb{u}}$, as stated in the following lemma. We leverage this fact to initialize \nameref{eq:b-non-cvx} with the globally optimal solution of \nameref{eq:b-cvx} to obtain the approximation guarantee.
\begin{lemma} \label{lem:diff}
    For any $\mb{b} \in \mathcal{P}(\cA, \cG)$ and weights $w_1, \ldots, w_n > 0$ with $\sum_{i\in \cA} w_i = 1$, 
    \begin{equation*}
        0 \leq f_{\cvx}(\mb{b}) - f_{\ncvx}(\mb{b}) \leq  \kl{\mb{w}}{\mb{u}} = \log n -\sum_{i \in \cA} w_i \log\frac{1}{w_i}.
    \end{equation*}
\end{lemma}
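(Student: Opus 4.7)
The plan is to expand the difference $f_{\cvx}(\mb{b}) - f_{\ncvx}(\mb{b})$ algebraically and then bound it from two sides using standard information-theoretic inequalities. Introducing the shorthand $B_j := \sum_{i \in \cA} b_{ij}$ and $W_j := \sum_{i \in \cA} w_i b_{ij}$, the linear term $\sum_{i,j} w_i b_{ij} \log v_{ij}$ cancels between $f_{\cvx}$ and $f_{\ncvx}$, and the remaining terms simplify to
\[
f_{\cvx}(\mb{b}) - f_{\ncvx}(\mb{b}) = \sum_{j \in \cG} W_j \log\!\left(\frac{B_j}{W_j}\right) + \sum_{i \in \cA} w_i \log w_i.
\]
The key observation, which comes directly from the constraints defining $\mathcal{P}(\cA,\cG)$, is that two hidden probability distributions on $\cG$ appear: the family $\{W_j\}_{j \in \cG}$ is a probability distribution because $\sum_j W_j = \sum_i w_i \sum_j b_{ij} = \sum_i w_i = 1$ (Agent constraint plus $\sum_i w_i = 1$), and the family $\{B_j/n\}_{j\in\cG}$ is a probability distribution because $\sum_j B_j = \sum_i\sum_j b_{ij} = n$. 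Both bounds will be driven by these two distributions.

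For the lower bound, the plan is to apply Claim~\ref{cor:val} item by item. For each fixed $j \in \cG$, setting $y_i := w_i b_{ij}$ and $z_i := b_{ij}$ in the claim yields $W_j \log B_j - W_j \log W_j \geq \sum_i w_i b_{ij} \log b_{ij} - \sum_i w_i b_{ij} \log(w_i b_{ij}) = -\sum_i w_i b_{ij} \log w_i$. Summing over $j \in \cG$ and using the Agent constraint $\sum_j b_{ij} = 1$, the right-hand side collapses to $-\sum_i w_i \log w_i$, which exactly cancels the second summand in the expression for the difference and gives $f_{\cvx}(\mb{b}) \geq f_{\ncvx}(\mb{b})$.

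For the upper bound, the plan is to invoke Gibb's inequality for $\mb{q} := \{W_j\}_{j\in\cG}$ and $\mb{p} := \{B_j/n\}_{j\in\cG}$. Expanding $\kl{\mb{q}}{\mb{p}} \geq 0$ gives
\[
0 \leq \sum_{j \in \cG} W_j \log\!\left(\frac{W_j n}{B_j}\right) = \log n - \sum_{j \in \cG} W_j \log\!\left(\frac{B_j}{W_j}\right),
\]
so $\sum_{j \in \cG} W_j \log(B_j/W_j) \leq \log n$. Adding $\sum_i w_i \log w_i$ to both sides recovers $f_{\cvx}(\mb{b}) - f_{\ncvx}(\mb{b}) \leq \log n + \sum_i w_i \log w_i = \kl{\mb{w}}{\mb{u}}$, completing the bound. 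The only step that looks nonroutine is spotting the two hidden distributions $\{W_j\}$ and $\{B_j/n\}$ created by the constraints of the feasibility polytope; once they are identified, both inequalities are one-line invocations of Claim~\ref{cor:val} and of non-negativity of KL-divergence, so I do not anticipate a genuine obstacle.
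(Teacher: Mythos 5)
Your proof is correct and follows essentially the same route as the paper: the paper also identifies the two distributions $\mu(j)=\sum_i w_i b_{ij}$ and $\theta(j)=\frac{1}{n}\sum_i b_{ij}$, writes the difference as $\kl{\mb{w}}{\mb{u}}-\kl{\mu}{\theta}$, and obtains the upper bound from $\kl{\mu}{\theta}\geq 0$ exactly as you do. For the lower bound the paper uses Jensen's inequality on $x\log x$ per item, which is the same inequality as your per-item invocation of Claim~\ref{cor:val} with $y_i=w_ib_{ij}$, $z_i=b_{ij}$.
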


We obtain our main result in Theorem \ref{thm:convex-main} by combining Lemma \ref{lem:forest}, Theorem~\ref{thm:non-convex-main}, and Lemma~\ref{lem:diff}.

\subsection{Related Work}
 The problem of finding the allocation that maximizes the Nash Social Welfare objective is an NP-hard problem, as was proven by \cite{nguyen2014computational}. Additionally, \cite{lee2017apx} showed that finding such an allocation is also APX-hard. 
From an algorithmic perspective, the first constant factor approximation for the unweighted version was provided in ~\cite{cole2015approximating} using analogies from market equilibrium. \cite{cole2017convex} provided an improved analysis of the algorithm from \cite{cole2015approximating} and introduced a convex programming relaxation. Using an entirely different approach, \cite{anari2017nash} also provided a constant factor approximation for the unweighted variant, where their analysis employed the theory of log-concave polynomials. The best-known approximation factor with linear valuations of $1.45$ is due to \cite{barman2018finding}, where they provide a pseudopolynomial-time algorithm that finds an allocation that is envy-free up to one good. Their algorithm is entirely combinatorial and runs in polynomial time when the valuations are bounded.

Another setting of interest is when the valuation of each agent is submodular instead of additive. For instance, \cite{garg2021approximating} gave a constant factor approximation algorithm for maximizing the unweighted Nash Social welfare function when the agents' valuations are Rado, a special subclass of submodular functions. In the weighted case, the approximation factor of this algorithm depends on the ratio of the maximum weight to the minimum weight.
\cite{li2022constant} provided a constant factor approximation algorithm for the unweighted case with submodular valuations. More recently, \cite{garg2023approximating} gave a local search-based algorithm to obtain an $O(nw_{\max})$-approximation for the weighted case and a $4$-approximation for the unweighted case with submodular valuations. 
Note that this $O(nw_{\max})$-approximation factor was also the previously best-known approximation for the weighted case, even when considering additive valuations.

\section{Approximation Algorithm}
Before describing our algorithm, we need the following definitions.
\begin{definition}[Support Graph]\label{def:supp-graph}
    For a vector $\mb{b} \in \mathcal{P}(\cA, \cG)$, the support graph of $\mb{b}$, denoted by $G_{\mathrm{supp}}(\mb{b})$, is a bipartite graph with vertex set $\cA \cup \cG$. For any $i \in \cA$ and $j \in \cG$, the edge $(i, j)$ belongs to the edge set of $G$ if and only if $b_{ij} > 0$.
\end{definition}

\begin{definition}[Acyclic Solution]\label{def:acyclic}
    A vector $\mb{b} \in \mathcal{P}(\cA, \cG)$ is called an acyclic solution if the support graph of $\mb{b}$,  $G_{\mathrm{supp}}(\mb{b})$, does not contain any cycles.
\end{definition}

For ease of notation, given any feasible point $\mb{b}\in \mathcal{P}(\cA,\cG)$, we use vector $\mb{q}\in \RR^{|\cG|}$ to denote the projection of $\mb{b}$ to $\cG$, i.e.,
\begin{equation*}
    q_j := \sum_{i\in \cA} b_{ij}
\end{equation*}
 for each $j\in \cG$. Since $\mb{q}$ is completely defined by $\mb{b}$, with abuse of notation, we will interchangeably use $\mathcal{P}(\cA, \cG)$ to denote feasible vectors $\mb{b}$ as well as $(\mb{b}, \mb{q})$. Similarly, we will use $ f_{\ncvx}(\mb{b},\mb{q})$ and $f_{\cvx}(\mb{b},\mb{q})$ to also denote the objective $f_{\ncvx}(\mb{b})$ and  $f_{\cvx}(\mb{b})$, respectively.
With a slight abuse of notation, we define
\begin{equation*}
    f_{\ncvx}(\mb{b},\mb{q}) :=  \sum_{i\in \cA} \sum_{j \in \cG} w_i \, b_{ij} \log{v_{ij}} - \sum_{i \in \cA} \sum_{j \in \cG} w_i\, b_{ij} \log {q_j}.
\end{equation*}
for any $\mb{b} \in \mathcal{P}(\cA, \cG)$ and its projection $\mb{q} \in \R^{|G|}$.

Our main algorithm, Algorithm \ref{alg:nsw-algo}, begins by finding the optimal solution $\overline{\mb{b}}$ to the convex program~\nameref{eq:b-cvx}. It then constructs another feasible point, $\mb{b}^{\mathrm{forest}}$, in support of $\overline{\mb{b}}$ such that the support graph of $\mb{b}^{\mathrm{forest}}$ is a forest and $f_{\ncvx}$ at $\mb{b}^{\mathrm{forest}}$ is at least $f_{\ncvx}$ at $\overline{\mb{b}}$. In the final step, the algorithm rounds $\mb{b}^{\mathrm{forest}}$ to an integral solution using Algorithm~\ref{alg:non-convex-algo}. Theorem~\ref{thm:non-convex-main} establishes a bound on the rounding error incurred during Algorithm~\ref{alg:non-convex-algo}.

\begin{algorithm2e}[H]
\DontPrintSemicolon
\caption{Approximation Algorithm for Weighted Nash Social Welfare}
\label{alg:nsw-algo}

   \textbf{Input}.  NSW instance $(\cA, \cG, \mb{v}, \mb{w})$\;
   $\overline{\mb{b}} \leftarrow$ optimal solution of~\nameref{eq:b-cvx}\; 
    $\overline{\mb{q}} \leftarrow$ vector in $\R^{|\cG|}$ with $\overline{q}_j = \sum_{i \in \cA} \overline{b}_{ij}$\;
     $(\mb{b}^{\mathrm{forest}}, \mb{q}^{\mathrm{forest}})  \leftarrow$ acyclic solution in support of $\overline{\mb{b}}$ such that $f_{\ncvx}(\mb{b}^{\mathrm{forest}}) \geq f_{\ncvx}(\overline{\mb{b}})$  \;
   $\sigma \leftarrow$ output of Algorithm \ref{alg:non-convex-algo} with input $(\cA, \cG, \mb{v}, \mb{w}, \mb{b}^{\mathrm{forest}}, \mb{q}^{\mathrm{forest}})$\;
      \textbf{Output.} $\sigma$
      
\end{algorithm2e}

Lemma \ref{lem:forest}, which we re-state below for the reader's convenience, guarantees the existence of $\mb{b}^{\mathrm{forest}}$, ensuring that the algorithm is well-defined. 
It is worth mentioning that for the unweighted case, the existence of an acyclic optimum was utilized by ~\cite{cole2015approximating, cole2017convex} for the convex program \eqref{eq:cvx-sym}. In the weighted setting, this structural property is not inherited by the convex program \nameref{eq:b-cvx} but by the non-convex program \nameref{eq:b-non-cvx}.

\forest*

\begin{proof}
Let $G_{\mathrm{supp}}(\mb{\bar{b}})$ contain a cycle $(i_0, j_0, i_1, \ldots, j_{\ell-1}, i_\ell)$ with $i_0 = i_\ell$, where $i_x \in \cA$ and $j_y \in \cG$. 
 The main idea is to modify the variables $\mb{\bar{b}}$ on this cycle while ensuring the value of $\mb{\bar{q}}$ does not change. If $\mb{\bar{q}}$ is fixed, then $f_{\ncvx}(\cdot,\mb{\bar{q}} )$  is linear in the input, and as a result, we can \emph{cancel} the cycle by considering the following vector.
Define $\boldsymbol{\delta} \in \R^{|\cA| \times |\cG|}$ with $\delta_{i_x j_x} := 1$ and $\delta_{i_{x+1} j_x} := -1$ for $x \in \{0, \ldots, \ell-1\}$, and $\delta_{ij} := 0$ otherwise.

Note that $\sum_{i \in \cA} \delta_{ij} = 0$ for any item $j$. As a result,  for each $j\in \cG$,
\begin{equation*}
    \sum_{i \in \cA} \bar{b}_{ij} + \varepsilon \delta_{ij} =  \sum_{i \in \cA} \bar{b}_{ij} = \bar{q}_j.
\end{equation*}
Therefore, the change in $f_{\ncvx}$ is given by
\begin{align*}
    f_{\ncvx}(\mb{\bar{b}}  + \varepsilon \boldsymbol{\delta}, \mb{\bar{q}} ) - f_{\ncvx}(\mb{\bar{b}} , \mb{\bar{q}} ) &= \sum_{i\in \cA} \sum_{j \in \cG} \varepsilon\, w_i\,\delta_{ij} \log v_{ij} - \sum_{i\in \cA} \sum_{j \in \cG} \varepsilon\, w_i\,  \delta_{ij} \log{\bar{q}_j} := \varepsilon\; h(\boldsymbol{\delta}, \mb{\bar{q}}).
\end{align*}
Note that $h(\boldsymbol{\delta}, \mb{\bar{q}})$ is a linear function in $\boldsymbol{\delta}$.
So, if $h(\boldsymbol{\delta}, \mb{\bar{q}}) > 0$, then setting $\varepsilon = \max_{x} b_{i_{x+1} j_x}$ ensures that $ f_{\ncvx}(\mb{\bar{b}}  + \varepsilon \boldsymbol{\delta}, \mb{\bar{q}}) \geq f_{\ncvx}(\mb{\bar{b}}, \mb{\bar{q}}) $, and $\mb{\bar{b}}  + \varepsilon \boldsymbol{\delta} \in \mathcal{P}(\cA, \cG)$. In addition, the number of cycles in $G_{\mathrm{supp}}(\mb{\bar{b}}  + \varepsilon \boldsymbol{\delta})$ is strictly less than the number of cycles in $G_{\mathrm{supp}}(\mb{\bar{b}} )$.

Similarly, if $h(\boldsymbol{\delta}, \mb{\bar{q}}) \leq 0$, setting $\varepsilon =- \max_{x} b_{i_{x} j_x}$ gives the same guarantees. Iterating this cycle canceling process until the support contains no cycles leads to the required solution.
\end{proof}

By combining Lemma \ref{lem:forest} with Lemma \ref{lem:diff}, we obtain the following corollary.
\begin{corollary}\label{cor:acyclic}
Let $\overline{\mb{b}}$ be any feasible point in $\mathcal{P}(\cA, \cG)$. Then, there exists an acyclic solution, $\mb{b}^{\mathrm{forest}}$, in the support of $\overline{\mb{b}}$ such that   
    \begin{equation*}
       f_{\cvx}(\mb{b}^{\mathrm{forest}}) \geq f_{\cvx}(\overline{\mb{b}})- \kl{\mb{w}}{\mb{u}}.
    \end{equation*}
Moreover, such a $\mb{b}^{\mathrm{forest}}$ can be found in time polynomial in $|\cA|$ and $|\cG|$.
\end{corollary}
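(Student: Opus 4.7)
\textbf{Proof plan for Corollary \ref{cor:acyclic}.}
The corollary is essentially a direct chaining of Lemma \ref{lem:forest} and Lemma \ref{lem:diff}, so my plan is to apply these two results in sequence and keep track of which inequality I need from Lemma \ref{lem:diff} at each endpoint. First, I would invoke Lemma \ref{lem:forest} on the given feasible point $\overline{\mb{b}}$ to produce an acyclic point $\mb{b}^{\mathrm{forest}}$ in the support of $\overline{\mb{b}}$ satisfying $f_{\ncvx}(\mb{b}^{\mathrm{forest}}) \geq f_{\ncvx}(\overline{\mb{b}})$. This already gives both the existence and the polynomial-time construction claimed in the corollary, so it remains only to transfer this non-convex inequality into a convex-objective inequality.

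Next, I would use Lemma \ref{lem:diff} at each of the two points, but with different sides of the inequality. At $\mb{b}^{\mathrm{forest}}$ I would use the lower bound $f_{\cvx}(\mb{b}^{\mathrm{forest}}) \geq f_{\ncvx}(\mb{b}^{\mathrm{forest}})$, obtained from $f_{\cvx}-f_{\ncvx} \geq 0$. At $\overline{\mb{b}}$ I would use the upper bound $f_{\cvx}(\overline{\mb{b}}) - f_{\ncvx}(\overline{\mb{b}}) \leq \kl{\mb{w}}{\mb{u}}$, which rearranges to $f_{\ncvx}(\overline{\mb{b}}) \geq f_{\cvx}(\overline{\mb{b}}) - \kl{\mb{w}}{\mb{u}}$. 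Chaining these three facts yields
\[
f_{\cvx}(\mb{b}^{\mathrm{forest}}) \;\geq\; f_{\ncvx}(\mb{b}^{\mathrm{forest}}) \;\geq\; f_{\ncvx}(\overline{\mb{b}}) \;\geq\; f_{\cvx}(\overline{\mb{b}}) - \kl{\mb{w}}{\mb{u}},
\]
which is exactly the stated bound.

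There is no real obstacle here; the only subtlety worth emphasising in the write-up is that the KL-term appears precisely once, because we spend the $\kl{\mb{w}}{\mb{u}}$ slack only at $\overline{\mb{b}}$ (converting $f_{\cvx}(\overline{\mb{b}})$ down to $f_{\ncvx}(\overline{\mb{b}})$) and use the free direction $f_{\ncvx}\leq f_{\cvx}$ at $\mb{b}^{\mathrm{forest}}$. The polynomial running-time claim is inherited verbatim from the cycle-cancelling procedure in the proof of Lemma \ref{lem:forest}.
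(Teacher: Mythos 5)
Your proof is correct and is exactly the argument the paper intends: the corollary is obtained by chaining $f_{\cvx}(\mb{b}^{\mathrm{forest}}) \geq f_{\ncvx}(\mb{b}^{\mathrm{forest}}) \geq f_{\ncvx}(\overline{\mb{b}}) \geq f_{\cvx}(\overline{\mb{b}}) - \kl{\mb{w}}{\mb{u}}$, using Lemma~\ref{lem:forest} for the middle inequality and the two sides of Lemma~\ref{lem:diff} at the two endpoints. Nothing is missing.
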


Before presenting Algorithm \ref{alg:non-convex-algo}, we give the proof of Theorem~\ref{thm:convex-main}, which now follows directly from Theorem~\ref{thm:non-convex-main} and Corollary~\ref{cor:acyclic}, as outlined below.

\begin{proof}[Proof of Theorem~\ref{thm:convex-main}] 
Let $(\overline{\mb{b}}, \overline{\mb{q}})$ and $(\mb{b}^\mathrm{forest}, \mb{q}^\mathrm{forest})$ denote the feasible points defined in Step 1 and Step 3 of Algorithm~\ref{alg:nsw-algo}, respectively. Let $\sigma^{\star}$ be the assignment returned by 
Algorithm~\ref{alg:non-convex-algo} on input $(\mb{b}^{\mathrm{forest}}, \mb{q}^{\mathrm{forest}})$. By Theorem~\ref{thm:non-convex-main}, we have
\begin{align*}
    \nsw(\sigma^\star) &\geq  f_{\cvx}(\mb{b}^{\mathrm{forest}}, \mb{q}^{\mathrm{forest}}) - \kl{\mb{w}}{\mb{u}} - 2\log{2}-\frac{1}{2e}\\
    &\stackrel{(i)}{\geq} f_{\cvx}(\mb{\overline{b}}, \mb{\overline{q}}) - 2\cdot \kl{\mb{w}}{\mb{u}} - 2\log{2}-\frac{1}{2e}\\
     &\stackrel{(ii)}{\geq} \mathrm{OPT} - 2\cdot \kl{\mb{w}}{\mb{u}} - 2\log{2}-\frac{1}{2e}.
\end{align*}
Here, $(i)$ follows from Corollary~\ref{cor:acyclic} and $(ii)$ follows from Lemma~\ref{lem:relaxation}. 
\end{proof}

\subsection{Rounding an Acyclic Solution}
Given an acyclic solution $\mb{b}$, Algorithm \ref{alg:non-convex-algo} returns an assignment, $\sigma^\star$, such that $\nsw(\sigma^\star)$ is comparable to $f_{\cvx}(\mb{b})$, as stated in Theorem \ref{thm:non-convex-main}.

\begin{algorithm2e}
\DontPrintSemicolon
\caption{Algorithm for Rounding an Acyclic Solution}
\label{alg:non-convex-algo}
     \textbf{Input}. NSW instance $(\cA, \cG, \mb{v}, \mb{w})$, acyclic solution $(\mb{b},\mb{q}) \in \mathcal{P}(\cA, \cG)$ \;
     $(\mb{b}^\star, \mb{q}^\star)   \leftarrow$ optimal solution of~\nameref{eq:b-cvx} restricted to the support of $(\mb{b}, \mb{q}) $\;
    $ F^\star \leftarrow$ $G_{\mathrm{supp}}(\mb{b}^\star )$ with every tree rooted at an agent node\;
    $\widetilde{F}\leftarrow$ Forest obtained by removing edges between item $j$ and its children in $F^\star$ whenever $q_j^\star < 1/2$ \tcc*{pruning step}
       $L_i^{\star} \leftarrow$ set of leaf children of agent $i$ in $\widetilde{F}$ and let $L^\star = \cup_i \{L_i^\star\}$\;
       $M^\star \leftarrow$  matching between $\cA \rightarrow \cG \backslash L^\star$ in $\widetilde{F}$ which maximizes weight function
       $w_{\widetilde{F}}(M) := \sum_{i \in \cA} w_i \, \log\left(v_{i M(i)}+\sum_{j \in L^\star_i} v_{ij} \right)$ \footnotemark   \;
       $\sigma^\star \leftarrow$ assignment of $\cG$ to $\cA$ with $\sigma^\star(j) = i$ if $j\in \{L^\star_i \cup M^\star(i)\}$ \tcc*{matching step}
  \textbf{Output.} $\sigma^\star$
\end{algorithm2e}
\footnotetext{If agent $i$ in unmatched in $M$, we let $v_{iM(i)} = 0$}
In the first step, Algorithm \ref{alg:non-convex-algo} finds an optimal solution, denoted by $\mb{b}^\star$, to~\nameref{eq:b-cvx} restricted to the support of $\mb{b}$, i.e., $\mb{b}^\star$ is the optimal solution to~\nameref{eq:b-cvx} on input $(\cA, \cG, \widetilde{\mb{v}},\mb{w})$, where $\widetilde{v}_{ij} = 0$ if $b_{ij} = 0$, and $\widetilde{v}_{ij} = {v}_{ij}$ otherwise.
This step is crucial as it allows us to utilize the stability properties of stationary points of~\nameref{eq:b-cvx}.

Next, the algorithm implements a ``pruning'' step to sparsify $\mb{b}^\star$: it removes edges between any item with $q^\star _j < 1/2$ and its children in $F^\star$. Here, $F^\star$ is the support graph of $\mb{b}^\star$ with every tree rooted at agent nodes. This step is equivalent to assigning each item $j$ with $q^\star _j < 1/2$ to its parent agent in $F^\star$. As a result, any item with $q_j^\star < 1/2$ is a leaf in the pruned forest, $\widetilde{F}$. Since removing edges will exclude certain items from being assigned to some agents, pruning can lead to a sub-optimal solution. We bound this loss in objective by showing the existence of a fractional solution $(\mb{b}^{\mathrm{pruned}}, \mb{q}^{\mathrm{pruned}})$ whose support graph is a subset of the pruned forest, $\widetilde{F}$, and $f_{\cvx}(\mb{b}^{\mathrm{pruned}})$ is comparable to $f_{\cvx}(\mb{b}^{\star})$. For concrete details, see Section \ref{sec:rounding}.

It is important to emphasize that the algorithm does not need to find $(\mb{b}^{\mathrm{pruned}}, \mb{q}^{\mathrm{pruned}})$. The mere existence of $(\mb{b}^{\mathrm{pruned}}, \mb{q}^{\mathrm{pruned}})$ is enough to guarantee that the assignment returned by the algorithm will be good, as explained below. 

After the pruning step, the algorithm assigns every leaf item in the pruned forest to its parent. We use $L_i^\star$ to denote the set of leaf items whose parent is agent $i$ and $L^\star =\cup_{i \in \cA} L_i^\star$ to denote the set of all leaf items in the pruned forest. So, each agent $i$ receives all the items in the bundle $L_i^\star$. In the matching step, the algorithm assigns at most one additional item to each agent by finding a maximum weight matching between agents $\cA$ and items $\cG \backslash L^\star$ (the set of non-leaf items in the pruned forest). This matching is determined using an augmented weight function, denoted by $w_{\widetilde{F}}$. The weight of a matching $M$ between $\cA$ and $\cG\setminus L^\star$ in the pruned forest is defined as follows:
\begin{equation*}
     w_{\widetilde{F}}(M) :=  \sum_{i \in \cA} w_i \, \log\left(v_{iM(i)} + \sum_{j \in L^\star_i} v_{ij}  \right)\,,
\end{equation*}
where $v_{iM(i)} = 0$ if $i$ is not matched in $M$. Observe that this weight function exactly captures the weighted Nash Social Welfare objective when agent $i$ is assigned the item set $S_i:= \{M(i)\cup L^\star_i\}$ for each $i\in \cA$. Moreover, finding the optimal matching $M$ can be easily formulated as a maximum weight matching problem in a bipartite graph.

Since the standard linear programming relaxation for the bipartite matching problem is integral, it is enough to demonstrate the existence of a \emph{fractional matching} with a large weight $w_{\widetilde{F}}$ in the pruned forest. In Section \ref{sec:matching}, we show how to construct a fractional matching corresponding to $\mb{b}^{\mathrm{pruned}}$, such that the weight of this matching is comparable to the objective $f_{\ncvx}(\mb{b}^{\mathrm{pruned}})$. We emphasize that this matching corresponding to $\mb{b}^{\mathrm{pruned}}$ is only required for the sake of analysis: to lower bound the performance of the matching returned by the algorithm. We do not need to know $\mb{b}^{\mathrm{pruned}}$ for the execution of the algorithm.

\section{Rounding via the Non-Convex Relaxation} \label{sec:rounding}

In this section, we prove Theorem~\ref{thm:non-convex-main} by establishing properties of support-restricted optimal solutions of \nameref{eq:b-cvx}. 
First, in Lemma \ref{lem:after-pruning}, we show that any optimum whose support is restricted to a forest can be ``pruned'' to a feasible solution while only losing a constant factor in the objective. Specifically, we show that given a support restricted optimum $(\mb{b}^\star , \mb{q}^\star )$, we
can construct a feasible solution $(\mb{b}^{\mathrm{pruned}} , \mb{q}^{\mathrm{pruned}} )$ such that any item with $q^{\mathrm{pruned}}_j < 1/2$ is a leaf in support graph of $\mb{b}^{\mathrm{pruned}}$, and $ f_{\cvx}(\mb{b}^{\mathrm{pruned}}, \mb{q}^{\mathrm{pruned}}) \geq f_{\cvx}(\mb{b}^\star , \mb{q}^\star ) - \log{2}$.

Second, in Lemma \ref{lem:fractional}, we demonstrate the existence of a matching in the support graph of $\mb{b}^{\mathrm{pruned}}$ such that the augmented weight function of this matching differs from $f_{\ncvx}(\mb{b}^{\mathrm{pruned}})$ by a constant factor. After presenting these two lemmas, we provide the proof of Theorem \ref{thm:non-convex-main}.

\begin{restatable}{lemma}{pruning}
\label{lem:after-pruning}
     Let $(\mb{b}^\star , \mb{q}^\star )$ be the optimal solution of \nameref{eq:b-cvx} in the support of some acyclic feasible point $\mb{b}^{\mathrm{forest}}$. Let $F$ be a directed forest formed by $G_{\mathrm{supp}}(\mb{b}^\star)$ when every tree is rooted at an agent node. Then, there exists an acyclic feasible point $(\mb{b}^{\mathrm{pruned}}, \mb{q}^{\mathrm{pruned}})$ in $\mathcal{P}(\cA, \cG)$ such that $G_{\mathrm{supp}}(\mb{b}^{\mathrm{pruned}})$ is a subgraph of $G_{\mathrm{supp}}(\mb{b}^\star)$ and
     \begin{itemize}
         \item $q_j^{\mathrm{pruned}} \geq q_j^\star$ for any item $j$ with $q_j^\star \geq 1/2$,
         \item each item with $q_j^\star < 1/2$ is a leaf in $G_{\mathrm{supp}}(\mb{b}^{\mathrm{pruned}})$ connected to its parent in $F$, and
        \item $ f_{\cvx}(\mb{b}^{\mathrm{pruned}}, \mb{q}^{\mathrm{pruned}}) \geq f_{\cvx}(\mb{b}^\star , \mb{q}^\star ) - \log{2}.$
     \end{itemize}
\end{restatable}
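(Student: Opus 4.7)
The plan is to explicitly construct $(\mb{b}^{\mathrm{pruned}}, \mb{q}^{\mathrm{pruned}})$ by rescaling $\mb{b}^\star$ on a pruned subforest of $F = G_{\mathrm{supp}}(\mb{b}^\star)$, and to bound the drop $f_{\cvx}(\mb{b}^\star, \mb{q}^\star) - f_{\cvx}(\mb{b}^{\mathrm{pruned}}, \mb{q}^{\mathrm{pruned}})$ by $\log 2$, using the fact that every rescaling factor is at most $1/(1 - q_j^\star) \leq 2$ for an item $j$ with $q_j^\star < 1/2$.

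I first form $\widetilde F$ from $F$ by deleting, for each item $j$ with $q_j^\star < 1/2$, every edge from $j$ to its children agents in $F$. Each such $j$ becomes a leaf in $\widetilde F$ connected only to its parent agent, directly yielding the second bullet. To define $\mb{b}^{\mathrm{pruned}}$ on $\widetilde F$, I use the following rescaling: for every agent $i$ whose parent edge $(i,j)$ in $F$ was deleted (so $q_j^\star < 1/2$), set $b^{\mathrm{pruned}}_{ij'} := b^\star_{ij'}/(1 - b^\star_{ij})$ for each child item $j'$ of $i$ in $F$; the scaling factor is at most $1/(1 - q_j^\star) \leq 2$. For every other agent I keep $b^{\mathrm{pruned}}_{ij} := b^\star_{ij}$ on the surviving edges. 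This restores every agent constraint, and item constraints hold immediately at items $j$ with $q_j^\star < 1/2$, which are now leaves with $q^{\mathrm{pruned}}_j \leq 2 q_j^\star < 1$. For an item $j'$ with $q_{j'}^\star \geq 1/2$ whose parent agent was rescaled, $q^{\mathrm{pruned}}_{j'}$ may exceed $1$; I absorb the excess by a local adjustment that routes the overflow through surviving edges of $\widetilde F$ into nearby leaves (items with $q_j^\star < 1/2$), which have ample slack. This keeps all agent constraints intact, caps $q^{\mathrm{pruned}}_{j'}$ at $1$, and only reduces some $q$-values of items with $q_j^\star < 1/2$, which the first bullet permits (it constrains only $q$-values of items with $q_j^\star \geq 1/2$).

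For the objective bound, I use the KKT stationarity of $\mb{b}^\star$ as the optimum of~\nameref{eq:b-cvx} on the support of $\mb{b}^{\mathrm{forest}}$. This equates $\log v_{ij}$ for every edge $(i,j)$ in the support with a linear expression in the dual multipliers of the agent and item constraints plus $\log(\sum_{i'} w_{i'} b^\star_{i'j})$. Substituting into $f_{\cvx}(\mb{b}^{\mathrm{pruned}}, \mb{q}^{\mathrm{pruned}}) - f_{\cvx}(\mb{b}^\star, \mb{q}^\star)$ causes the linear-in-$\mb{b}$ contribution to collapse against the entropy-like term $-\sum_j (\sum_{i'} w_{i'} b_{i'j}) \log(\sum_{i'} w_{i'} b_{i'j})$, producing a residual that depends only on the ratios $q'^{\mathrm{pruned}}_j / q'^\star_j$, where $q'_j := \sum_i w_i b_{ij}$. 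Since every such ratio lies in $[1/2, 2]$ by construction and $\sum_j q'^\star_j \leq 1$, the residual is bounded by $\log 2$.

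The main obstacle is executing the overflow-absorption step while preserving the first bullet, which forbids reducing the $q$-values of items with $q_j^\star \geq 1/2$. Any overflow at such an item $j'$ must therefore be routed into items with $q_j^\star < 1/2$, which are leaves of $\widetilde F$. The key structural observation is that an overflow at $j'$ arises only when the parent agent $p(j')$ was rescaled, which in turn requires $p(j')$ to have had a pruned parent edge of mass strictly less than $1/2$, so the aggregate overflow is bounded. A top-down pass through $\widetilde F$ can then absorb the excess into the slack available at low-$q$ leaves in the same subtree; formalizing this requires a careful inductive argument on the tree structure to ensure the absorption terminates and does not feed back into other high-$q$ items.
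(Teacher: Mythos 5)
Your overall architecture is the same as the paper's: prune every item with $q_j^\star<1/2$ down to a leaf, redistribute the displaced mass down the tree so that every edge satisfies $b^{\mathrm{pruned}}_{ij}\le\min\{1,2b^\star_{ij}\}$, and then use the KKT stationarity of $\mb{b}^\star$ (the paper's Lemma~\ref{lem:a-change}) to reduce the objective loss to $\sum_j \mu_j^{\mathrm{pruned}}\log\bigl(\mu_j^{\mathrm{pruned}}/\mu_j^\star\bigr)\le\log 2$ with $\mu_j=\sum_i w_i b_{ij}$ (note the normalization you need here is $\sum_j\mu_j^{\mathrm{pruned}}=1$, not $\sum_j\mu_j^\star\le 1$, though both hold). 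The first-level rescaling by $1/(1-b^\star_{ij})<2$ and the final KL-type bound are fine.

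The genuine gap is the ``overflow-absorption'' step, which you explicitly defer and which is in fact the technical core of the lemma (the paper devotes Lemma~\ref{lem:redistribution} and the Farkas-based Lemma~\ref{lem:rebalance} to it). Three things are missing. First, the factor-$2$ bound must hold for \emph{every} edge touched by the absorption cascade, not only the first-level rescaled edges: when an item $j'$ with $q^\star_{j'}\ge 1/2$ overflows, you must decrease the edges to its child agents and push that mass onto \emph{their} child items, some of which again have $q^\star\ge1/2$ and may overflow in turn; your claim that ``every such ratio lies in $[1/2,2]$ by construction'' is exactly the statement that needs proof along this entire recursion, and it is where the inequality $\gamma\le\min\{b_{i'k},1-b_{i'k}\}$ (equation~\eqref{eq:gam} in the paper) does the work. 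Second, your description of the routing is inverted: absorbing overflow at $j'$ \emph{increases} the $q$-values of the low-$q^\star$ leaves below it (they receive mass), it does not reduce them; what you must actually verify is that no item with $q^\star_j\ge1/2$ ever has its $q$-value \emph{decreased} and that the doubled edge values never exceed $1$. Third, when several items are pruned, the rescaling triggered by one pruned item and the cascade triggered by a pruned ancestor can in principle hit the same edge and compound the factor beyond $2$; the paper rules this out by pruning in decreasing order of height, so that every cascade stops at an already-pruned (now leaf) item and each surviving edge is modified by at most one event. Without these three pieces the claimed edge-wise factor-$2$ bound, and hence the $\log 2$ loss, is not established.
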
 
The proof of Lemma~\ref{lem:after-pruning} relies on the stability properties of optimal solutions of \nameref{eq:b-cvx}, as outlined in Section~\ref{sec:local}.

\begin{lemma}\label{lem:fractional}
 Let $(\mb{b}, \mb{q})$ be an acyclic solution in $\mathcal{P}(\cA, \cG)$ such that every item with $q_j < 1/2$ is a leaf in $G_{\mathrm{support}}(\mb{b})$. Let $S: \cA \rightarrow 2^{\cG}$ be a function such that for each agent $i$, $S(i)$ is a subset of the leaf items connected to agent $i$ in $G_{\mathrm{supp}}(\mb{b})$, and  $S(i)$ contains all children of agent $i$ with $q_j < 1/2$. 
 Then, there exists a matching $M$ in $G_{\mathrm{supp}}(\mb{b})$ between the vertices in $\cA$ and $\{\cG\backslash \cup_i\{S(i)\} \}$ such that
\begin{align*}
\sum_{i \in \cA} w_i \log\left(v_{iM(i)} + \sum_{j \in S(i)} v_{ij}\right) &\geq   f_{\ncvx}(\mb{b},\mb{q}) -\log 2 -\frac{1}{2e}\,, 
\end{align*}
where $v_{iM(i)} = 0$ if agent $i$ is not matched in $M$.
\end{lemma}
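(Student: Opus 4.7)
The plan is to invoke integrality of the bipartite matching polytope: I construct a feasible fractional matching on an augmented version of $G_{\mathrm{supp}}(\mb{b})$ whose LP value is at least $f_{\ncvx}(\mb{b},\mb{q}) - \log 2 - \tfrac{1}{2e}$, and then extract an integer matching of equal weight. Let $\cG' := \cG \setminus \cup_i S(i)$ and $\ell_i := \sum_{j \in S(i)} v_{ij}$. Since in the lemma's objective an unmatched agent $i$ still contributes $w_i \log \ell_i$ (via the convention $v_{iM(i)} = 0$), I augment the bipartite graph by adding, for each agent $i$, a ``dummy item'' $\delta_i$ of capacity one, with linear edge weights $c_{ij} = w_i \log(v_{ij} + \ell_i)$ on real edges and $c_{i,\delta_i} = w_i \log \ell_i$ on dummy edges. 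By K\"onig's theorem the max-weight LP on this augmented graph is integral, and any integer optimum yields a matching in $G_{\mathrm{supp}}(\mb{b})$ between $\cA$ and $\cG'$ whose lemma objective equals its LP weight (agent-to-dummy edges interpreted as unmatched).

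For the feasible fractional solution I set $y_{ij} = b_{ij}$ for each edge $(i,j) \in G_{\mathrm{supp}}(\mb{b})$ with $j \in \cG'$, and $y_{i,\delta_i} = Q_i := \sum_{j \in S(i)} q_j$ (using that $b_{ij} = q_j$ for leaves $j \in S(i)$). Feasibility follows from feasibility of $\mb{b}$: the agent constraint gives $\sum_j y_{ij} = (1 - Q_i) + Q_i = 1$, the item constraint on real $j \in \cG'$ gives $\sum_i y_{ij} = q_j \leq 1$, and the dummy item constraint gives $y_{i,\delta_i} = Q_i \leq 1$. The LP value is $W_y = \sum_i w_i \bigl[\sum_{j \in \cG' \cap N(i)} b_{ij}\log(v_{ij}+\ell_i) + Q_i \log \ell_i\bigr]$, so the goal reduces to proving $f_{\ncvx}(\mb{b},\mb{q}) \leq W_y + \log 2 + \tfrac{1}{2e}$.

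To prove this, I split the per-agent contribution $\sum_j b_{ij}\log(v_{ij}/q_j)$ of $f_{\ncvx}$ by edge type. For non-leaf edges ($j \in \cG' \cap N(i)$), the pruning hypothesis $q_j \geq 1/2$ together with $v_{ij} \leq v_{ij}+\ell_i$ gives $\log(v_{ij}/q_j) \leq \log 2 + \log(v_{ij}+\ell_i)$, so the sum weighted by $b_{ij}$ exceeds the corresponding matching term by at most $(1-Q_i)\log 2$. For leaf edges ($j \in S(i)$), using $b_{ij} = q_j$ and Jensen's inequality with probability weights $q_j/Q_i$ yields $\sum_{j \in S(i)} q_j \log(v_{ij}/q_j) \leq Q_i \log(\ell_i/Q_i) = Q_i \log \ell_i - Q_i \log Q_i$; then the elementary inequality $-Q_i \log Q_i \leq Q_i \log 2 + \tfrac{1}{2e}$ (obtained by maximizing $x\log(1/(2x))$ over $x > 0$, with maximizer $x = 1/(2e)$) bounds the excess over the leaf matching term $Q_i \log \ell_i$ by $Q_i \log 2 + 1/(2e)$. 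Summing both per-agent excesses with weights $w_i$ and using $(1-Q_i)\log 2 + Q_i \log 2 = \log 2$ and $\sum_i w_i = 1$ delivers the required bound.

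I expect the main subtlety to be locating the precise constant: the naive bound $-x \log x \leq 1/e$ would only yield $-\log 2 - 1/e$, whereas the sharper inequality $-x \log x \leq x \log 2 + 1/(2e)$ is exactly what allows the $Q_i \log 2$ factor from the leaf analysis to merge with the $(1-Q_i)\log 2$ factor from the non-leaf edges, producing the clean total loss of $\log 2 + 1/(2e)$. A secondary, essential point is the dummy-augmentation trick, which correctly accounts for the $\log \ell_i$ contribution of agents that the optimizer chooses to leave unmatched, a case that matters whenever $\ell_i < 1$ and a linear LP over the unaugmented graph would otherwise misrepresent the true matching objective.
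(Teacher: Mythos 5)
Your proposal is correct and follows essentially the same route as the paper's proof, which splits the argument into an analytic bound (Lemma~\ref{lem:frac-1}: leaf/non-leaf decomposition, the $q_j \geq 1/2$ bound contributing $\log 2$, the log-sum/Jensen step for bundled leaves, and the optimization $-x\log(2x) \leq 1/(2e)$) and a combinatorial step (Lemma~\ref{lem:frac-2}: a bundle item per agent playing the role of your dummy $\delta_i$, plus integrality of the bipartite matching polytope restricted to agent-perfect matchings). The only cosmetic differences are that you merge the two steps and place the weight $w_i\log(v_{ij}+\ell_i)$ directly on real edges, where the paper uses $w_i\log v_{ij}$ and upgrades at the end; also, the integrality fact you need is that of the bipartite matching polytope (total unimodularity/Birkhoff), not K\"onig's theorem per se.
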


We prove this lemma in Section \ref{sec:matching}.

\begin{proof}[Proof of Theorem \ref{thm:non-convex-main}]
Given $(\mb{b}, \mb{q})$ such that $G_{\mathrm{supp}}(\mb{b})$ is a forest, let $(\mb{b}^\star, \mb{q}^\star)$ be the optimal solution of~\nameref{eq:b-cvx} restricted to support of $\mb{b}$, let $\widetilde{F}$ denote the forest obtained after pruning $G_{\mathrm{supp}}(\mb{b}^\star)$. Let $L_i^\star$ denote the set of leaf children of agent $i$ in $\widetilde{F}$.

Let $(\mb{b}^{\mathrm{pruned}}, \mb{q}^{\mathrm{pruned}})$ be a feasible solution guaranteed by Lemma~\ref{lem:after-pruning} on input $(\mb{b}^\star, \mb{q}^\star)$. Since Lemma~\ref{lem:after-pruning} guarantees that $G_{\mathrm{supp}}(\mb{b}^{\mathrm{pruned}})$ is a subset of $G_{\mathrm{supp}}(\mb{b}^\star)$, and every item with $q_j^\star < 1/2$ is a leaf in  $G_{\mathrm{supp}}(\mb{b}^{\mathrm{pruned}})$, we conclude that  $G_{\mathrm{supp}}(\mb{b}^{\mathrm{pruned}})$ is a subgraph of $\widetilde{F}$. 

In addition, for any agent $i$, $L_i^\star$ is a subset of the leaf children of $i$ in $G_{\mathrm{supp}}(\mb{b}^{\mathrm{pruned}})$ as $G_{\mathrm{supp}}(\mb{b}^{\mathrm{pruned}})$ is a subgraph of $\widetilde{F}$. Furthermore, if $q_j^{\mathrm{pruned}} < 1/2$, then we claim that $j$ is a leaf in $G_{\mathrm{supp}}(\mb{b}^{\mathrm{pruned}})$ with parent $i$ such that $j \in L_i^\star$ in $\widetilde{F}$. Since $q_j^{\mathrm{pruned}} < 1/2$, by the first point of Lemma~\ref{lem:after-pruning}, we have $q_j^\star < 1/2$. As a result, item $j$ is a leaf in $G_{\mathrm{supp}}(\mb{b}^{\mathrm{pruned}})$ connected to its parent in $\widetilde{F}$. So, item $j$ would be pruned in $\widetilde{F}$, and therefore, by definition, $j \in L_i^\star$.

Therefore, for each agent $i$, the set $L_i^\star$ is a subset of the set of leaves of agent $i$ in $G_{\mathrm{supp}}(\mb{b}^{\mathrm{pruned}})$, and $L_i^\star$ contains all the items with $q_j^{\mathrm{pruned}} < 1/2$ in $G_{\mathrm{supp}}(\mb{b}^{\mathrm{pruned}})$. So, the function $S(i) = L_i^\star$ satisfies the constraints of Lemma~\ref{lem:fractional} with input $(\mb{b}^{\mathrm{pruned}}, \mb{q}^{\mathrm{pruned}})$.

Using Lemma~\ref{lem:fractional} on $(\mb{b}^{\mathrm{pruned}}, \mb{q}^{\mathrm{pruned}})$ with function $S(i) = L_i^\star$, we conclude that there exists a matching, $M$, in $G_{\mathrm{supp}}(\mb{b}^{\mathrm{pruned}})$ such that
\begin{align*}
     \sum_{i \in \cA} w_i \log\left(v_{iM(i)} + \sum_{j \in L^\star_i} v_{ij}\right) &= \sum_{i \in \cA} w_i \log\left(v_{iM(i)} + \sum_{j \in S(i)} v_{ij}\right)\\
     &\geq   f_{\ncvx}(\mb{b}^{\mathrm{pruned}}, \mb{q}^{\mathrm{pruned}}) -\log 2 -\frac{1}{2e}.
\end{align*}

Since $G_{\mathrm{supp}}(\mb{b}^{\mathrm{pruned}})$ is a subgraph of $\widetilde{F}$, the matching $M$ is also present in $\widetilde{F}$. Therefore, the matching $M^\star$ (and corresponding assignment $\sigma^\star$) returned by Algorithm~\ref{alg:non-convex-algo} satisfies \allowdisplaybreaks
\begin{align*}
    \nsw(\sigma^\star) &= \sum_{i \in \cA} w_i \log\left(v_{iM^\star(i)} + \sum_{j \in L_i^\star} v_{ij}\right) \stackrel{(i)}{\geq} \sum_{i \in \cA} w_i \log\left(v_{iM(i)} + \sum_{j \in L_i^\star} v_{ij}\right) \\
    &\stackrel{(ii)}{\geq} f_{\ncvx}(\mb{b}^{\mathrm{pruned}}, \mb{q}^{\mathrm{pruned}}) -\log 2 -\frac{1}{2e} \\
    &\stackrel{(iii)}{\geq} f_{\cvx}(\mb{b}^{\mathrm{pruned}}, \mb{q}^{\mathrm{pruned}}) -\kl{\mb{w}}{\mb{u}} -\log 2 -\frac{1}{2e} \\
      &\stackrel{(iv)}{\geq} f_{\cvx}(\mb{b}^\star, \mb{q}^\star) -\kl{\mb{w}}{\mb{u}}-2\log{2} - \frac{1}{2e}\,\\
       &\stackrel{(v)}{\geq} f_{\cvx}(\mb{b}, \mb{q}) -\kl{\mb{w}}{\mb{u}} -2\log{2} - \frac{1}{2e}.
\end{align*}
Here, $(i)$ follows from the optimality of $M^\star$, $(ii)$ follows from Lemma~\ref{lem:fractional}, $(iii)$ follows from Lemma \ref{lem:diff}, $(iv)$ follows from Lemma \ref{lem:fractional}, and $(v)$ follows from the optimality of $\mb{b}^{\star}$.
\end{proof}

\subsection{Pruning Small Items}\label{sec:local} \label{sec:pruning}

In this section, we prove Lemma \ref{lem:after-pruning} by establishing some properties of the set of (support restricted) optimal solutions of \nameref{eq:b-cvx} in Lemma \ref{lem:a-change} and Lemma \ref{lem:pruned-sol}. 

First, we show that any optimal solution of \nameref{eq:b-cvx} is relatively stable, i.e., the change in function value when moving away from the optimal solution can be quantified in terms of how much we deviate from that solution. 
We formalize the stability property as follows.

\begin{restatable}{lemma}{a-change}\label{lem:a-change}
  Let $(\mb{b}^\star , \mb{q}^\star )$ be the optimal solution of \nameref{eq:b-cvx} in the support of some acyclic feasible point $\mb{b}^{\mathrm{forest}}$. Let $(\mb{b},\mb{q})$ be a feasible point in $\mathcal{P}(\cA, \cG)$ such that the support of $\mb{b}$ is a subset of the support of $\mb{b}^\star $, and for any $j \in \cG$, if $q^\star _j = 1$, then $q_j = 1$. Then \begin{equation*}
        f_{\cvx}(\mb{b}^\star , \mb{q}^\star ) - f_{\cvx}(\mb{b},\mb{q}) =  \sum_{j \in \cG} \sum_{i\in \cA} w_i\, b_{ij}  \log{\left(\frac{\sum_{i\in \cA} w_i\, b_{ij} }{\sum_{i\in \cA} w_i\, b_{ij}^\star }\right)}.
    \end{equation*}
\end{restatable}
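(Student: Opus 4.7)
The plan is to expand $f_{\cvx}(\mb{b}^\star, \mb{q}^\star) - f_{\cvx}(\mb{b}, \mb{q})$ algebraically and then apply the KKT stationarity conditions at $\mb{b}^\star$ to collapse the difference into the claimed weighted-KL-like expression. Writing $p_j := \sum_{i \in \cA} w_i b_{ij}$ and $p^\star_j := \sum_{i \in \cA} w_i b^\star_{ij}$, the entropy-like term of \nameref{eq:b-cvx} equals $-\sum_j p_j \log p_j$, and the constant $\sum_i w_i \log w_i$ cancels from both sides. After cancelling the common $\sum_j p_j \log p_j$, the identity reduces to the single equality
\begin{equation*}
\sum_{i \in \cA} \sum_{j \in \cG} w_i (b^\star_{ij} - b_{ij}) \bigl(\log v_{ij} - \log p^\star_j\bigr) = 0.
\end{equation*}

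To prove this identity, I would write the Lagrangian for the support-restricted convex program, with multipliers $\alpha_i \in \mathbb{R}$ for the agent constraints $\sum_j b_{ij} = 1$, $\beta_j \geq 0$ for the item constraints $\sum_i b_{ij} \leq 1$, and $\mu_{ij} \geq 0$ for the bounds $b_{ij} \geq 0$ (with $\mu_{ij} b^\star_{ij} = 0$). A direct computation gives $\partial f_{\cvx}/\partial b_{ij} = w_i(\log v_{ij} - \log p^\star_j - 1)$, so stationarity at $\mb{b}^\star$ yields, for every $(i,j)$ with $b^\star_{ij} > 0$,
\begin{equation*}
w_i \bigl(\log v_{ij} - \log p^\star_j\bigr) = \alpha_i + \beta_j + w_i.
\end{equation*}

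The hypothesis $\mathrm{supp}(\mb{b}) \subseteq \mathrm{supp}(\mb{b}^\star)$ ensures that every $(i,j)$ contributing a nonzero term to the displayed sum lies in $\mathrm{supp}(\mb{b}^\star)$, so the KKT equality applies at each such term. Substituting and regrouping then gives
\begin{equation*}
\sum_{i,j} w_i (b^\star_{ij} - b_{ij}) (\log v_{ij} - \log p^\star_j) = \sum_i \alpha_i \sum_j (b^\star_{ij} - b_{ij}) + \sum_j \beta_j (q^\star_j - q_j) + \sum_i w_i \sum_j (b^\star_{ij} - b_{ij}).
\end{equation*}
The agent constraints $\sum_j b^\star_{ij} = \sum_j b_{ij} = 1$ kill the first and third sums. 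For the middle sum, complementary slackness for the item constraints forces $\beta_j > 0$ only when $q^\star_j = 1$, and the hypothesis then gives $q_j = 1 = q^\star_j$, so each surviving term vanishes.

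The main obstacle is not a technical difficulty but the careful deployment of both hypotheses at the right places: the support-containment is precisely what lets the KKT equality be used at every nonzero term (rather than just the KKT inequality at the boundary), and the condition $q^\star_j = 1 \Rightarrow q_j = 1$ is precisely what kills the contribution of the item-constraint multipliers after complementary slackness. The support-restriction of the program itself is harmless: any additional multipliers enforcing $b_{ij} = 0$ outside $\mathrm{supp}(\mb{b}^{\mathrm{forest}})$ are inactive on $\mathrm{supp}(\mb{b}^\star) \subseteq \mathrm{supp}(\mb{b}^{\mathrm{forest}})$, exactly where stationarity is being invoked.
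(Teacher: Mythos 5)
Your proof is correct and follows essentially the same route as the paper's: expand the difference, invoke KKT stationarity and complementary slackness at $\mb{b}^\star$ for the support-restricted program, use the agent equality constraints to kill the $\alpha_i$ and $w_i$ terms, and use the hypothesis $q^\star_j = 1 \Rightarrow q_j = 1$ together with complementary slackness to kill the item-constraint multipliers. The only cosmetic difference is that you first isolate the identity $\sum_{i,j} w_i (b^\star_{ij} - b_{ij})(\log v_{ij} - \log p^\star_j) = 0$ before applying KKT, whereas the paper substitutes the KKT expression for $\log v_{ij}$ directly into the expanded difference; both reductions are equivalent.
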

We provide the proof of this lemma in Appendix \ref{sec:proofs}.

Second, in Lemma \ref{lem:pruned-sol}, we show that any acyclic optimal solution of \nameref{eq:b-cvx} can be pruned to a feasible solution, denoted by $\mb{b}^{\mathrm{pruned}}$, which is amenable to rounding. Specifically, we show that given a first-order stationary point $(\mb{b^\star}, \mb{q}^\star)$, we can construct a feasible solution $(\mb{b}^{\mathrm{pruned}}, \mb{q}^{\mathrm{pruned}})$ such that any item with $q_j^{\mathrm{pruned}} < 1/2$ is a leaf in support of $\mb{b}^{\mathrm{pruned}}$ and $b^{\mathrm{pruned}}_{ij} \leq \min \{1, 2b^\star_{ij} \}$ for any agent $i$ and item $j$.

\begin{lemma} \label{lem:pruned-sol}
 Let $(\mb{b}^\star, \mb{q}^\star)$ be an acyclic feasible point in $\mathcal{P}(\cA, \cG)$. Let $F$ be a directed forest formed by $G_{\mathrm{supp}}(\mb{b}^\star)$ when every tree is rooted at an arbitrary agent node. Then, there exists a feasible solution $(\mb{b}^{\mathrm{pruned}}, \mb{q}^{\mathrm{pruned}})$  such that $G_{\mathrm{supp}}(\mb{b}^{\mathrm{pruned}})$ is a subgraph of $G_{\mathrm{supp}}(\mb{b}^*)$,
 \begin{itemize}
    \item $q_j^\star \leq q_j^{\mathrm{pruned}}$ for each item $j$ with $q_j^\star \geq 1/2$, 
     \item each item with $q_j^\star < 1/2$ is a leaf in $G_{\mathrm{supp}}(\mb{b}^{\mathrm{pruned}})$ connected to its parent in $F$, and
     \item for any $(i,j) \in \cA \times \cG$, $ b^{\mathrm{pruned}}_{ij} \leq \min\{1, 2\cdot b^\star_{ij}\} .$

 \end{itemize}
 
\end{lemma}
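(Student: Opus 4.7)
The plan is to construct $\mb{b}^{\mathrm{pruned}}$ explicitly by modifying $\mb{b}^\star$ tree-by-tree in the forest $F$. Within each rooted tree, I would process the pruned items (those with $q^\star_j<1/2$) by removing, for each such item $j^*$ and each child agent $a^*$ of $j^*$ in $F$, the edge $(a^*,j^*)$. This makes $j^*$ a leaf connected only to its parent, as required. After deleting these edges, agent $a^*$'s agent constraint is deficient by $b^\star_{a^*,j^*}\le q^\star_{j^*}<1/2$, which I would restore by scaling $a^*$'s remaining edges (to its child items in $F$) by the factor $\alpha_{a^*}:=1/(1-b^\star_{a^*,j^*})\in(1,2)$. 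This immediately yields the doubling bound $b^{\mathrm{pruned}}_{a^*,j'}\le 2\,b^\star_{a^*,j'}$ on the edges touched.

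The technical subtlety is that scaling $b_{a^*,j'}$ upward can push $q_{j'}$ above $1$ for some child item $j'$ of $a^*$. The fix is to simultaneously decrease $b^{\mathrm{pruned}}_{a'',j'}$ for some child agent $a''\in C(j')$ by exactly the excess, and then repair $a''$'s constraint by the same scaling trick applied recursively. The crucial quantitative observation is that the newly introduced deficit stays below $1/2$ at every step of the recursion: the excess at $j'$ equals
\[ (\alpha_{a^*}-1)\,b^\star_{a^*,j'} \;=\; \frac{b^\star_{a^*,j^*}\,b^\star_{a^*,j'}}{1-b^\star_{a^*,j^*}} \;\le\; b^\star_{a^*,j^*} \;<\; \tfrac{1}{2}, \]
where the first inequality uses the agent constraint $b^\star_{a^*,j^*}+b^\star_{a^*,j'}\le 1$ at $a^*$. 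The same agent constraint also guarantees the required decrease at $j'$ never exceeds $\sum_{a''\in C(j')} b^\star_{a'',j'} = q^\star_{j'}-b^\star_{a^*,j'}$, so there is always enough mass on $j'$'s other edges to absorb the correction.

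The main obstacle is verifying that this recursive construction terminates and simultaneously respects all three requirements: feasibility in $\mathcal{P}(\cA,\cG)$, pruned items being leaves attached to their parent, the doubling bound $b^{\mathrm{pruned}}_{ij}\le\min\{1,2\,b^\star_{ij}\}$, and the lower bound $q^{\mathrm{pruned}}_j\ge q^\star_j$ for items with $q^\star_j\ge 1/2$. The lower bound holds because decreases only occur at items where $q$ would otherwise exceed $1$, with the decrease capped exactly at the amount by which $q$ exceeds $1$, leaving $q^{\mathrm{pruned}}_{j'}=1\ge q^\star_{j'}$; items not touched by the propagation either retain $q^\star_j$ or strictly gain. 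Termination follows from induction on the depth of the current subtree, as the recursion descends monotonically, and the doubling bound propagates because every scaling factor in the recursion is bounded by $1/(1-\text{deficit})<2$ thanks to the invariant that each deficit stays strictly below $1/2$. Checking these invariants along every recursion chain is the technical heart of the argument.
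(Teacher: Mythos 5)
Your overall strategy is the same as the paper's: remove the child edges of every item with $q^\star_j<1/2$, repair each affected agent's constraint by scaling up its remaining child edges, and push any resulting item overflow downward by decreasing child-agent edges and recursing. (The paper packages the downward propagation as Lemma~\ref{lem:redistribution}, proved by induction on height, using a Farkas-type rebalancing lemma where you use proportional scaling.) However, two steps in what you yourself call the technical heart do not go through as stated.

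First, the invariant ``each deficit stays strictly below $1/2$, hence every scaling factor is bounded by $1/(1-\text{deficit})<2$'' is not the right invariant beyond the first level. When a child agent $a''$ of the overflowing item $j'$ loses $\gamma_{a''}$ from the edge $(a'',j')$, the factor by which its remaining child edges must be scaled is $1+\gamma_{a''}/(1-b^\star_{a'',j'})$, not $1/(1-\gamma_{a''})$; with $b^\star_{a'',j'}=0.9$ and $\gamma_{a''}=0.3<1/2$ this factor is $4$. What actually saves the argument is the sharper per-edge bound $\gamma_{a''}\le \min\{b^\star_{a'',j'},\,1-b^\star_{a'',j'}\}$, which follows from the chain
\begin{equation*}
\text{(excess at } j'\text{)}\;\le\;(\alpha_{a^*}-1)\,b^\star_{a^*,j'}\;\le\; b^\star_{a^*,j'}\;\le\; q^\star_{j'}-b^\star_{a'',j'}\;\le\;1-b^\star_{a'',j'},
\end{equation*}
i.e., the increase on an item's parent edge is at most that edge's original value, and the item constraint at $j'$ then caps the decrease required on each child edge. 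Neither of the two bounds you derive (excess $\le b^\star_{a^*,j^*}<1/2$, and excess $\le$ total mass on the other edges of $j'$) implies this, and the invariant must itself be re-established at every level of the recursion (this is exactly equation~\eqref{eq:gam} in the paper's proof of Lemma~\ref{lem:redistribution}).

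Second, you never address the interaction between distinct pruned items on the same root-to-leaf path. If $j_1$ and a descendant $j_2$ of $j_1$ are both pruned, the propagation triggered by each can scale the same edge deep in the tree by a factor close to $2$, yielding $b^{\mathrm{pruned}}_{ij}$ close to $4\,b^\star_{ij}$ and breaking the third bullet. The paper avoids this by pruning in decreasing order of height: once $j_2$ is pruned it is a leaf whose only remaining edge satisfies $q_{j_2}\le 2b^\star_{\mathrm{par}(j_2)j_2}<1$ even after a later doubling, so the propagation from any later-pruned ancestor stops at $j_2$ and never re-enters its former subtree; hence every edge is increased at most once. Some such ordering (or compounding-control) argument is needed to close your proof.
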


Before proving Lemma \ref{lem:pruned-sol}, we use Lemma \ref{lem:pruned-sol} along with Lemma \ref{lem:a-change} to prove Lemma \ref{lem:after-pruning}.

\begin{proof}[Proof of Lemma~\ref{lem:after-pruning}]
   By Lemma~\ref{lem:pruned-sol}, there exists a feasible solution 
   $(\mb{b}^{\mathrm{pruned}},\mb{q}^{\mathrm{pruned}})$ such that the support graph, $G_{\mathrm{supp}}(\mb{b}^{\mathrm{pruned}})$, is a subgraph of $G_{\mathrm{supp}}(\mb{b})$ and $(\mb{b}^{\mathrm{pruned}},\mb{q}^{\mathrm{pruned}})$ satisfies the first two items claimed in the lemma. Furthermore, for any $(i,j) \in \cA \times \cG$, $ b^{\mathrm{pruned}}_{ij} \leq \min\{1, 2\cdot b^\star_{ij}\}$.

So using Lemma~\ref{lem:a-change}, the difference in objective between $(\mb{b}^\star , \mb{q}^\star )$ to $(\mb{b}^{\mathrm{pruned}},\mb{q}^{\mathrm{pruned}})$ is bounded as follows
    \begin{align*}
        f_{\cvx}(\mb{b}^\star , \mb{q}^\star ) - f_{\cvx}(\mb{b}^{\mathrm{pruned}},\mb{q}^{\mathrm{pruned}}) =  \sum_{j \in \cG} \sum_{i\in \cA} w_i\, b^{\mathrm{pruned}}_{ij}  \log{\left(\frac{\sum_{i\in \cA} w_i\, b^{\mathrm{pruned}}_{ij} }{\sum_{i\in \cA} w_i\, b_{ij}^\star }\right)} .
    \end{align*}

Since $b^{\mathrm{pruned}}_{ij} \leq \min\{1, \, 2\,b^\star_{ij}\}$ for each $(i,j)$, we have $\sum_i w_i\, b^{\mathrm{pruned}}_{ij} \leq 2\sum_i w_i \,b^\star_{ij}$.
 \begin{align}
        f_{\cvx}(\mb{b}^\star , \mb{q}^\star ) - f_{\cvx}(\mb{b}^{\mathrm{pruned}},\mb{q}^{\mathrm{pruned}}) \leq \sum_{j\in \cG} \sum_{i\in \cA} w_i\, b^{\mathrm{pruned}}_{ij} \log{2}. \label{eq:3.5.1}
\end{align}
The feasibility of $\mb{b}^{\mathrm{pruned}}$ implies
\begin{equation*}
     \sum_{j\in \cG} \sum_{i\in \cA} w_i\, b^{\mathrm{pruned}}_{ij} =  \sum_{i\in \cA} w_i \sum_{j\in \cG} b^{\mathrm{pruned}}_{ij} = \sum_{i\in \cA} w_i = 1.
\end{equation*}
Plugging this bound in equation~\eqref{eq:3.5.1} completes the proof.
\end{proof}
\begin{proof}[Proof of Lemma \ref{lem:a-change}] 
If $\mb{b}^\star$ is an optimal solution of \nameref{eq:b-cvx}, then using the KKT conditions, there exist real numbers $\lambda_i$ for each $i \in \cA$, $\eta_j \geq 0$ for each $j \in \cG$, and $\alpha_{ij}\geq 0$ for every $(i,j) \in \cA \times \cG$ such that
\begin{align*}
    &\frac{\partial L}{\partial b^\star _{ij}} = w_i \log{v_{ij}}  - w_i - w_i\log\left(\sum_{i \in \cA} w_i b_{ij}^\star\right)-\lambda_i-\eta_j +\alpha_{ij} = 0.
\end{align*}

In addition, by complementary slackness, we have $\eta_j(1- \sum_{i \in \cA } b_{ij}^\star) = 0$ for each item $j$ and $\alpha_{ij}  b^\star _{ij} = 0$ for each $(i,j) \in \cA \times \cG$. Using these complementary slackness conditions, if $b^\star _{ij} > 0$, then
\begin{equation}
     w_i\log{v_{ij}} = w_i +w_i\log\left(\sum_{i \in \cA} w_i b_{ij}^\star\right)+\lambda_i+\eta_j .\label{eq:10} 
\end{equation}

Now, expanding the difference between the two function values, we get
    \begin{align}
         f_{\cvx}(\mb{b}^\star , \mb{q}^\star ) - f_{\cvx}(\mb{b},\mb{q}) &= \sum_{i\in \cA} \sum_{j \in \cG}\left(b^\star _{ij} - b_{ij}\right) \cdot  w_i \log{v_{ij}} -  \sum_{j \in \cG} \sum_{i\in \cA} w_i \, b^\star _{ij} \log{\left(\sum_{i\in \cA} w_i \, b^\star _{ij}\right)} \notag\\
         &\quad + \sum_{j \in \cG} \sum_{i\in \cA} w_i \, b_{ij} \log{\left(\sum_{i\in \cA} w_i \, b_{ij}\right)} . \label{eq:difference}
    \end{align}

Substituting the value of $v_{ij}$ from equation \eqref{eq:10} in equation~\eqref{eq:difference} gives
     \begin{align*}
         f_{\cvx}(\mb{b}^\star , \mb{q}^\star ) - f_{\cvx}(\mb{b},\mb{q}) &= \sum_{i\in \cA} \sum_{j \in \cG} (b^\star _{ij} - b_{ij}) \left(w_i\log{\left(\sum_{i\in \cA} w_i \, b^\star _{ij}\right)} + \lambda_i + w_i + \eta_j\right) \\
         & \quad -  \sum_{j \in \cG} \sum_{i\in \cA} w_i \, b^\star _{ij} \log{\left(\sum_{i\in \cA} w_i \, b^\star _{ij}\right)} + \sum_{j \in \cG} \sum_{i\in \cA} w_i \, b_{ij} \log{\left(\sum_{i\in \cA} w_i \, b_{ij}\right)}\\
         &= \sum_{j \in \cG} \sum_{i\in \cA} w_i\, b_{ij}  \log{\left(\frac{\sum_{i\in \cA} w_i\, b_{ij} }{\sum_{i\in \cA} w_i\, b_{ij}^\star }\right)}  + \sum_{i \in \cA} \left(\lambda_i + w_i\right) \left(\sum_{j \in \cG} b^\star _{ij} -\sum_{j \in \cG}  b_{ij}\right) \\
         &\quad + \sum_{j \in \cG} \eta_j  \left(\sum_{i \in \cA} b^\star _{ij} -\sum_{i \in \cA}  b_{ij}\right).
    \end{align*}
Using $\sum_{j\in \cG} b_{ij} = \sum_{j\in \cG} b^\star _{ij} = 1$ for every $i \in \cA$, we get
\begin{align*}
    f_{\cvx}(\mb{b}^\star , \mb{q}^\star ) -f_{\cvx}(\mb{b},\mb{q}) &=  \sum_{j \in \cG} \sum_{i\in \cA} w_i\, b_{ij}  \log{\left(\frac{\sum_{i\in \cA} w_i\, b_{ij} }{\sum_{i\in \cA} w_i\, b_{ij}^\star }\right)}  + \sum_{j \in \cG} \eta_j  \left( q^\star_j -q_j\right),
\end{align*}
where the last equation follows from the definitions of $q_j$ and $q^\star _j$.

Note that by complementary slackness, $\eta_j (1-q^\star_j) = 0$ for any $j \in \cG$. So if $q^\star _j < 1$, then $\eta_j = 0$ and therefore $\eta_j (q^\star _j - q_j) = 0$. If $q^\star _j = 1$, then by the hypothesis of the Lemma, $q_j = 1$, and again we obtain that $\eta_j (q^\star _j - q_j) = 0$. Using this bound in the above equation gives
\begin{align*}
        f_{\cvx}(\mb{b}^\star , \mb{q}^\star ) -f_{\cvx}(\mb{b},\mb{q}) &=  \sum_{j \in \cG} \sum_{i\in \cA} w_i\, b_{ij}  \log{\left(\frac{\sum_{i\in \cA} w_i\, b_{ij} }{\sum_{i\in \cA} w_i\, b_{ij}^\star }\right)}.
    \end{align*}
\end{proof}

Before proving Lemma \ref{lem:pruned-sol}, we need the following lemma about the feasibility of a solution when we decrease the $b_{ij}$ for some edge $(j,i)$ in the support forest of $\mb{b}$.
\begin{restatable}{lemma}{redistribution}
    \label{lem:redistribution}
    Let $(\mb{b},\mb{q})$ be an acyclic feasible point in $\mathcal{P}(\cA, \cG)$, and let $F$ be a directed forest formed by $G_{\mathrm{supp}}(\mb{b})$ when every tree is rooted at an arbitrary agent node. For a non-root agent $i$ in $F$, let item $j$ be its parent. Then, for any $0 \leq \delta \leq \min\{ b_{i j}, \; 1-b_{i j} \}$, there exists a feasible solution, $(\mb{b}^{\delta}, \mb{q}^{\delta})$ such that $ b^{\delta}_{i j} = b_{i j} - \delta$, $q^{\delta}_{j} = q_{j} - \delta$, $ q^{\delta}_{j'} \geq q_{j'} $ for all $j' \in \cG \backslash \{j\}$, and
    \begin{equation*}
         b^{\delta}_{i'j'} \begin{cases}
            \leq \min\{1, \, 2b_{i'j'} \}& \text{ if } i', j' \in T(i)\\
            =b_{i'j'} & \text{ otherwise}\,,
        \end{cases}
    \end{equation*}
    where $T(x)$ denotes the sub-tree rooted at $x$ in $F$.
\end{restatable}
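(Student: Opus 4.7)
The plan is to prove Lemma~\ref{lem:redistribution} by strong induction on the size $|T(i)|$ of the subtree rooted at $i$, recursively propagating the $\delta$ units of removed mass down through the forest $F$. For the base case, if $i$ is a leaf of $F$, then $(i,j)$ is the only support edge incident to $i$, so the agent constraint $\sum_{j'} b_{ij'}=1$ forces $b_{ij}=1$; hence $\min\{b_{ij},1-b_{ij}\}=0$, $\delta=0$, and the claim holds trivially with $\mb{b}^{\delta}:=\mb{b}$. For the inductive step, let $j_1,\ldots,j_k$ be the children items of $i$ in $F$; since the only support edges incident to $i$ are $(i,j)$ and the $(i,j_r)$'s, we have $\sum_{r=1}^k b_{i,j_r}=1-b_{ij}$. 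I will absorb $\delta$ by increasing $b_{i,j_r}$ by some non-negative $\delta_r$ with $\sum_r \delta_r=\delta$, and then compensate within each $T(j_r)$ as needed.

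The central combinatorial step is a capacity inequality at $i$: $\sum_{r=1}^k \min\{b_{i,j_r},1-b_{i,j_r}\}\ge \min\{b_{ij},1-b_{ij}\}\ge \delta$. Since $\sum_r b_{i,j_r}\le 1$, at most one $b_{i,j_{r^\star}}$ can exceed $1/2$; a short case split (``all $\le 1/2$'' vs.\ ``one $>1/2$'') reduces the inequality in both cases to $b_{ij}+b_{i,j_{r^\star}}\le 1$, which follows from the agent constraint. This capacity bound lets me choose $\delta_r$'s summing to $\delta$ with $\delta_r\le \min\{b_{i,j_r},1-b_{i,j_r}\}$, and this is exactly the condition ensuring $b^{\delta}_{i,j_r}:=b_{i,j_r}+\delta_r$ satisfies $b^{\delta}_{i,j_r}\le \min\{1,2\,b_{i,j_r}\}$ while restoring the agent constraint at $i$.

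Adding $\delta_r$ to $b_{i,j_r}$ raises the column sum at $j_r$ by $\delta_r$. If $q_{j_r}+\delta_r\le 1$, I set $q^{\delta}_{j_r}:=q_{j_r}+\delta_r$ and make no further change inside $T(j_r)$; both the item constraint and $q^{\delta}_{j_r}\ge q_{j_r}$ hold. Otherwise, define the overflow $\delta'_r:=q_{j_r}+\delta_r-1\in(0,\delta_r]$, set $q^{\delta}_{j_r}:=1$, and compensate by reducing the children-agent edges $(i'_1,j_r),\ldots,(i'_t,j_r)$ by amounts $\delta'_{r,1},\ldots,\delta'_{r,t}$ summing to $\delta'_r$. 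A symmetric capacity inequality at $j_r$ -- using that $b_{i,j_r}+b_{i'_s,j_r}\le q_{j_r}\le 1$ forbids both $b_{i,j_r}$ and any $b_{i'_s,j_r}$ from simultaneously exceeding $1/2$ -- guarantees we can pick $\delta'_{r,s}\le \min\{b_{i'_s,j_r},1-b_{i'_s,j_r}\}$ with $\sum_s \delta'_{r,s}=\delta'_r$. Each such reduction fits the hypothesis of the lemma at $i'_s$ (whose parent in $F$ is $j_r$), and since $|T(i'_s)|<|T(i)|$, the induction hypothesis yields modifications supported strictly within $T(i'_s)\subseteq T(i)$ that preserve all required invariants.

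The main obstacle is the capacity argument at the two levels; everything else is bookkeeping. Once it is in place, the at-most-doubling bound $b^{\delta}_{i',j'}\le \min\{1,2\,b_{i',j'}\}$ for $i',j'\in T(i)$ and the monotonicity $q^{\delta}_{j'}\ge q_{j'}$ for $j'\ne j$ follow directly from the construction together with the induction hypothesis inside each $T(i'_s)$, and the only edge modified with an endpoint outside $T(i)$ is $(i,j)$ itself, which satisfies $b^{\delta}_{ij}=b_{ij}-\delta$ by design.
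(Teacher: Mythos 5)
Your proposal is correct and its overall architecture matches the paper's: an induction down the tree in which the $\delta$ units removed from the parent edge $(i,j)$ are pushed onto the children items of $i$, and any resulting violation of an Item constraint at a child $j_r$ is repaired by pulling mass off the grandchildren agents' edges and recursing into their subtrees. (The paper inducts on the height of $i$ rather than on $|T(i)|$, which is immaterial.) The genuine difference is in how the existence of the redistribution amounts is established. The paper formulates the agent-level rebalancing as a linear feasibility system and proves it solvable via Farkas' lemma in a separate Lemma~\ref{lem:rebalance}, parametrizing the increase multiplicatively as $b_{ik}(1+\delta_k)$ with $\delta_k\in[0,1]$ and $b_{ik}(1+\delta_k)\le 1$ — conditions equivalent to your additive bound $\delta_r\le\min\{b_{ij_r},\,1-b_{ij_r}\}$. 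You instead prove the direct capacity inequality $\sum_r\min\{b_{ij_r},1-b_{ij_r}\}\ge\min\{b_{ij},1-b_{ij}\}$ by the observation that at most one summand can have $b_{ij_r}>1/2$, reducing everything to the Agent constraint $b_{ij}+\sum_r b_{ij_r}=1$; the symmetric inequality at $j_r$ (which needs $\delta_r\le b_{ij_r}$ in the case where some grandchild edge exceeds $1/2$, exactly as the paper's bound on $\gamma$ needs $b_{ik}(1+\delta_k)\le 1$) plays the role of the paper's inequality \eqref{eq:gam}. Your route is more elementary — it dispenses with the LP-duality machinery of Lemma~\ref{lem:rebalance} entirely — at the cost of two short case analyses; both arguments deliver identical guarantees, including the at-most-doubling bound and the monotonicity of $\mb{q}$ off of $j$.
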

\begin{proof}[Proof of Lemma \ref{lem:pruned-sol}]
We will iteratively build the solution $(\mb{b}^{\mathrm{pruned}}, \mb{q}^{\mathrm{pruned}})$ satisfying these properties while ensuring it remains feasible. For a vertex $x \in \cA \cup \cG$, let $\pr(x)$ denote its parent in $G_{\mathrm{supp}}(\mb{b}^\star)$, let $C(x)$ denote the set of its children in $G_{\mathrm{supp}}(\mb{b}^\star)$, and let $T(x)$ denote the sub-tree rooted at vertex $x$ in $G_{\mathrm{supp}}(\mb{b}^\star)$.

Consider an item $j$ with $q_{j}^\star < 1/2$. To make the vertex corresponding to $j$ a leaf, the algorithm removes all the edges between item $j$ and its children $C(j)$. To reflect this change, we will update the solution $(\mb{b}^\star, \mb{q}^\star)$ to an intermediate solution $(\widetilde{\mb{b}}, \widetilde{\mb{q}})$ such that the support of $\widetilde{\mb{b}}$ does not contain any edges between item $j$ and its children.
 To maintain feasibility, we require:
        \begin{align}
            \widetilde{q}_{j} &= \widetilde{b}_{\pr(j) j} = b^\star _{\pr(j) j} \notag \\
            \widetilde{b}_{i j} &= 0 \text{ for all } i \in C(j) \label{eq:local-feasibility}
        \end{align}

Note that $q^\star _j < 1/2$ implies $b^\star _{ij} < 1/2$ for each $i \in C(j)$. As a result, the decrease in $b_{ij}$ satisfies
\begin{equation*}
    b^\star _{ij} - \widetilde{b}_{ij} \leq \min\{b^\star _{ij}, 1-b^\star _{ij}\}
\end{equation*}
for each $i \in C(j)$.
So, we update $(\widetilde{\mb{b}}, \widetilde{\mb{q}})$ by iteratively applying Lemma~\ref{lem:redistribution} to edge $(j\rightarrow i)$ with $\delta = b_{ij}$ for each $i \in C(j)$. The updated solution satisfies $\widetilde{b}_{ij} = 0$ for each $i \in C(j)$ and $\widetilde{q}_j = q_j - \sum_{i\in C(j)} b_{ij} = b_{\mathrm{par}(j) j } < 1/2$.
Note that $T(j)$ is the disjoint union of the sub-trees rooted at nodes in $C(j)$. So for distinct $i_1, i_2 \in C(j)$, updating the edge $(j\rightarrow i_1)$ (and the sub-tree for $i_1$) does not affect the $b$ values for any edge in $T(i_2)$ and vice versa. Therefore, by Lemma \ref{lem:redistribution}, we have $q^\star _{j'} \leq \widetilde{q}_{j'}$ for any item $ j' \in T(j)$ and $\widetilde{b}_{i'j'} \leq \min\{1, 2b^\star_{i'j'}\}$ for any $i', j' \in T(j)$.

Since every item with $q_j^\star < 1/2$ must become a leaf, we repeat the above process for any such item. The following fact is crucial to bound the values after multiple pruning processes: Pruning item $j$ only changes $b$ values for edges in $T(j)$, and item $j$ becomes a leaf after that. So, if we prune ancestors of $j$ after pruning $j$, the $b$ values of edges in $T(j)$ do not change further.

Let $(\mb{b}^{\mathrm{pruned}}, \mb{q}^{\mathrm{pruned}})$ be the solution obtained by pruning the set of items $J = \{ j\in \cG: q_j^\star < 1/2\}$ in decreasing order of their height\footnote{Note that pruning items in decreasing order of their height is only an artifact of the analysis. 
The algorithm can prune items with $q_j^\star < 1/2$ in any order.}. Pruning item $j$ does not decrease the $q$ value of any item other than $j$. Therefore, if $q^{\mathrm{pruned}}_j < 1/2$, then $q^{\star}_j < 1/2$, so item $j$ has been pruned and is a leaf. For any item $j$ with $q_j^\star \geq 1/2$, its $q$ value only increases when its nearest ancestor is pruned, and this is the only time its $q$-value changes. So we conclude that $q_j^{\mathrm{pruned}} \geq q_j^\star$ for each $j \in \cG$.  

To establish the third claim of the lemma, observe that the $b$-value of any edge in $G_{\mathrm{supp}}(\mb{b}^\star)$ changes at most twice during the pruning process: If $q^\star_j \geq 1/2$, then item $j$ itself is not pruned, and the $b$ values of edges incident to $j$ may change only when the nearest ancestor of $j$ is pruned. By Lemma~\ref{lem:redistribution}, $b^{\mathrm{pruned}}_{ij}  \leq \min\{1,\; 2b^\star _{ij}\} $ for each $i \in \cA$. If $q^\star_j < 1/2$, the $b$ value of any edge from $j$ to its children becomes zero when $j$ is pruned, satisfying the claim. The $b$ value of the edge $(\pr(j) \rightarrow j)$ does not change when we prune $j$, and it may increase when the nearest ancestor of $j$ in $J$ is pruned. If so, we have $b^{\mathrm{pruned}}_{\pr(j)j}  \leq \min\{1,\; 2b^\star _{\pr(j)j}\}$.
\end{proof}

\subsection{Fractional Matching and Analysis} \label{sec:matching}

In this section, we prove Lemma~\ref{lem:fractional}, which completes the proof of Theorem \ref{thm:non-convex-main}.  

We establish Lemma~\ref{lem:fractional} by proving two inequalities (in Lemmas~\ref{lem:frac-1} and~\ref{lem:frac-2}) about the properties  of $f_{\ncvx}$ at any feasible point whose support is a forest. Lemma~\ref{lem:frac-1} shows that $f_{\ncvx}$ can be upper bounded by a linear function in $\mb{b}$ while only losing a constant factor. 
\begin{lemma}\label{lem:frac-1}
    Let $(\mb{b}, \mb{q})$ be an acyclic solution in $\mathcal{P}(\cA, \cG)$ such that every item with $q_j < 1/2$ is a leaf in $G_{\mathrm{support}}(\mb{b})$. Let $S: \cA \rightarrow 2^{\cG}$ be a function such that for each agent $i$, $S(i)$ is a subset of leaf items connected to agent $i$ in $G_{\mathrm{supp}}(\mb{b})$, and  $S(i)$ contains all children of agent $i$ with $q_j < 1/2$. 
 Then 
\begin{align*}
 \sum_{i\in \cA} w_i \left(\sum_{j \notin S(i)} b_{ij} \log{ v_{ij}} +  \sum_{j \in S(i)} b_{ij} \log\left(\sum_{j \in S(i)}v_{ij}\right)\right)  &\geq   f_{\ncvx}(\mb{b},\mb{q}) -\log 2 -\frac{1}{2e}. 
\end{align*}
\end{lemma}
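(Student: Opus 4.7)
The plan is to compute $\mathrm{LHS} - f_{\ncvx}(\mb{b},\mb{q})$ exactly, split it into two terms, and bound each piece. The key structural observation is that for $j \in S(i)$, item $j$ is a leaf of $G_{\mathrm{supp}}(\mb{b})$ with unique neighbor $i$, so $b_{ij} = q_j$ and $b_{i'j} = 0$ for $i' \neq i$ (in particular, the sets $S(i)$ are pairwise disjoint). Writing $V_i := \sum_{k \in S(i)} v_{ik}$, $Q_i := \sum_{j \in S(i)} q_j$, and $p_j := \sum_{i \in \cA} w_i b_{ij}$, a direct rearrangement in which the $\log v_{ij}$ terms for $j \in S(i)$ cancel yields
\begin{equation*}
\mathrm{LHS} - f_{\ncvx}(\mb{b},\mb{q}) \;=\; \underbrace{\sum_{i \in \cA} w_i \sum_{j \in S(i)} q_j \log\!\left(\frac{V_i\, q_j}{v_{ij}}\right)}_{=: A} \;+\; \underbrace{\sum_{j \notin \bigcup_i S(i)} p_j \log q_j}_{=: B}.
\end{equation*}

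For $A$, I would normalize within each $S(i)$ by setting $\tilde{q}_j := q_j/Q_i$ and $\tilde{v}_j := v_{ij}/V_i$, both probability distributions on $S(i)$. The inner sum equals $Q_i \log Q_i + Q_i \cdot \kl{\tilde{q}}{\tilde{v}}$, so non-negativity of KL-divergence (as used in Claim~\ref{cor:val}) gives $\sum_{j \in S(i)} q_j \log(V_i q_j / v_{ij}) \geq Q_i \log Q_i$. Applying Jensen's inequality for the convex function $x \mapsto x \log x$ with probability weights $\{w_i\}$ then yields $A \geq R \log R$, where $R := \sum_i w_i Q_i \in [0,1]$ (the upper bound uses $Q_i \leq \sum_j b_{ij} = 1$). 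For $B$, the hypothesis that every item with $q_j < 1/2$ is a leaf forces such an item into $S(i')$ for its parent $i'$; hence $q_j \geq 1/2$ whenever $j \notin \bigcup_i S(i)$, so $\log q_j \geq -\log 2$. Using $p_j = w_i q_j$ for $j \in S(i)$, one gets $\sum_{j \in \bigcup_i S(i)} p_j = \sum_i w_i Q_i = R$ and $\sum_j p_j = \sum_i w_i = 1$, hence $B \geq -(\log 2)(1-R)$.

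Combining, $\mathrm{LHS} - f_{\ncvx}(\mb{b},\mb{q}) \geq R \log R + R \log 2 - \log 2$; minimizing over $R \in [0,1]$ with interior critical point $R^\star = 1/(2e)$ yields the value $-\tfrac{1}{2e} - \log 2$, matching the claimed bound. The main obstacle is identifying the correct decomposition in the first step: using leaf-ness to collapse $b_{ij}$ into $q_j$ and writing the residual as the entropy-like term $A$ (amenable to a KL-divergence lower bound via Jensen) plus the ``large-$q$'' tail $B$. Once this reduction is in place, the two bounds combine through a one-variable optimization that produces exactly the advertised constants $\log 2$ and $\tfrac{1}{2e}$.
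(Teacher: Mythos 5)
Your proposal is correct and follows essentially the same route as the paper's proof: collapse $b_{ij}=q_j$ on the leaf items in $S(i)$, apply non-negativity of KL-divergence (the paper's Claim~\ref{cor:val}) within each bundle, use $q_j\ge 1/2$ for items outside $\bigcup_i S(i)$, and extract the constant from optimizing $x\mapsto -x\log x - x\log 2$. The only cosmetic difference is that you aggregate the bundle masses $Q_i$ across agents via Jensen and perform one global scalar optimization over $R$, whereas the paper applies the bound $-x\log x - x\log 2\le \tfrac{1}{2e}$ per agent and sums; both yield the identical constants.
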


Lemma~\ref{lem:frac-2} demonstrates how the linear function obtained in Lemma~\ref{lem:frac-1} can be used as a lower bound for the  maximum weight matching with the augmented weight function. A crucial component of the proof of this lemma is the fact that any feasible $\mb{b}$ in $\mathcal{P}(\cA, \cG)$ corresponds to a point in the matching polytope where all agents are matched.

\begin{lemma}\label{lem:frac-2}
    Let $(\mb{b}, \mb{q})$ be an acyclic solution in $\mathcal{P}(\cA, \cG)$ such that every item with $q_j < 1/2$ is a leaf in $G_{\mathrm{support}}(\mb{b})$. Let $S: \cA \rightarrow 2^{\cG}$ be a function such that for each agent $i$, $S(i)$ is a subset of leaf items connected to agent $i$ in $G_{\mathrm{supp}}(\mb{b})$, and  $S(i)$ contains all children of agent $i$ with $q_j < 1/2$. 
 Then, there exists a matching $M$ in $G_{\mathrm{supp}}(\mb{b})$ between vertices in $\cA$ and $\{\cG\backslash \cup_i\{S(i)\} \}$  such that
\begin{align}
 \sum_{i \in \cA} w_i \log\left(v_{iM(i)} + \sum_{j \in S(i)} v_{ij}\right) \geq  \sum_{i\in \cA} w_i \left(\sum_{j \notin S(i)} b_{ij} \log{ v_{ij}} +  \sum_{j \in S(i)} b_{ij} \log\left(\sum_{j \in S(i)}v_{ij}\right)\right)\,, \label{eq:eq1}
\end{align}
where $v_{iM(i)} = 0$ if agent $i$ is not matched in $M$.
\end{lemma}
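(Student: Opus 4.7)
The plan is to construct a fractional matching whose weight lower-bounds the right-hand side of \eqref{eq:eq1}, and then invoke the integrality of the bipartite matching polytope to extract the desired integral matching $M$. Let $V_i := \sum_{j \in S(i)} v_{ij}$ and let $H$ denote the bipartite subgraph of $G_{\mathrm{supp}}(\mb{b})$ on $\cA \times (\cG \setminus \cup_k S(k))$. I would first observe that if $j \in S(k)$ for some $k$, then $j$ is a leaf of $G_{\mathrm{supp}}(\mb{b})$ and its unique neighbour is $k$; hence the edge set of $H$ is exactly the set of edges $(i,j) \in G_{\mathrm{supp}}(\mb{b})$ with $j \notin S(i)$.

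Next I would set $x_{ij} := b_{ij}$ for $(i,j) \in H$ and verify that $\mb{x}$ lies in the (integral) bipartite matching polytope of $H$: the agent constraint gives $\sum_{j \notin S(i)} x_{ij} = 1 - \sum_{j \in S(i)} b_{ij} \leq 1$, while for $j \in \cG \setminus \cup_k S(k)$ the item constraint reduces to $\sum_i x_{ij} = q_j \leq 1$. I would then encode the matching objective as an augmented linear functional on fractional matchings,
\[
W(\mb{y}) := \sum_{(i,j)} y_{ij}\, w_i \log(v_{ij}+V_i) + \sum_{i} \Bigl(1 - \sum_j y_{ij}\Bigr) w_i \log V_i,
\]
so that for an integral matching $M$, $W(\mathbf{1}_M) = \sum_i w_i \log(v_{iM(i)} + V_i)$ with $v_{iM(i)}=0$ when $i$ is unmatched. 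Since $W$ is linear and the polytope is integral, there exists an integral matching $M$ in $H$ with $\sum_i w_i \log(v_{iM(i)}+V_i) = W(\mathbf{1}_M) \geq W(\mb{x})$.

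Finally, I would plug in $\mb{x}$ and simplify. Writing $\beta_i := \sum_{j \in S(i)} b_{ij}$, so $\sum_{j \notin S(i)} x_{ij} = 1-\beta_i$, yields
\[
W(\mb{x}) = \sum_i w_i \sum_{j \notin S(i)} b_{ij}\log(v_{ij}+V_i) + \sum_i w_i \beta_i \log V_i.
\]
Applying $\log(v_{ij}+V_i) \geq \log v_{ij}$ termwise and rewriting $\beta_i \log V_i = \sum_{j \in S(i)} b_{ij} \log V_i$ gives
\[
W(\mb{x}) \geq \sum_{i \in \cA} w_i\Bigl(\sum_{j \notin S(i)} b_{ij}\log v_{ij} + \sum_{j \in S(i)} b_{ij} \log V_i\Bigr),
\]
which is exactly the right-hand side of \eqref{eq:eq1}. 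The main obstacle is purely bookkeeping: one must handle the degenerate case $V_i=0$ via the convention $0 \cdot \log 0 = 0$, but in this case the corresponding term on the right-hand side vanishes (if $\beta_i = 0$) or is $-\infty$ (if $\beta_i > 0$), so the inequality either degenerates appropriately or becomes trivial.
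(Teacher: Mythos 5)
Your proof is correct and follows essentially the same route as the paper: build a fractional point in an integral bipartite matching polytope from $\mb{b}$, lower-bound the augmented weight by a linear functional, and round via integrality. The only (cosmetic) difference is that the paper encodes the bundle $S(i)$ as a dummy item $\ell_i$ with value $V_i$ and works with matchings covering all agents, whereas you allow partial matchings and absorb the bundle value into an affine offset $(1-\sum_j y_{ij})\,w_i\log V_i$ — the two encodings are equivalent, and your handling of the degenerate $V_i=0$ case is adequate.
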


Lemma~\ref{lem:frac-1} and Lemma~\ref{lem:frac-2} together establish Lemma~\ref{lem:fractional}. In the rest of this section, we provide the proofs of Lemma~\ref{lem:frac-1} and Lemma~\ref{lem:frac-2}.

\begin{proof}[Proof of Lemma~\ref{lem:frac-1}]
Let $S := \cup_i \{S(i)\}$. Recall that 
 \begin{align}
         f_{\ncvx}(\mb{b}, \mb{q}) &= \sum_{i\in \cA} w_i \sum_{j \in \cG} b_{ij} \log{v_{ij}} - \sum_{i\in \cA} w_i \sum_{j \in \cG } b_{ij} \log{q_j} \notag\\
         &= \sum_{i\in \cA} w_i \sum_{j \notin S(i)} b_{ij} \log{v_{ij}} - \sum_{i\in \cA} w_i \sum_{j \notin S(i)} b_{ij} \log{q_j} + \sum_{i\in \cA} w_i \left(\sum_{j \in S(i)} b_{ij} \log{v_{ij} } - b_{ij} \log{b_{ij}}  \right),\label{eq:e6}
    \end{align}
where the last equation follows from the fact that every item in $S(i)$ is a leaf, i.e., if $j \in S(i)$, then $b_{i'j} = 0$ for every $i' \neq i$. 

For an item $j \notin S$, we have $q_j \geq  1/2$. As a result, \begin{align}
     -\sum_{i\in \cA} w_i \, b_{ij} \log{q_j}  \leq  \log{2}\,\sum_{i\in \cA} w_i \, b_{ij}.  \label{eq:e4}
\end{align}

Plugging this bound into equation~\eqref{eq:e6} gives
\begin{align}
    f_{\ncvx}(\mb{b}, \mb{q}) \leq & \sum_{i\in \cA} w_i \sum_{j \notin S(i)} b_{ij} \log{v_{ij}} -  \sum_{i\in \cA}  w_i \sum_{j \notin S(i)} b_{ij} \log{2} + \sum_{i\in \cA} w_i \left(\sum_{j \in S(i)} b_{ij} \log{v_{ij} } - b_{ij} \log{b_{ij}} \right).\label{eq:e7}
    \end{align}

As $\mb{b} \in \mathcal{P}(\cA, \cG)$, we have $ \sum_{j \notin S(i)} b_{ij} = 1- \sum_{j \in S(i)} b_{ij} $
for every agent $i$. Substituting this in equation~\eqref{eq:e7} yields
\begin{align}
         f_{\ncvx}(\mb{b}, \mb{q}) 
         &\leq  \sum_{i\in \cA} w_i \sum_{j \notin S(i)} b_{ij} \log{v_{ij}}  + \sum_{i\in \cA} w_i \log{2}+ \sum_{i\in \cA} w_i \left(\sum_{j \in S(i)} b_{ij} \log{v_{ij} } - b_{ij} \log{b_{ij}} - b_{ij} \log{2} \right) \notag\\
         &= \sum_{i\in \cA} w_i \sum_{j \notin S(i)} b_{ij} \log{v_{ij}}  +  \log{2}+ \sum_{i\in \cA} w_i \left(\sum_{j \in S(i)} b_{ij} \log{v_{ij} } - b_{ij} \log{b_{ij}} - b_{ij} \log{2} \right), \label{eq:ep8}
\end{align}
where the last equation follows from $\sum_i w_i = 1$.

For each agent $i \in \cA$, Claim~\ref{cor:val} implies that
\begin{align*}
    \sum_{j \in S(i)} b_{ij} \log{v_{ij}} - b_{ij} \log{b_{ij}}  \leq \sum_{j\in S(i)}b_{ij}\log\left(\sum_{j\in S(i)} v_{ij}\right) - \sum_{j\in S(i)} b_{ij}\log\left(\sum_{j\in S(i)} b_{ij}\right) .
\end{align*}

So, for any agent $i$, \begin{align}
    \sum_{j \in S(i)} &b_{ij} \log{v_{ij}} - b_{ij} \log{b_{ij}} - b_{ij} \log{2} \notag\\
    &\leq \sum_{j\in S(i)} b_{ij}\log\left(\sum_{j\in S(i)} v_{ij}\right) - \sum_{j\in S(i)} b_{ij}\log\left(\sum_{j\in S(i)} b_{ij}\right)- \sum_{j\in S(i)} b_{ij}\log{2}  \notag\\
    &\leq \sum_{j\in S(i)} b_{ij}\log\left(\sum_{j\in S(i)} v_{ij}\right) + \frac{1}{2e}\,, \label{eq:ep9}
\end{align}
where the last inequality follows from  $-x\log(x) -x\log{2}\leq 1/(2e)$ for all $x \geq 0$ applied to $x = \sum_{j\in S(i)} b_{ij}$.

Substituting equation \eqref{eq:ep9} in equation~\eqref{eq:ep8}, we get
\begin{align*}
    f_{\ncvx}(\mb{b}, \mb{q}) &\leq  \sum_{i\in \cA} w_i \sum_{j \notin S(i)} b_{ij} \log{v_{ij}} + \log{2} + \sum_{i\in \cA}w_i \left(\sum_{j \in S(i)} b_{ij} \log\left(\sum_{j \in S(i)}v_{ij}\right) + \frac{1}{2e} \right) \\
    &= \sum_{i\in \cA} w_i \left(\sum_{j \notin S(i)} b_{ij} \log{ v_{ij}} +  \sum_{j \in S(i)} b_{ij} \log\left(\sum_{j \in S(i)}v_{ij}\right) \right) +\log{2} + \frac{1}{2e}\,,
\end{align*}
where the last inequality again follows from $\sum_{i\in \cA} w_i = 1$.
\end{proof}

\begin{proof}[Proof of Lemma~\ref{lem:frac-2}] In this proof, we will analyze a matching that either assigns the bundle $S(i)$ to an agent or a single item $j\notin \cup_i \{S(i) \}$. Observe that the algorithm clearly finds an assignment with a larger objective as
 \begin{equation*}
      \log\left(v_{iM(i)} + \sum_{j \in S(i)} v_{ij}  \right) \geq \max\left\{\log v_{iM(i)}, \,\log\left(\sum_{j \in S(i)} v_{ij}\right) \right\}.
 \end{equation*}

So, for each agent $i \in \cA$, we create a new leaf item $\ell_i$ with $v_{i \ell_i} = \sum_{j \in S(i)} v_{ij}$ corresponding to the set of items in $S(i)$. Define $S := \cup_i \{S(i)\}$ and $\widetilde{\cG} := \{\cG \backslash S\} \cup \{\ell_i\}_{i \in \cA}$. We show that the maximum weight matching in the bipartite graph $(\cA, \widetilde{\cG})$ suffices to prove the lemma. As the matching polytope is integral, it is enough to demonstrate the existence of a fractional matching of a large value. 

 Using $\mb{b}$, we define fractional assignment variables $\mb{x}$ as follows:
\begin{align*}
    x_{ij} &:= b_{ij} \quad \forall i \in \cA, j \in \{\cG\backslash L\} \\
    x_{i \ell_i} &:= \sum_{j \in S(i)} b_{ij} \quad \forall i \in \cA.
\end{align*}
The L.H.S. of equation~\eqref{eq:eq1} can be stated in terms of $\mb{x}$ as
\begin{equation}
\sum_{i\in \cA} w_i \left(\sum_{j \notin S(i)} b_{ij} \log{ v_{ij}} + \sum_{j \in S(i)} b_{ij} \log \left(\sum_{j \in S(i)}v_{ij}\right)\right) =    \sum_{i\in \cA} \sum_{j \in \widetilde{\cG}}x_{ij} \, w_{i}\log{ v_{ij} }. \label{eq:x1}
\end{equation}
Observe that $\mb{x}$ lies in the convex hull of matchings between agents $\cA$ and items $\widetilde{\cG}$ in which every agent is matched as $\mb{x}$ satisfies the following properties:
\begin{align*}
    \sum_{j\in \widetilde{\cG}} x_{ij}  = \sum_{j\notin S(i)} b_{ij} +  \sum_{j\in S(i)} b_{ij} &= 1 \quad \forall i \in \cA\\
   \sum_{i\in \cA} x_{ij} &\leq 1 \quad \forall j \in \widetilde{\cG}.
\end{align*}
Here, for item $j\notin S$, the second inequality is inherited from the feasibility of $\mb{b}$. The constraint for $\ell_{i'}$ for some $i'\in \cA$ is implied by the constraint $\sum_{i\in \cA} x_{ij}= x_{i'j}=\sum_{j \in S(i)} b_{ij} \leq\sum_{j \in \cG} b_{ij}  \leq 1$, where the last constraint again follows from the feasibility of $\mb{b}$.

Using the integrality of the matching polytope, there exists a matching $\widetilde{M}: \cA\rightarrow \widetilde{\cG}$ such that \begin{equation}
\sum_{i\in \cA} \sum_{j \in \widetilde{\cG}}x_{ij} \, w_{i}\log{ v_{ij} } \leq  \sum_{i\in \cA}w_i\, \log{v_{i\widetilde{M}(i)}} . \label{eq:x2}
\end{equation}

Now consider a matching $M: \cA \rightarrow \cG$ with $M(i) = \emptyset$ if $\widetilde{M}(i) = \ell_i$, and $M(i) = \widetilde{M}(i)$ otherwise. Then
\begin{equation}
     \sum_{i\in \cA}w_i\, \log{v_{i\widetilde{M}(i)}}  \leq  \sum_{i\in \cA}w_i\, \log\left(v_{i M(i)} + \sum_{j \in S(i)} v_{ij} \right).\label{eq:x3}
\end{equation}
Then equations~\eqref{eq:x1},~\eqref{eq:x2}, and~\eqref{eq:x3} together imply  \allowdisplaybreaks
\begin{equation*}
 \sum_{i\in \cA}w_i\, \log\left(v_{i M(i)} + \sum_{j \in S(i)} v_{ij} \right) \geq \sum_{i\in \cA} w_i \left(\sum_{j \notin S(i)} b_{ij} \log{ v_{ij}} + \sum_{j \in S(i)} b_{ij} \log \left(\sum_{j \in S(i)}v_{ij}\right)\right).
\end{equation*}\end{proof}

\section{Conclusion and Open Questions}
This paper introduces a convex and a non-convex relaxation for the weighted (asymmetric) Nash Social Welfare problem to give an $O(\exp\left(2\kl{\mb{w}}{\mb{u}}\right))$-approximation. Both of these relaxations play a crucial role in obtaining the approximation algorithm for the problem. There are two natural open questions. First, is the factor $\exp\left(2\kl{\mb{w}}{\mb{u}}\right)$ necessary in the approximation guarantee? Equivalently, is it possible to obtain a constant factor approximation for the weighted Nash Social Welfare problem? It is important to emphasize that we lose the $\exp\left(2\kl{\mb{w}}{\mb{u}}\right)$ when relating the objectives of the two relaxations; we only lose a constant factor when rounding the non-convex relaxation. A direct approach may exist to solve the non-convex formulation that gives an improved approximation guarantee.

The second question is whether the techniques introduced in this work generalize to more general valuation functions, particularly submodular valuations for the weighted Nash Social Welfare problem. While there are constant factor approximation algorithms for unweighted Nash Social Welfare with submodular valuations, obtaining anything better than $O(nw_{\max})$-approximation for the weighted variant of the problem remains an open question.

\newpage
\bibliographystyle{alpha}
\bibliography{references}

\appendix
\section{Omitted Proofs and Lemmas} \label{sec:proofs}

\begin{proof}[Proof of Lemma~\ref{lem:relaxation}]
Let $\sigma: \cG \rightarrow \cA$ be the optimal assignment of an instance of NSW, $(\cA, \cG, \mb{v}, \mb{w})$. 
For each agent $i \in \cA$, define $V_i = \sum_{j\in \sigma^{-1}(i)} v_{ij}$. Using $\sigma$, we define a vector $\mb{b} \in \mathcal{P}(\cA, \cG)$ as
\begin{equation*}
     b_{ij} := \begin{cases}
        \frac{v_{ij}}{V_i} & \text{ if } \sigma(j) = i \\
        0 & \text{ otherwise}.
    \end{cases}
\end{equation*}

It is easy to verify that $\sum_{i \in \cA} b_{ij} \leq 1$ for each $j \in \cG$ and $\sum_{i \in \cG} b_{ij} = 1$ for each $i \in \cA$.
We will now show that $f_{\cvx}(\mb{b})$ and $f_{\ncvx}(\mb{b})$ are both equal to $\nsw(\sigma)$.
\begin{align*}
    f_{\cvx}(\mb{b}) &= \sum_{j\in \cG}\frac{w_{\sigma(j)}v_{\sigma(j)j}}{V_{\sigma(j)}} \log{v_{\sigma(j)j}} - \sum_{j \in \cG}  \frac{w_{\sigma(j)}v_{\sigma(j)j}}{V_{\sigma(j)}} \log\left( \frac{w_{\sigma(j)}v_{\sigma(j)j}}{V_{\sigma(j)}} \right) + \sum_{i \in \cA} w_i \log{w_i} \\
    &= \sum_{j \in \cG}  \frac{w_{\sigma(j)}v_{\sigma(j)j}}{V_{\sigma(j)}}\log \left(\frac{V_{\sigma(j)}}{w_{\sigma(j)}} \right) + \sum_{i \in \cA} w_i \log{w_i} \\
    &= \sum_{i \in \cA} w_i \sum_{j \in \sigma^{-1}(i)} \frac{v_{ij}}{V_i} \log \left(\frac{V_{i}}{w_{i}}\right) + \sum_{i \in \cA} w_i \log{w_i} \\
    &\stackrel{(i)}{=} \sum_{i \in \cA} w_i \log \left(\frac{V_{i}}{w_{i}}\right) + \sum_{i \in \cA} w_i \log{w_i} 
    =  \sum_{i \in \cA} w_i \log{V_i} = \nsw(\sigma)\,,
    \end{align*}
    where $(i)$ follows from definition of $V_i$. 

    Similarly, we have
    \begin{align*}
    f_{\ncvx}(\mb{b}) &= \sum_{j\in \cG}\frac{w_{\sigma(j)}v_{\sigma(j)j}}{V_{\sigma(j)}} \log{v_{\sigma(j)j}} - \sum_{j \in \cG}  \frac{w_{\sigma(j)}v_{\sigma(j)j}}{V_{\sigma(j)}} \log\left( \frac{v_{\sigma(j)j}}{V_{\sigma(j)}} \right) \\
    &= \sum_{j \in \cG}  \frac{w_{\sigma(j)}v_{\sigma(j)j}}{V_{\sigma(j)}}\log{V_{\sigma(j)}} = \sum_{i \in \cA} w_i \sum_{j \in \sigma^{-1}(i)} \frac{v_{ij}}{V_i} \log {V_{i}}\\
    &=  \sum_{i \in \cA} w_i \log{V_i} = \nsw(\sigma).
\end{align*}

For the second claim in the lemma, when $w_i = 1/n$ for each $i$, for any $\mb{b} \in \mathcal{P}(\cA, \cG)$, we have
\begin{align*}
    f_{\cvx}(\mb{b}) &= \frac{1}{n}\sum_{i\in \cA} \sum_{j\in \cG} b_{ij}\log{v_{ij}} - \frac{1}{n}\sum_{i\in \cA} \sum_{j\in \cG} b_{ij}\log\left(\frac{\sum_{i\in \cA} b_{ij}}{n}\right) - \log{n} \\
    &= \frac{1}{n}\sum_{i\in \cA} \sum_{j\in \cG} b_{ij}\log{v_{ij}} - \frac{1}{n}\sum_{i\in \cA} \sum_{j\in \cG} b_{ij}\log\left(\sum_{i\in \cA} b_{ij}\right) +\frac{1}{n}\sum_{i\in \cA} \sum_{j\in \cG} b_{ij}\log{n} - \log{n}  \\
    &= \frac{1}{n}\sum_{i\in \cA} \sum_{j\in \cG} b_{ij}\log{v_{ij}} - \frac{1}{n}\sum_{i\in \cA} \sum_{j\in \cG} b_{ij}\log\left(\sum_{i\in \cA} b_{ij}\right),
\end{align*}
where we used $\sum_{j \in \cG} b_{ij} = 1$ for every $i$ in the last inequality.
Similarly, substituting $w_i = 1/n$ for each $i$ in $f_{\ncvx}$ completes the proof.
\end{proof}

\begin{proof}[Proof of Lemma~\ref{lem:diff}]
We will show that 
\begin{equation*}
    f_{\cvx}(\mb{b})-f_{\ncvx}(\mb{b})= \kl{\mb{w}}{\mb{u}}- \kl{\mu}{\theta}\,,
\end{equation*}

where $\mu, \theta$ are two probability distributions on $\cG$ given by
\begin{equation*}
   \mu(j) = \sum_{i \in \cA} w_i \, b_{ij} \quad \text{and} \quad \theta(j) = \frac{\sum_{i \in \cA} b_{ij}}{n}. 
\end{equation*}

Using $\sum_{i\in \cA} w_i = 1$ and $\sum_{j \in \cG} b_{ij} = 1$ for each $i \in \cA$, one can verify that $\sum_{j \in \cG} \mu(j) = 1 = \sum_{j \in \cG} \theta(j)$.

Expanding the difference between the functions gives
\begin{align*}
    f_{\cvx}(\mb{b})-f_{\ncvx}(\mb{b}) &= \sum_{i \in \cA} w_i \log{w_i}-\sum_{j\in \cG} \sum_{i\in \cA} w_i \, b_{ij} \log\left(\sum_{i\in \cA} w_i\, b_{ij}\right) +  + \sum_{i,j}w_i \, b_{ij} \log\left(\sum_{i\in \cA}  b_{ij}\right) \\
    &= \sum_{i \in \cA} w_i \log{w_i} -\sum_{j\in \cG} \sum_{i\in \cA} w_i \, b_{ij} \log\left(\frac{\sum_{i \in \cA} w_i\, b_{ij}}{\sum_{i\in \cA}  b_{ij}}\right) \\
    &= \sum_{i \in \cA} w_i \log{w_i} + \sum_{j\in \cG} \sum_{i\in \cA} w_i \, b_{ij} \log{n} -\sum_{j \in \cG} \mu(j) \log\left(\frac{\mu(j)}{\theta(j)}\right) \\
    &= \sum_{i \in \cA} w_i \log(n w_i)- \sum_{j\in \cG}\mu(j) \log\left(\frac{\mu(j)}{\theta(j)}\right)  \tag{using $\sum_{j} b_{ij} = 1$}\\
    &=\kl{\mb{w}}{\mb{u}} - \kl{\mu}{\theta}.
\end{align*}

As $ D_{KL}(\mu, \theta) \geq 0$, the above equation implies 
\begin{equation*}
    f_{\cvx}(\mb{b})-f_{\ncvx}(\mb{b}) \leq \kl{\mb{w}}{\mb{u}} .
\end{equation*}

For the lower bound, it suffices to show that $\kl{\mu}{\theta} \leq \kl{\mb{w}}{\mb{u}}$. To see this, we expand the definition:
\begin{align*}
    \kl{\mu}{\theta}
    &= \sum_{j\in\cG} \left(\sum_{i\in \cA} w_ib_{ij}\right) \log\left(\frac{n \sum_{i\in \cA} w_ib_{ij}}{\sum_{i\in\cA} b_{ij}} \right) \\
    &= \log(n) + \sum_{j\in\cG} \left(\sum_{i\in \cA} w_ib_{ij}\right) \log\left(\frac{\sum_{i\in \cA} w_ib_{ij}}{\sum_{i\in\cA} b_{ij}} \right) \\
    &= \log(n) + \sum_{j \in \cG} \left(\sum_{i \in A} b_{ij}\right) \left(\sum_{i \in \cA} \frac{b_{ij}}{\sum_{i \in A} b_{ij}} \cdot w_i \right) \log \left(\sum_{i \in \cA} \frac{b_{ij}}{\sum_{i \in A} b_{ij}} \cdot w_i \right)\\
    &\leq \log(n) + \sum_{j \in \cG} \left(\sum_{i \in A} b_{ij}\right) \left(\sum_{i \in \cA} \frac{b_{ij}}{\sum_{i \in A} b_{ij}} \right) w_i \log(w_i) \\
     &= \log(n) + \sum_{j \in \cA}  w_i \log(w_i) \sum_{j\in\cG} b_{ij} \\
    &= \log(n) + \sum_{i\in \cA} w_i\log(w_i) = \kl{\mb{w}}{\mb{u}}.
\end{align*}
Here, the only inequality uses the convexity of $x\log(x)$, and the last equality follows from the feasibility of $\mb{b}$.
\end{proof}

\begin{proof}[Proof of Lemma \ref{lem:redistribution}]
For $x \in \cA \cup \cG$, let $C(x)$ denote the children of node $x$ in $F$ and let $T(x)$ denote the sub-tree rooted at node $x$. We will prove this lemma by induction on the height of agent $i$, building $(\mb{b}^{\delta}, \mb{q}^{\delta}) \in \mathcal{P}(\cA, \cG)$ in the process.

  For the base case, assume agent $i$ has height $1$, i.e., $T(i)$ consists of only leaf item nodes that are the children of node $i$.
  We define a new vector $\mb{b}^\delta$ with $b^\delta_{i'j'} = b_{i'j'}$ for any $i' \neq i$ and $j' \in \cG$.
  Note that setting $b^{\delta}_{i j} = b_{i j} - \delta$ and $q^{\delta}_{j} = q_{j} - \delta$ only violates the Agent constraint for agent $i$. So we will update the values of $\mb{b}$ in $T(i)$ to make the solution feasible.

  By the feasibility of $\mb{b}$, $b_{i j} + \sum_{k\in C(i)} b_{i k} = 1$, and for every item node $k \in C(i)$, $q_{k} = b_{i k} < 1$. Using Lemma~\ref{lem:rebalance} with $\alpha = b_{ij}$ and $\beta_{k} = b_{ik}$, there exist $\delta_k$ for each $k \in C(i)$ such that
    \begin{align*}
         b_{ij} - \delta + \sum_{k \in C(i)} b_{i k}(1+\delta_k) &= 1 \\
        b_{i k} (1+\delta_k) &\leq 1 \quad \forall k \in  C(i)\\
       0\leq \delta_k &\leq 1 \quad \forall k \in  C(i).
    \end{align*}

So, for each $k \in C(i)$, we set $ b^{\delta}_{i k} = b_{i k}(1 + \delta_k)$. Note that $b^{\delta}_{i k} \leq 1$, and as $\delta_k \leq 1$, we have
\begin{equation*}
    b^{\delta}_{i k} = b_{i k}(1 + \delta_k) \leq 2b_{ik}.
\end{equation*}
As every item in $C(i)$ is a leaf, we also have
\begin{equation*}
    q_k \leq q^{\delta}_k = b^{\delta}_{i k} = b_{i k} (1+\delta_k)\leq 1
\end{equation*}
for each item $k \in C(i)$. The Agent constraint for agent $i$ satisfies
\begin{equation*}
   \sum_{k \in  C(i)} b^{\delta}_{i k} =  b_{i j} -\delta + \sum_{k \in  C(i)} b_{i k} (1+\delta_k) = 1. 
\end{equation*}
     
Therefore, $\mb{b}^{\delta} \in \mathcal{P}(\cA, \cG)$ and $b^{\delta}_{i'j'} \leq \min\{1, 2b_{ij'}\}$ for each $ j' \in T(i)$.

For the induction hypothesis, assume that the lemma is true whenever the height of agent $i$ is at most $\ell-1$ for some integer $\ell > 1$. 
We now show that the statement also holds when the height of agent $i$ is $\ell$.

Again, setting $b^{\delta}_{i j} = b_{i j} - \delta$ and $q^{\delta}_{j} = q_{j} - \delta$ violates the Agent constraint for agent $i$. Similar to the base case, we can find $\delta_k \in (0,1)$ for each $k \in C(i)$ such that $b_{i k} (1+\delta_k) \leq 1$ and
 \begin{equation*}
      b_{i j} - \delta + \sum_{k \in C(i)} b_{ik}(1+\delta_k) = 1.
 \end{equation*}

Setting $b^{\delta}_{i k} = b_{i k} (1+\delta_k)$ for each $k\in C(i)$ will ensure that $\mb{b}^{\delta}$ satisfies the Agent constraint for agent $i$. However, this can violate the Item constraint for some item $k \in C(i)$, as $q^{\delta}_k = q_k + \delta_{k} b_{ik}$.  So, we inductively update the values of $\mb{b}^\delta$ and $\mb{q}^\delta$ for the sub-tree rooted at item $k$ for which such a violation occurs.

Consider an item $k \in C(i)$ such that  $q^{\delta}_k = q_k + \delta_{k} b_{ik} > 1$. So we decrease $b_{i'k}$ for each $i' \in C(k)$ to ensure that $q_k^{\delta}$ is at most $1$ as follows. Define $\gamma := q_k + \delta_{k} b_{ik} - 1$. 
Using the fact that $q_k = \sum_{i' \in C(k)} b_{i'k} + b_{ik} $, we bound $\gamma$ as follows.
\begin{align*}
    \gamma &= q_k + \delta_k b_{ik} - 1 = \sum_{i' \in C(k)} b_{i'k} + b_{ik} + \delta_k b_{ik} - 1 \\
    &\leq \sum_{i' \in C(k)} b_{i'k}. \tag{using $b_{ik}(1+\delta_k) \leq 1$}
\end{align*}
 Therefore, there exist numbers $\gamma_{i'}\geq 0$ for each $i' \in C(k)$ such that $\gamma_{i'} \leq b_{i'k}$ and
 $\sum_{i'\in C(k)} \gamma_{i'} =  \gamma$.

We would like to update $b^{\delta}_{i'k} = b_{i'k} - \gamma_{i'}$ for each $i' \in C(k)$, but this violates the Agent constraint for agent $i'$ when $\gamma_{i'} > 0$. We inductively update the solution for subtree $T(i')$ as follows.

First, note that $q^{\delta}_k = 1 \geq q_k$ after this update, as shown below.
\begin{align*}
   q^{\delta}_k &= b^{\delta}_{i k} + \sum_{i' \in C(k)} b^{\delta}_{i' k} = b_{i k}(1+\delta_{k}) + \sum_{i'\in C(k)} (b_{i'k}-\gamma_{i'})  = q_k + \delta_i b_{ik} - \sum_{i'\in C(k)} \gamma_{i'} \\
   &= 1 - \gamma + \sum_{i'\in C(k)} \gamma_{i'} \tag*{(by definition of $\gamma$)} \\ 
   &= 1 . \tag*{(as $\sum_{i'\in C(k)} \gamma_{i'} = \gamma$)}
\end{align*}
So now, $(\mb{b}^\delta, \mb{q}^\delta)$ only violates Agent constraints for agents in $C(k)$.

We claim that for each agent $i' \in C(k)$
\begin{equation}
      \gamma_{i'} \leq \min\{b_{i' k}, \; 1-b_{i' k}\}. \label{eq:gam}
\end{equation}
Before proving this inequality, we use it to complete the proof.

 Using the induction hypothesis, for each $i' \in C(k)$, there exists feasible $(\mb{b}^{\gamma_{i'}}, \mb{q}^{\gamma_{i'}})$ which differs from $(\mb{b}^\delta,\mb{q}^\delta)$ only in the sub-tree rooted at $i'$ such that for any $\hat{j} \in T(i')$, 
 \begin{equation*}
    q^{\gamma_{i'}}_{\hat{j}} \geq q^{\delta}_{\hat{j}} = q_{\hat{j}} .
\end{equation*} 
 and for any $\hat{i}, \hat{j} \in T(i')$, 
\begin{equation*}
    b^{\gamma_{i'}}_{\hat{i}\hat{j}} \leq \min\{1, 2 \cdot b^{\delta}_{\hat{i}, \hat{j}}\} = \min\{1, 2\cdot b_{\hat{i}, \hat{j}}\}.
\end{equation*} 

So for each $i' \in C(k)$ with $\gamma_{i'} > 0$, we set 
$b^{\delta}_{\hat{i} \hat{j}} = b^{\gamma_{i'}}_{\hat{i}\hat{j}}$ for every $\hat{i}, \hat{j} \in T(i')$ to get the required solution.

We now only need to establish equation~\eqref{eq:gam}. By definition, $\gamma_{i'} \leq b_{i'k}$ for each $i' \in C(k)$.  Additionally, $\gamma_{i'} \leq \gamma$, so it suffices to show that $\gamma \leq 1-b_{i'k}$ for every $i' \in C(k)$. Recall that
\begin{align*}
    \gamma &= q_k + \delta_k b_{ik} - 1 \\
    &\stackrel{(i)}{\leq} \delta_k b_{ik} \stackrel{(ii)}{\leq} b_{ik} \stackrel{(iii)}{\leq}  q_k - b_{i' k} \stackrel{(iv)}{\leq} b_{i'k}.
\end{align*}
Here, $(i)$ and $(iv)$ follow from $q_k \leq 1$, $(ii)$ follows from $\delta_k \leq 1$, and $(iii)$ holds as $b_{ik} + \sum_{i'\in C(k)} b_{i'k} = q_k$.
This completes the proof of \eqref{eq:gam}. 
\end{proof}

\begin{lemma} \label{lem:rebalance}
Let $\alpha > 0$ and $\beta_1, \ldots, \beta_k > 0$ with $\alpha + \sum_{j=1}^k \beta_i = 1$. For any $0 < \delta \leq \min \{\alpha,\, 1-\alpha\} $, there exist real numbers $\delta_1, \ldots, \delta_k$ such that
    \begin{align}
         \alpha - \delta + \sum_{j\in [k]} \beta_j(1+\delta_j)  &= 1 \label{eq:syst0}\\
        \beta_{j} (1+\delta_j) &\leq 1 \quad \forall j \in [k] \notag\\
       0 \leq \delta_j &\leq 1  \quad \forall j \in [k]. \notag
    \end{align}
\end{lemma}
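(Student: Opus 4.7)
Using the normalization $\alpha + \sum_{j=1}^k \beta_j = 1$, equation~\eqref{eq:syst0} simplifies to
\[
\sum_{j=1}^k \beta_j \delta_j \;=\; \delta.
\]
So the task reduces to finding $\delta_j \in [0,1]$ satisfying $\beta_j(1+\delta_j) \leq 1$ and the above linear equation. Note that since $\alpha > 0$ we have $\beta_j < 1$ for every $j$, and the per-coordinate upper bound is $\delta_j \leq u_j := \min\bigl\{1,\;(1-\beta_j)/\beta_j\bigr\}$; equivalently $\beta_j u_j = \min\{\beta_j,\, 1-\beta_j\}$. Thus it suffices to show that the linear functional $L(\delta_1,\dots,\delta_k) = \sum_j \beta_j \delta_j$ on the box $\prod_j [0,u_j]$ attains every value in $[0,\delta]$. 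Since $L$ is continuous, $L(\mathbf{0}) = 0$, and $L(u_1,\dots,u_k) = \sum_j \min\{\beta_j, 1-\beta_j\}$, the intermediate value theorem (applied, say, along the line segment from the origin to the maximizing corner) reduces the problem to the single inequality
\[
\sum_{j=1}^k \min\{\beta_j,\; 1-\beta_j\} \;\geq\; \min\{\alpha,\; 1-\alpha\}.
\]

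The main (and only) obstacle is this combinatorial inequality, which I plan to verify by a short case analysis on whether $\alpha \leq 1/2$ or $\alpha > 1/2$. If $\alpha > 1/2$, then $\beta_j \leq 1 - \alpha < 1/2$ for each $j$, so $\min\{\beta_j,1-\beta_j\} = \beta_j$, and the left-hand side equals $\sum_j \beta_j = 1-\alpha = \min\{\alpha,1-\alpha\}$. If $\alpha \leq 1/2$ and all $\beta_j \leq 1/2$, then the left-hand side equals $1-\alpha \geq 1/2 \geq \alpha$. Otherwise at most one index $j^\star$ has $\beta_{j^\star} > 1/2$ (since $\sum_j \beta_j = 1-\alpha \leq 1$), and then
\[
\sum_{j} \min\{\beta_j,1-\beta_j\} \;=\; (1-\beta_{j^\star}) + \sum_{j\neq j^\star}\beta_j \;=\; 2 - 2\beta_{j^\star} - \alpha \;\geq\; \alpha,
\]
since $\beta_{j^\star} \leq 1-\alpha$.

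Finally, one can make the existence argument fully constructive (which may be preferable to citing IVT) via a greedy sweep: process the indices in any order, at index $j$ set $\delta_j = \min\{u_j,\, (\delta - \sum_{i<j}\beta_i\delta_i)/\beta_j\}$, stopping as soon as the partial sums reach $\delta$. By the inequality above, this process necessarily attains the target $\delta$ before exhausting the coordinates, and by construction every $\delta_j$ lies in $[0,u_j] \subseteq [0,1]$, which furnishes the required tuple.
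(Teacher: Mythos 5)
Your proof is correct, and it takes a genuinely different route from the paper. The paper argues by contradiction via Farkas' Lemma: it writes down the alternative system (multipliers $\eta$, $\gamma_j$, $\lambda_j$), sums the dual constraints, and uses $\delta \leq \alpha$ and $\delta \leq 1-\alpha$ to contradict the strict inequality in the alternative. You instead observe that after substituting $\alpha + \sum_j \beta_j = 1$, the system is just ``hit the value $\delta$ with the linear functional $\sum_j \beta_j \delta_j$ on the box $\prod_j [0, u_j]$,'' whose range is exactly $\bigl[0, \sum_j \min\{\beta_j, 1-\beta_j\}\bigr]$; so everything reduces to the single scalar inequality $\sum_j \min\{\beta_j, 1-\beta_j\} \geq \min\{\alpha, 1-\alpha\}$, which you verify by a clean three-case analysis (all cases check out, including the observation that at most one $\beta_{j^\star}$ can exceed $1/2$ and that $\beta_{j^\star} \leq 1-\alpha$ closes the last case). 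Your approach buys two things: it is more elementary (no theorem of the alternative), and the greedy sweep at the end is explicitly constructive, which is relevant here since Lemma~\ref{lem:redistribution} actually needs to compute the $\delta_j$'s inside the rounding algorithm --- the paper's Farkas argument only certifies feasibility and implicitly defers to LP solving for the construction. The paper's duality argument, by contrast, avoids any case analysis. One small presentational point: when you invoke the intermediate value theorem along the segment to the corner $(u_1,\dots,u_k)$, it is worth saying explicitly that $L$ restricted to that segment is $t \mapsto t\sum_j \beta_j u_j$, so surjectivity onto $[0,\delta]$ is immediate; but this is cosmetic, not a gap.
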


\begin{proof}
    As the above system contains only linear constraints in $\delta$, we use Farkas' Lemma to show the existence of $\{\delta_j\}_{j=1}^k$. Re-arranging the constraints gives
    \begin{align}
         \sum_{j\in [k]} \beta_j \delta_j  &= \delta \label{eq:sys1}\\
        \beta_{j} \delta_j  &\leq 1 - \beta_j \quad \forall j \in [k] \notag\\
      0 \leq \delta_j &\leq 1  \quad \forall j \in [k] \notag
    \end{align}

If there do not exist real numbers $\{\delta_j\}_{j=1}^k$ satisfying~\eqref{eq:sys1}, then by Farkas' Lemma, there exist real numbers $\eta, \{\gamma_j\}_{j=1}^k, \{\lambda_j\}_{j=1}^k$ such that \begin{align}
    \beta_j\eta + \beta_j \gamma_j + \lambda_j &\geq 0 \quad \forall j \in [k] \label{eq:e1}\\
    \gamma_j, \lambda_j &\geq 0 \notag\\
    \delta \eta + \sum_{j \in [k]} (1-\beta_j) \gamma_j + \sum_{j\in [k]}  \lambda_j &< 0 \label{eq:e2}
\end{align}
Adding equation~\eqref{eq:e1} for all $j \in [k]$, we get
\begin{equation*}
     \eta\sum_{j\in [k]} \beta_j+ \sum_{j\in [k]} \beta_j \gamma_j + \sum_{j\in [k]} \lambda_j \geq 0.
\end{equation*}
 Since $\alpha + \sum_{j \in [k]} \beta_j = 1$, this implies $\eta(1-\alpha)+ \sum_{j\in [k]} \beta_j \gamma_j + \sum_{j\in [k]} \lambda_j \geq 0 $. In addition, since $\beta_i > 0$, we also have $\alpha < 1$. Therefore, dividing by $1-\alpha$ and re-arranging gives  \begin{equation}
    \sum_{j\in [k]} \frac{\beta_j \gamma_j}{1-\alpha} + \sum_{j\in [k]} \frac{\lambda_j}{1-\alpha} \geq -\eta .\label{eq:e21}
\end{equation}
On the other hand, equation~\eqref{eq:e2} implies
\begin{equation}
    -\eta > \sum_{j\in [k]} \frac{(1-\beta_j)\gamma_j}{\delta} + \sum_{j\in [k]} \frac{\lambda_j}{\delta}. \label{eq:e22}
\end{equation}

On comparing equations~\eqref{eq:e21} and~\eqref{eq:e22}, we obtain
\begin{equation}
    \sum_{j\in [k]} \frac{\beta_j \gamma_j}{1-\alpha} + \sum_{j\in [k]} \frac{\lambda_j}{1-\alpha} > \sum_{j\in [k]} \frac{(1-\beta_j)\gamma_j}{\delta} + \sum_{j\in [k]} \frac{\lambda_j}{\delta}. \label{eq:e9}
\end{equation}
We will now derive a contradiction to~\eqref{eq:e9}.

As $\delta \leq 1-\alpha$ , we have $1/(1-\alpha) \leq 1/\delta$, and therefore,
\begin{align}
   \sum_{j \in [k]} \frac{\lambda_j}{1-\alpha} \leq \sum_{j \in [k]} \frac{\lambda_j}{\delta}\,, \label{eq:e10}
\end{align}
where we use the fact that $\lambda_j > 0$ for all $j \in [k]$.

In addition, for any $j \in [k]$ \begin{align}
    \frac{\beta_j}{1-\alpha} - \frac{(1-\beta_j)}{\delta} \leq \frac{\beta_j}{1-\alpha} - \frac{(1-\beta_j)}{\alpha}
    = \frac{\alpha+\beta_j-1}{\alpha(1-\alpha)} \leq 0. \label{eq:e11}
\end{align}
Here, the first inequality follows from $\delta \leq \alpha$, and the last inequality follows from the facts that $\alpha + \sum_{j\in [k]} \beta_j = 1$ and $\alpha, \beta_j > 0$.

On adding equation~\eqref{eq:e10} and equation~\eqref{eq:e11} for all $j \in [k]$, we obtain
\begin{equation*}
     \sum_{j\in [k]} \frac{\beta_j \gamma_j}{1-\alpha} + \sum_{j\in [k]} \frac{\lambda_j}{1-\alpha} \leq   \sum_{j\in [k]} \frac{(1-\beta_j)\gamma_j}{\delta} + \sum_{j \in [k]} \frac{\lambda_j}{\delta} \, ,
\end{equation*}
which contradicts equation~\eqref{eq:e9}. Therefore, there exist real numbers $\{\delta_j\}_{j=1}^k$ satisfying~\eqref{eq:syst0}.
\end{proof}

\section{Relationships Between the Mathematical Programs}\label{sec:mathematical_programs}
This section provides the proof of Theorem \ref{thm:sym-eq} by establishing a relationship between two natural convex programming relaxations for the unweighted Nash Social Welfare problem. We then build upon this relationship to derive \nameref{eq:b-cvx} for the weighted Nash Social Welfare problem.

To ensure that the optimum values of all the convex programs mentioned below are bounded, we assume that the instance of Nash Social Welfare $(\cA, \cG, \mb{v}, \mb{w})$ satisfies the following assumption.

\begin{assumption} \label{assumption:1}
    Let $G[\cG, \cA, \mb{v}]$ denote the support graph of the valuation function. The support graph is the bipartite graph between agents and items with an edge between agent $i$ and item $j$ iff $v_{ij}>0$. We assume that there exists a matching of size $|\cA|$ in $G[\cG, \cA, \mb{v}]$. In other words, the objective of the Nash Social Welfare problem is not zero for $(\cA, \cG, \mb{v}, \mb{w})$.
\end{assumption}
It is straightforward to verify this assumption given an instance of Nash Social Welfare.

The proof of Theorem \ref{thm:sym-eq} uses the following two results. The first result is the classical Sion's Minimax Theorem, which can be found as Corollary 3.3 from \cite{sion_1958}. 
\begin{theorem}[Sion's Minimax Theorem]
\label{lem:sion-minimax}
    Let $M$ and $N$ be convex spaces, one of which is compact, and $f(x,y)$ a function on $M\times N$ that is quasi-concave-convex and (upper semicontinuous)-(lower semicontinuous). Then 
    \begin{equation*}
        \sup_{\x\in M}\;\inf_{y\in N}f(x,y)= \inf_{y\in N}\;\sup_{\x\in M} f(x,y).
    \end{equation*}
\end{theorem}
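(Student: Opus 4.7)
The inequality $\sup_{x \in M} \inf_{y \in N} f(x,y) \leq \inf_{y \in N} \sup_{x \in M} f(x,y)$ is the elementary max-min inequality, immediate from $f(x,y) \leq \sup_{x'} f(x',y)$ for every $(x,y)$. All the substance lies in the reverse inequality. My plan is to argue by contradiction: assume $\sup_x \inf_y f < \inf_y \sup_x f$, fix some $\beta$ strictly between these two values, and (after swapping $M,N$ and replacing $f$ by $-f$ if necessary) assume without loss of generality that $M$ is the compact factor.

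With that setup, I would focus on the super-level sets $A_y := \{x \in M : f(x,y) \geq \beta\}$. Upper semicontinuity of $f$ in $x$ makes each $A_y$ closed, quasi-concavity of $f$ in $x$ makes each $A_y$ convex, and the choice $\beta < \inf_y \sup_x f$ guarantees that each $A_y$ is nonempty. By compactness of $M$, it then suffices to prove that the family $\{A_y\}_{y \in N}$ has the finite intersection property: any common point $x^{\star} \in \bigcap_y A_y$ would satisfy $\inf_y f(x^{\star}, y) \geq \beta$, directly contradicting $\sup_x \inf_y f(x,y) < \beta$.

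The main obstacle, and the heart of the proof, is the pairwise intersection step: $A_{y_1} \cap A_{y_2} \neq \emptyset$ for every $y_1, y_2 \in N$. My approach is topological, exploiting connectedness of the segment $\{y_t := (1-t)y_1 + t y_2 : t \in [0,1]\} \subset N$, whose availability requires convexity of $N$. I would introduce an auxiliary threshold $\alpha$ with $\sup_x \inf_y f < \alpha < \beta$, and study how the closed convex sets $A_{y_t}$ evolve along the segment, together with the strict super-level sets $\{x : f(x, y_i) > \alpha\}$ for $i=1,2$. Partitioning $[0,1]$ according to which of these two sets $A_{y_t}$ meets, I would use quasi-convexity of $f$ in $y$ (to make $A_{y_t}$ behave monotonically along the segment), lower semicontinuity of $f$ in $y$, and upper semicontinuity in $x$ to show that both pieces of the partition are closed and together cover $[0,1]$. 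Connectedness of $[0,1]$ then forces the pieces to overlap, and the overlap produces the desired point in $A_{y_1} \cap A_{y_2}$. The two-threshold trick is essential here: a single threshold runs into boundary issues when juggling upper and lower semicontinuity simultaneously.

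Extending the pairwise property to arbitrary finite intersections is by induction: given $B := \bigcap_{k=1}^{n-1} A_{y_k}$ nonempty, $B$ is itself compact and convex, and the pairwise-lemma argument applied to the pair $(B, A_{y_n})$ along a segment in $N$ (between $y_n$ and a suitable auxiliary point chosen from the already-intersected family) gives $B \cap A_{y_n} \neq \emptyset$. Once the finite intersection property is in hand, compactness of $M$ yields a common element of $\bigcap_{y \in N} A_y$ and completes the contradiction. I expect essentially all the technical effort to be in the pairwise segment-and-connectedness argument; the surrounding reductions (compactness, contradiction setup, induction) are routine.
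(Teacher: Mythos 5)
This theorem is not proved in the paper at all: it is imported as a classical black box, quoted as Corollary~3.3 of Sion (1958), so there is no internal argument to compare yours against. Your outline is, in substance, the standard elementary proof of Sion's theorem (Sion's original argument as streamlined by Komiya): the trivial max--min inequality, a contradiction threshold $\beta$, the closed convex nonempty super-level sets $A_y=\{x\in M: f(x,y)\ge\beta\}$, a two-threshold segment-and-connectedness argument for pairwise intersection, an induction to the finite intersection property, and compactness of $M$ to finish. So the route is the right one.

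That said, the two steps you yourself identify as the heart of the matter are only asserted, and as literally described one of them would fail. In the pairwise step, partitioning $[0,1]$ ``according to which of these two sets $A_{y_t}$ meets'' does not give disjoint pieces, and a parameter $t$ lying in both pieces only produces two \emph{different} points of $A_{y_t}$, one with $f(\cdot,y_1)>\alpha$ and one with $f(\cdot,y_2)>\alpha$; that is not yet a single point of $A_{y_1}\cap A_{y_2}$. The standard repair is to use the contradiction hypothesis (the $\alpha$-super-level sets for $y_1$ and $y_2$ are disjoint) together with convexity, hence connectedness, of $A_{y_t}$ to show that each $A_{y_t}$ is \emph{wholly contained} in one of the two target sets; a $t$ in both pieces then makes every point of $A_{y_t}$ a common point, and one obtains the intersection at level $\alpha$ rather than $\beta$, which still contradicts $\sup_x\inf_y f<\alpha$. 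Closedness of the two pieces is exactly where the interplay of $\alpha<\beta$, lower semicontinuity in $y$, quasi-convexity in $y$, and upper semicontinuity in $x$ must be written out carefully; this is the delicate point and cannot be waved through. Similarly, the induction step cannot be run by choosing ``a suitable auxiliary point'' in $N$: to repeat the segment argument with $M$ replaced by $B=\bigcap_{k<n}A_{y_k}$ you need, for every $y$ on the segment, that $\sup_{x\in B}f(x,y)$ exceeds the threshold, which is itself an instance of the statement being proved. The known resolution (Komiya's Lemma~2) inducts instead with the ambient compact convex set shrunk to $\{x: f(x,y_n)\ge\alpha\}$ and verifies the needed hypothesis by invoking the pairwise lemma for the pairs $(y,y_n)$. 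With these repairs your plan becomes the classical proof; without them it has genuine gaps precisely at the nontrivial steps. Of course, for the purposes of this paper a citation to Sion (1958) suffices.
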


The second result was proved in \cite{anari2017nash}.
\begin{lemma}[Lemma 4.3 in \cite{anari2017nash}]
\label{lem:anari_nash_lemma}
Let $p : \RR^{m}_{\geq 0} \rightarrow \R_{\geq 0}$ be a positive function satisfying the following properties:
\begin{itemize}
    \item $p(\alpha \y) = \alpha^n p(\y)$ for all $\y \geq 0$,
    \item $\log p(\y)$ is convex in $\log \y$.
\end{itemize}
Then the following inequality holds
\begin{align*}
    \inf_{\y > 0: y^S \geq 0, \forall S\in \binom{[m]}{n}} \log p(\y) = \sup_{\bm{\alpha}\in [0,1]^{m}, \sum_j \alpha_j =n}\inf_{\y > 0}\quad \log p(\y)-\sum\limits_{j=1}^m\alpha_j \log(y_j).
\end{align*}
\end{lemma}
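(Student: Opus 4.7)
My plan is to reduce the claim to an application of Sion's minimax (Theorem~\ref{lem:sion-minimax}) via a change of variables together with a rewriting of the constraint through the matroid polytope. First, I substitute $z_j := \log y_j$, turning the search space $\{\y > 0\}$ into $\RR^m$. Under this substitution, $g(z) := \log p(e^z)$ is convex in $z$ by the log-convexity hypothesis, the constraint $y^S = \prod_{j \in S} y_j \geq 1$ becomes $\sum_{j \in S} z_j \geq 0$ for every $S \in \binom{[m]}{n}$, and the $n$-homogeneity $p(\alpha \y) = \alpha^n p(\y)$ translates into the shift identity $g(z + c \mathbf{1}) = g(z) + nc$ for every $c \in \RR$.

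Second, I reformulate the constraint through the matroid polytope $P_n := \{\alpha \in [0,1]^m : \sum_j \alpha_j = n\}$, whose extreme points are exactly the indicator vectors of $n$-subsets of $[m]$; this yields the LP identity $\min_{S \in \binom{[m]}{n}} \sum_{j \in S} z_j = \min_{\alpha \in P_n} \sum_j \alpha_j z_j$. Defining the saddle function $F(z, \alpha) := g(z) - \sum_j \alpha_j z_j$, this gives $\sup_{\alpha \in P_n} F(z, \alpha) = g(z) - \min_S \sum_{j \in S} z_j$. The key step is showing that the left-hand side of the lemma equals $\inf_{z \in \RR^m} \sup_{\alpha \in P_n} F(z, \alpha)$: for any $z \in \RR^m$, setting $z' := z - (\lambda(z)/n)\, \mathbf{1}$ with $\lambda(z) := \min_S \sum_{j \in S} z_j$ produces a feasible point, and the shift identity gives $g(z') = g(z) - \lambda(z)$, hence $\inf_z [g(z) - \lambda(z)] \geq \inf_{z \text{ feasible}} g(z)$; the reverse inequality is immediate since $\lambda(z) \geq 0$ on the feasible set.

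Finally, I would invoke Sion's minimax with $M = P_n$ (convex and compact) and $N = \RR^m$ (convex): $F$ is linear, hence concave and upper semicontinuous in $\alpha$, and convex, hence lower semicontinuous in $z$. Swapping the order of optimization gives
\begin{equation*}
\inf_{z \in \RR^m} \sup_{\alpha \in P_n} F(z, \alpha) \;=\; \sup_{\alpha \in P_n} \inf_{z \in \RR^m} F(z, \alpha),
\end{equation*}
and the right-hand side is exactly the RHS of the lemma after reverting to $\y$-variables. The main subtlety is the first step, where I rewrite the constrained infimum as the unconstrained infimum of $g(z) - \lambda(z)$; this crucially exploits the $n$-homogeneity of $p$ to produce a feasible shift with matching objective value. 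Without homogeneity, neither the normalization $\sum_j \alpha_j = n$ appearing in the supremum on the RHS nor the equivalence of the two infima would hold, so this step is where the hypotheses of the lemma are actually consumed.
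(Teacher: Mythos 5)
Your argument is correct. Note first that the paper does not actually prove this lemma: it imports it verbatim from \cite{anari2017nash} (their Lemma 4.3), only remarking that the original proof, stated there for homogeneous polynomials with positive coefficients, uses nothing beyond the two listed hypotheses. So there is no in-paper proof to match; what you have produced is a self-contained derivation, and every step checks out. The substitution $z=\log \y$ correctly converts log-convexity into convexity of $g$ and $n$-homogeneity into the shift identity $g(z+c\mathbf{1})=g(z)+nc$; since $p$ is positive, $g$ is finite and convex on all of $\RR^m$, hence continuous, which is what Sion's theorem (Theorem~\ref{lem:sion-minimax}, with $M=P_n$ compact and $F$ affine in $\bm{\alpha}$, convex in $z$) needs. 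The identification of the vertices of $P_n=\{\bm{\alpha}\in[0,1]^m:\sum_j\alpha_j=n\}$ with indicators of $n$-subsets is standard and justifies $\min_S\sum_{j\in S}z_j=\min_{\bm{\alpha}\in P_n}\langle\bm{\alpha},z\rangle$, and your two-sided argument equating $\inf_{z\ \mathrm{feasible}}g(z)$ with $\inf_z[g(z)-\lambda(z)]$ via the shift $z'=z-(\lambda(z)/n)\mathbf{1}$ is exactly where homogeneity is consumed, as you say. (You also correctly read the constraint $y^S\geq 0$ in the statement as the typo it is for $\prod_{j\in S}y_j\geq 1$.) Compared with the route in \cite{anari2017nash}, which works through Lagrangian duality with one multiplier per constraint indexed by $S\in\binom{[m]}{n}$ and then aggregates, your packaging of the exponentially many constraints directly into the base polytope of the rank-$n$ uniform matroid before invoking minimax is arguably tidier and makes transparent why the dual variable $\bm{\alpha}$ lands in $[0,1]^m$ with $\sum_j\alpha_j=n$.
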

While the original result in \cite{anari2017nash} assumed $p$ to be a homogeneous polynomial with positive coefficients, their proof only relies on the two properties presented in Lemma \ref{lem:anari_nash_lemma}.

\subsection{Proof of Theorem \ref{thm:sym-eq}} \label{app:sym}
To prove Theorem \ref{thm:sym-eq}, we start with the \eqref{eq:sym-poly} and derive the convex program \eqref{eq:cvx-sym} via a sequence of duals presented in Lemmas \ref{lem:sym-eq-1}, \ref{lem:sym-eq-2}, and \ref{lem:sym-eq-3}.

Let $\mathcal{P}$ and $\mathcal{Q}$ denote the feasible regions for $\x$ and $\y$ in \eqref{eq:sym-poly}, respectively. 
\begin{align*}
    \cP &:=\left\{\mb{x}\in \RR^{\cA\times \cG}_{\geq 0} \,:\,  \sum_{i\in \cA} x_{ij} = 1 \quad \forall j \in \cG  \right\} \\
    \cQ &:=\left\{\mb{y}\in \RR^{\cG}_{>0}\,:\, \prod_{j \in S} y_j \geq 1 \quad \forall S \in \binom{\cG}{n} \right\}.
\end{align*}

Note that the inner function in the objective 
\begin{equation*}
    f(x)= \inf_{\mb{y} \in \cQ}\sum\limits_{i\in \cA} \log\left(\sum_{j \in \cG} x_{ij} \; v_{ij} \;y_j\right),\; 
\end{equation*}
is bounded above ($\mb{y}=\bm{1}$ belongs to $\cQ$), and the domain of $x$, $\cP$, is compact (Bounded and Closed sets in Euclidean space are compact using Heine-Borel Theorem). 

Lemma \ref{lem:LogConcave-Sym-isrelax} shows that the inner infimum of \eqref{eq:sym-poly} is $>-\infty$ for any integral allocation $\mb{x}$ that assigns at least one item to each agent in the support of $\mb{v}$. We know such an allocation exists by Assumption \ref{assumption:1}.
\begin{lemma} 
\label{lem:LogConcave-Sym-isrelax}
    For any integral allocation $\mb{x}\in \cP \cap \{0,1\}^{|\cA|\times |\cG|}$, 
    \begin{align*}
  \inf_{\mb{y}\in \cQ} \quad & \sum_{i \in \cA} \log\left(\sum_{j \in \cG} x_{ij} \, v_{ij} \,y_j\right) =   \sum_{i \in \cA} \log\left(\sum_{j \in \cG} x_{ij} \, v_{ij} \right) .
    \end{align*}
\end{lemma}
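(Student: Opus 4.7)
The upper bound is immediate: the vector $\mb{y}=\bm{1}$ lies in $\cQ$ (every product of $y_j$'s equals $1$), and plugging it in yields exactly the right-hand side $\sum_{i\in\cA}\log(\sum_{j\in\cG} x_{ij} v_{ij})$. So the content is the matching lower bound. Let $\sigma:\cG\to\cA$ be the assignment corresponding to the integral $\mb{x}$ and write $V_i=\sum_{j\in\sigma^{-1}(i)} v_{ij}$. If $V_i=0$ for some $i$, both sides are $-\infty$ and the statement is trivial; so assume $V_i>0$ for every $i$.

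The plan for the lower bound is to apply weighted AM--GM agent by agent. For each agent $i$,
\begin{equation*}
\sum_{j\in\sigma^{-1}(i)} v_{ij}\, y_j \;=\; V_i\sum_{j\in\sigma^{-1}(i)} \tfrac{v_{ij}}{V_i}\, y_j \;\geq\; V_i \prod_{j\in\sigma^{-1}(i)} y_j^{v_{ij}/V_i}.
\end{equation*}
Taking logs and summing over $i$ gives, for every $\mb{y}>\bm{0}$,
\begin{equation*}
\sum_{i\in\cA}\log\!\Big(\sum_{j\in\cG} x_{ij} v_{ij} y_j\Big) \;\geq\; \sum_{i\in\cA}\log V_i \;+\; \sum_{j\in\cG} \alpha_j \log y_j,
\end{equation*}
where I set $\alpha_j := v_{\sigma(j)j}/V_{\sigma(j)}$ if $j$ is assigned and $\alpha_j:=0$ otherwise.

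It then remains to show that $\sum_j \alpha_j \log y_j\geq 0$ on the feasible set $\cQ$. Observe $\alpha_j\in[0,1]$ (since $v_{ij}\leq V_i$ for any $j\in\sigma^{-1}(i)$) and
\[
\sum_{j\in\cG}\alpha_j \;=\; \sum_{i\in\cA} \sum_{j\in\sigma^{-1}(i)} \tfrac{v_{ij}}{V_i} \;=\; |\cA| \;=\; n.
\]
Hence $\alpha$ lies in the base polytope of the uniform matroid $U_{n,|\cG|}$, namely $\{\alpha\in[0,1]^{|\cG|}:\sum_j\alpha_j=n\}$, whose vertices are exactly the indicator vectors of the $n$-element subsets of $\cG$. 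Thus $\alpha=\sum_{S\in\binom{\cG}{n}}\lambda_S \mathbf{1}_S$ for some convex weights $\lambda_S\geq 0$ summing to $1$, and
\begin{equation*}
\sum_{j\in\cG}\alpha_j\log y_j \;=\; \sum_{S\in\binom{\cG}{n}} \lambda_S \log\!\prod_{j\in S} y_j \;\geq\; 0,
\end{equation*}
using the defining constraint $\prod_{j\in S} y_j\geq 1$ of $\cQ$. Combining with the AM--GM bound gives $\log p(\mb{y})\geq \sum_i\log V_i$ for every $\mb{y}\in\cQ$, matching the upper bound and proving equality.

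The only nontrivial step is the representation of $\alpha$ as a convex combination of indicators of $n$-subsets; this is the standard fact that the polytope $\{\alpha\in[0,1]^m:\sum_j\alpha_j=n\}$ is the convex hull of $\{\mathbf{1}_S:S\in\binom{[m]}{n}\}$, which follows either from totally unimodular LP arguments or directly from the matroid base polytope description of $U_{n,m}$. Everything else is just bookkeeping around AM--GM and handling the degenerate case $V_i=0$.
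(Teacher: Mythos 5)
Your proof is correct, but it takes a genuinely different route from the paper's. The paper argues directly on the product: it expands $\prod_{i\in\cA}\bigl(\sum_{j}x_{ij}v_{ij}y_j\bigr)$ as a sum of monomials indexed by transversal sets $S\in\binom{\cG}{n}$ (one item from each agent's bundle), observes that each monomial carries a factor $y^S=\prod_{j\in S}y_j\geq 1$ by the defining constraint of $\cQ$, and drops those factors termwise. You instead apply weighted AM--GM within each agent's bundle to peel off a term $\sum_{j}\alpha_j\log y_j$ with $\alpha\in[0,1]^{|\cG|}$ and $\sum_j\alpha_j=n$, then invoke the hypersimplex vertex decomposition to write $\alpha$ as a convex combination of indicators of $n$-subsets and conclude nonnegativity from $\prod_{j\in S}y_j\geq 1$. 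Both are complete; all your steps check out (the weights $v_{ij}/V_i$ sum to one per agent, $\alpha_j\leq 1$ since $v_{ij}\leq V_i$, and the integrality of the hypersimplex is standard). The paper's expansion is shorter and more elementary, while your argument is essentially a hands-on instance of the $\bm{\alpha}$-certificate underlying Lemma~\ref{lem:anari_nash_lemma}, and it is the version that transfers more uniformly to the weighted setting, where the paper itself switches to a power-mean inequality (Lemma~\ref{lem:LogConcave-ASym-isrelax}) in exactly the spirit of your per-agent AM--GM step. Your explicit handling of the degenerate case $V_i=0$ is a small point the paper leaves to Assumption~\ref{assumption:1}.
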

\begin{proof}
Let $\sigma: \cG \rightarrow \cA$ be the allocation corresponding to $\mb{x}$, i.e., $\sigma(j) = i$ iff $x_{ij} = 1$ and let $\cS=\{S\in \binom{\cG}{n}: \forall i\in \cA,\, \exists j \in S \textnormal{ such that } x_{ij}=1\}$. For any $\mb{y}\in \cQ$, 
    \begin{align*}
        \sum_{i \in \cA} \log\left(\sum_{j \in \cG} x_{ij} \; v_{ij} \;y_j\right) &=   \log\left( \sum\limits_{S\in \mathcal{S}}y^S\prod_{j \in S} x_{\sigma(j) j}v_{\sigma(j) j}\right)\\
        &\geq  \log\left( \sum\limits_{S\in \mathcal{S}}\prod_{j \in S} x_{\sigma(j)j}v_{\sigma(j)j}\right)=  \sum_{i \in \cA} \log\left(\sum_{j \in \cG} x_{ij} \; v_{ij} \right).
    \end{align*}
    Here, the only inequality holds because  $ y^S \geq 1$ for each $S\in \mathcal{S}$.
    
    Setting $y_j = 1$ for each $j \in \cG$ gives the equality. 
\end{proof}

\begin{lemma}\label{lem:sym-eq-1}
    The optimal value of \eqref{eq:sym-poly} is the same as 
    \begin{equation}
        \inf\limits_{\bm{\delta}} \; \max\limits_{\x\in \mathcal{P}} \; \sum\limits_{i\in \cA}\log\left(\sum\limits_{j\in \cG}x_{ij}\,v_{ij} \,e^{-\delta_j}\right)+\sum\limits_{j\in \cG}\max(0,\,\delta_j) . \label{cp:p1} \tag{Unweighted-Primal} 
    \end{equation}
\end{lemma}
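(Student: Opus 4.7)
The plan is to convert the inner infimum $\inf_{\y \in \cQ}$ into a Lagrangian-style sup-inf using Lemma \ref{lem:anari_nash_lemma}, then commute the remaining maximization and infimum using Sion's minimax theorem (Theorem \ref{lem:sion-minimax}). For each fixed $\x \in \cP$ with $\sum_j x_{ij} v_{ij} > 0$ for all $i \in \cA$, I would set $p(\y) := \prod_i\bigl(\sum_j x_{ij} v_{ij} y_j\bigr)$; this $p$ is positive on $\{\y > 0\}$, $n$-homogeneous, and $\log p$ is a sum of log-sum-exp terms in $\log \y$, hence convex in $\log \y$. Applying Lemma \ref{lem:anari_nash_lemma} and substituting $y_j = e^{-\delta_j}$ (so $-\log y_j = \delta_j$) yields
\[
    \inf_{\y \in \cQ}\sum_{i \in \cA}\log\Bigl(\sum_{j \in \cG} x_{ij} v_{ij} y_j\Bigr) \;=\; \sup_{\bm{\alpha}}\inf_{\bm{\delta} \in \R^{|\cG|}}\Bigl[g(\x,\bm{\delta}) + \sum_j \alpha_j \delta_j\Bigr],
\]
where $g(\x,\bm{\delta}) := \sum_i \log\bigl(\sum_j x_{ij} v_{ij} e^{-\delta_j}\bigr)$ and $\bm{\alpha}$ ranges over $\{\bm{\alpha} \in [0,1]^{|\cG|} : \sum_j \alpha_j = n\}$.

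Next, I would take $\max_{\x \in \cP}$ on both sides and interchange with $\sup_{\bm{\alpha}}$ (trivial, as both are suprema), giving $\max_{(\x,\bm{\alpha})}\inf_{\bm{\delta}}[g + \sum_j \alpha_j \delta_j]$. The key simplification is that the constraint $\sum_j \alpha_j = n$ can be dropped in favor of an unconstrained $\bm{\beta} \in [0,1]^{|\cG|}$: uniformly shifting $\bm{\delta} \mapsto \bm{\delta} + c\bm{1}$ changes $g(\x,\bm{\delta})$ by $-nc$ and $\sum_j \beta_j \delta_j$ by $c\sum_j \beta_j$, so the inner objective changes by $c(\sum_j \beta_j - n)$. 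For any $\bm{\beta}$ with $\sum_j \beta_j \ne n$, sending $c \to \pm \infty$ drives the inner infimum to $-\infty$; therefore the outer supremum is unaffected if we enlarge $\bm{\alpha}$ to all $\bm{\beta} \in [0,1]^{|\cG|}$. After this enlargement, the expression becomes $\max_{(\x,\bm{\beta}) \in \cP \times [0,1]^{|\cG|}} \inf_{\bm{\delta} \in \R^{|\cG|}}[g(\x,\bm{\delta}) + \sum_j \beta_j \delta_j]$.

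At this point I would apply Sion's minimax: the domain $\cP \times [0,1]^{|\cG|}$ is convex and compact; the objective is jointly concave in $(\x,\bm{\beta})$ (sum of a concave-in-$\x$ term and a linear-in-$\bm{\beta}$ term with no cross interaction), convex in $\bm{\delta}$ (log-sum-exp plus a linear function), and continuous throughout. Commuting $\max_{(\x,\bm{\beta})}$ and $\inf_{\bm{\delta}}$ and then evaluating the separable inner max $\max_{\bm{\beta} \in [0,1]^{|\cG|}} \sum_j \beta_j \delta_j = \sum_j \max(0,\delta_j)$ produces exactly the program \eqref{cp:p1}.

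The main obstacle is handling the boundary of $\cP$ where some $\sum_j x_{ij} v_{ij}$ vanishes, since Lemma \ref{lem:anari_nash_lemma} strictly requires $p$ to be positive and Sion's minimax requires proper semicontinuity. By Assumption \ref{assumption:1} the optimum of \eqref{eq:sym-poly} is attained at a non-degenerate $\x$ corresponding to a matching of size $n$, and on the degenerate boundary $g(\x,\bm{\delta}) = -\infty$, so these points are dominated in every maximization above; the careful bookkeeping required to confirm that restricting Lemma \ref{lem:anari_nash_lemma} and Sion's theorem to the non-degenerate open region does not alter the optimal values is where I expect most of the technical effort to lie.
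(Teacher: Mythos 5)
Your proposal is correct and follows the same skeleton as the paper's proof (Lemma~\ref{lem:anari_nash_lemma} to dualize the inner infimum, then Sion's theorem, Theorem~\ref{lem:sion-minimax}, to swap the outer optimizations), but it diverges in how the term $\sum_{j}\max(0,\delta_j)$ is produced, and your route is arguably cleaner. The paper applies Sion with $\bm{\alpha}$ constrained to the slice $\{\bm{\alpha}\in[0,1]^{|\cG|}:\sum_j\alpha_j=n\}$ and then needs a separate two-directional argument (the claim in equation~\eqref{eq:fin-eq}, proved via the shift $\bm{\delta}\mapsto\bm{\delta}+t_{\delta}\bm{1}$ that makes the $n$ largest coordinates nonnegative) to show that, after the outer infimum over $\bm{\delta}$, the constrained maximum $\max_{\bm{\alpha}}\sum_j\alpha_j\delta_j$ may be replaced by $\sum_j\max(0,\delta_j)$. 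You instead relax the constraint $\sum_j\alpha_j=n$ to the full cube $[0,1]^{|\cG|}$ \emph{before} invoking Sion, justified by the observation that for any $\bm{\beta}$ off the hyperplane the inner infimum is $-\infty$ (shift $\bm{\delta}$ by $c\bm{1}$ and let $c\to\pm\infty$); after the swap the cube maximum of the linear form gives $\sum_j\max(0,\delta_j)$ immediately, with no analogue of the paper's claim needed. Both orderings are valid — the cube is still compact and the objective remains concave in $(\x,\bm{\beta})$ and convex in $\bm{\delta}$ — so the only caveats are the ones you already flag: the relaxation argument implicitly uses that the constrained supremum is finite (guaranteed by Assumption~\ref{assumption:1} via Lemma~\ref{lem:LogConcave-Sym-isrelax}), and the semicontinuity hypotheses of Sion must be checked on the degenerate boundary of $\cP$ where $g(\x,\bm{\delta})=-\infty$; the paper's own proof glosses over this second point in exactly the same way.
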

\begin{proof}
For a fixed $\mb{x} \in \mathcal{P}$, using Lemma \ref{lem:anari_nash_lemma} with $p_{\mb{x}}(\y) = \prod_{i\in \cA}\left(\sum\limits_{j\in \cG}x_{ij} \, v_{ij} \, y_j\right)$, we get
\begin{align*}
   \inf_{\y > 0: y^S \geq 0, \forall S\in \binom{\cG}{n}}  \log{p_{\mb{x}}(\mb{y})} &= \inf_{\y > 0: y^S \geq 0, \forall S\in \binom{\cG}{n}}  \sum\limits_{i\in \cA}\log \left(\sum\limits_{j\in \cG}x_{ij} \, v_{ij} \, y_j \right)\\
 &= \sup_{\bm{\alpha}\in [0,1]^{|\cG|}, \sum_j \alpha_j = n} \inf_{\y > 0} \; \sum\limits_{i\in \cA}\log \left(\sum\limits_{j\in \cG}x_{ij}\,v_{ij}\, y_j \right)-\sum\limits_{j\in \cG}\alpha_j \log(y_j) .
\end{align*}
Substituting $\delta_j = -\log(y_j)$ and
taking a maximum over $\x$, we get
\begin{align*}
\max_{\x \in \mathcal{P}}\inf_{\y > 0: y^S \geq 0, \forall S\in \binom{\cG}{n}}  \log{p_{\mb{x}}(\mb{y})}
 &= \sup_{\x \in \mathcal{P},\bm{\alpha}\in [0,1]^{|\cG|}, \sum_j \alpha_j = n}\inf_{\bm{\delta} } \; \sum\limits_{i\in \cA}\log \left(\sum\limits_{j\in \cG}x_{ij} \, v_{ij} \, e^{-\delta_j} \right)+\sum\limits_{j\in \cG}\alpha_j \delta_j .
\end{align*}
As the domains of both $\x$ and $\bm{\alpha}$ are compact, using Theorem \ref{lem:sion-minimax} on the previous equation, we get
\begin{align*}
\max_{\x \in \mathcal{P}} \inf_{\y > 0: y^S \geq 0, \forall S\in \binom{\cG}{n}}  \log{p_{\mb{x}}(\mb{y})}
 &= \inf_{\bm{\delta} } \;\max_{\x \in \mathcal{P}}\;\max_{\bm{\alpha}\in [0,1]^{|\cG|}, \sum_j \alpha_j = n}\; \sum\limits_{i\in \cA}\log \left(\sum\limits_{j\in \cG}x_{ij} \, v_{ij} \, e^{-\delta_j} \right)+\sum\limits_{j\in \cG}\alpha_j \delta_j .
\end{align*}
Finally, the following claim completes the proof.
\begin{align}
   &\inf_{\bm{\delta} } \;\max_{\x \in \mathcal{P}}\;\max_{\bm{\alpha}\in [0,1]^{|\cG|}, \sum_j \alpha_j = n}\; \sum\limits_{i\in \cA}\log \left(\sum\limits_{j\in \cG}x_{ij} \, v_{ij} \, e^{-\delta_j} \right)+\sum\limits_{j\in \cG}\alpha_j \delta_j \notag \\
   &= \inf\limits_{\bm{\delta}} \; \max\limits_{\x\in \mathcal{P}} \quad\sum\limits_{i\in \cA}\log\left(\sum\limits_{j\in \cG}x_{ij}v_{ij}e^{-\delta_j}\right)+\sum\limits_{j\in \cG}\max(0,\delta_j). \label{eq:fin-eq}
\end{align}
For proving the claim, we define functions \begin{align*}
    f_1(\bm{\delta}, \x, \bm{\alpha}) &= \sum\limits_{i\in \cA}\log \left(\sum\limits_{j\in \cG}x_{ij} \, v_{ij} \, e^{-\delta_j} \right)+\sum\limits_{j\in \cG}\alpha_j \delta_j, \quad \text{and} \\
    f_2(\bm{\delta}, \x) &= \sum\limits_{i\in \cA}\log\left(\sum\limits_{j\in \cG}x_{ij}v_{ij}e^{-\delta_j}\right)+\sum\limits_{j\in \cG}\max(0,\delta_j).
\end{align*}

Observe that for any $\bm{\delta}$ and $\bm{\alpha} \in [0,1]^{|G|}$, $\alpha_j \delta_j \leq \max(0, \delta_j)$. Therefore, for any $\bm{\delta}, \x$ and $\bm{\alpha} \in [0,1]^{|G|}$, we have $f_1(\bm{\delta}, \x, \bm{\alpha}) \leq f_2(\bm{\delta}, \x)$. As a result, 
\begin{equation}
     \inf_{\bm{\delta} } \;\max_{\x \in \mathcal{P}}\;\max_{\bm{\alpha}\in [0,1]^{|\cG|}, \sum_j \alpha_j = n}\; f_1(\bm{\delta}, \x, \bm{\alpha}) \leq \inf\limits_{\bm{\delta}} \; \max\limits_{\x\in \mathcal{P}} \; f_2(\bm{\delta}, \x). \label{eq:eq:fin1}
\end{equation}

To establish an inequality in the other direction, first note that $f_1(\bm{\delta}, \x, \bm{\alpha}) = f_1(\bm{\delta} + t\cdot \bm{1}, \x, \bm{\alpha})$ for any $t \in \RR$. 
So, for a fixed $\bm{\delta}$, let $t_{\delta}$ denote a value of $t$ for which  the $n$ largest values of $\bm{\delta}+ t_{\delta} \cdot \bm{1}$ are non-negative and the $m-n$ smallest values of $\delta$ are non-positive. Then 
\begin{align}
  \max_{\bm{\alpha}\in [0,1]^{|\cG|}, \sum_j \alpha_j = n}\;   f_1(\bm{\delta}, \x, \bm{\alpha}) &= \max_{\bm{\alpha}\in [0,1]^{|\cG|}, \sum_j \alpha_j = n}\;   f_1(\bm{\delta} + t_{\delta} \cdot \bm{1}, \x, \bm{\alpha}) \notag\\
  &= \sum_{i \in \cA} \log\left(\sum_{j \in G} x_{ij} v_{ij} e^{-\delta_j - t_{\delta} }\right) +  \max_{\bm{\alpha}\in [0,1]^{|\cG|}, \sum_j \alpha_j = n}\sum_{j \in \cG} \alpha_j (\delta_j+t_{\delta}).\label{eq:fin3}
\end{align}
The term $\sum_{j \in \cG} \alpha_j (\delta_j+t_{\delta})$ is maximized when $\alpha_j = 1$ for the largest $n$ coordinates of $\bm{\delta} + t \cdot \bm{1}$. As a result, we get 
\begin{align}
     \sum_{i \in \cA} \log\left(\sum_{j \in G} x_{ij} v_{ij} e^{-\delta_j - t_{\delta} }\right) +  \sum_{j \in \cG} \max(0, \delta_j + t_{\delta}) = f_2(\bm{\delta} + t_{\delta}\cdot \bm{1}, \x) .\label{eq:fin4}
\end{align}
Combining equations \eqref{eq:fin3} and \eqref{eq:fin4}, and taking max over $\x$, we have
\begin{align*}
  \max_{\x \in \mathcal{P}} \; \max_{\bm{\alpha}\in [0,1]^{|\cG|}, \sum_j \alpha_j = n}\;   f_1(\bm{\delta}, \x, \bm{\alpha})  &= \max_{\x \in \mathcal{P}}\; f_2(\bm{\delta} + t_{\delta}\cdot \bm{1}, \x) \geq \inf_{\bm{\gamma}}\;\max_{\x \in \mathcal{P}}\; f_2(\bm{\gamma}, \x).
\end{align*}

Taking an infimum over $\bm{\delta}$, we obtain 
\begin{align}
 \inf_{\bm{\delta}}\;\max_{\x \in \mathcal{P}}\;\max_{\bm{\alpha}\in [0,1]^{|\cG|}, \sum_j \alpha_j = n}\;   f_1(\bm{\delta} , \x, \bm{\alpha}) &\geq  \inf_{\bm{\delta}}\; \inf_{\bm{\gamma}}\;\max_{\x \in \mathcal{P}} \;f_2(\bm{\gamma}, \x) \notag \\
  &= \inf_{\bm{\gamma}}\;\max_{\x \in \mathcal{P}} \;f_2(\bm{\gamma}, \x) . \label{eq:fin2}
\end{align}
Here, the last equality follows as the function being optimized does not depend on $\bm{\delta}$.

Combining equations \eqref{eq:eq:fin1} and \eqref{eq:fin2} completes the proof of equation \eqref{eq:fin-eq}.
\end{proof}

\begin{lemma} \label{lem:sym-eq-2}
\label{lem: EG-Sym-Dual-Shmyrev-Sym}
   The optimal values of \eqref{cp:p1} is the same as that of the following program.
  \begin{align}
\label{cp:fSR-dual}
   \inf_{\bm{\delta}, \mb{r}, \bm{\gamma}} \quad &\sum\limits_{j\in \cG}e^{r_{j}} +\sum\limits_{i\in \cA}\gamma_{i} +\sum\limits_{j\in \cG}\delta_{j}-n \tag{Unweighted-Dual}\\ 
    \quad &r_j+\gamma_i+\delta_j \geq \log v_{ij}\quad \forall (i,j) \in \cA\times \cG  \nonumber \\
    \quad &\bm{\delta}\geq \bm{0}. \nonumber
\end{align}
 \end{lemma}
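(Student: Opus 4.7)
The plan is to dualize the inner maximization in \eqref{cp:p1} and then simplify the resulting program to match \eqref{cp:fSR-dual}.

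First, I would fix $\bm{\delta}$ and analyze the inner problem
$$P(\bm{\delta}) := \max_{\x \in \mathcal{P}} \sum_{i \in \cA} \log\left(\sum_{j \in \cG} x_{ij}\, v_{ij}\, e^{-\delta_j}\right).$$
This is a concave maximization over a compact convex polytope, and the point $x_{ij} = 1/n$ lies strictly in its relative interior, so Slater's condition holds and strong duality applies. To dualize, I would introduce auxiliary variables $t_i$ with constraints $t_i \leq \sum_j x_{ij} v_{ij} e^{-\delta_j}$, giving multipliers $\beta_i \geq 0$, and multipliers $\mu_j \in \RR$ for the equalities $\sum_i x_{ij} = 1$. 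Maximizing the Lagrangian over $t_i > 0$ sets $t_i = 1/\beta_i$ with contribution $-\log \beta_i - 1$, and maximizing over $x_{ij} \geq 0$ is bounded if and only if $\beta_i\, v_{ij} e^{-\delta_j} \leq \mu_j$ for every $(i,j)$, in which case the $x$-part contributes zero (and at any KKT point $\sum_{i,j} \mu_j x_{ij} = n$). Substituting $\gamma_i := -\log \beta_i$ and $r_j := \log \mu_j$, strong duality gives
$$P(\bm{\delta}) = \inf_{\bm{r}, \bm{\gamma}} \left\{ \sum_i \gamma_i + \sum_j e^{r_j} - n \;:\; r_j + \gamma_i + \delta_j \geq \log v_{ij} \text{ for all } (i,j) \right\}.$$
Assumption \ref{assumption:1} furnishes a matching-based allocation achieving strictly positive $\sum_j x_{ij} v_{ij}$ for every $i$, so $P(\bm{\delta}) > -\infty$ and the derivation is valid.

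Next I would substitute this back into \eqref{cp:p1}, yielding
$$\eqref{cp:p1} = \inf_{\bm{\delta},\bm{r},\bm{\gamma}} \left\{ \sum_j e^{r_j} + \sum_i \gamma_i + \sum_j \max(0, \delta_j) - n \;:\; r_j + \gamma_i + \delta_j \geq \log v_{ij} \right\}.$$
To match \eqref{cp:fSR-dual}, I claim the infimum is attained (as a limit) over $\bm{\delta} \geq \bm{0}$. Indeed, given any feasible $(\bm{\delta}, \bm{r}, \bm{\gamma})$, replacing each $\delta_j < 0$ by $\delta_j' := 0$ only relaxes the constraint $r_j + \gamma_i + \delta_j \geq \log v_{ij}$, while the objective is unchanged because $\max(0, \delta_j) = 0 = \max(0, \delta_j')$ and no other term depends on $\delta_j$. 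Once $\bm{\delta} \geq \bm{0}$, we have $\sum_j \max(0, \delta_j) = \sum_j \delta_j$, which is exactly the objective of \eqref{cp:fSR-dual}.

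The main obstacle is rigorously justifying the strong-duality step for $P(\bm{\delta})$, particularly when $v_{ij} = 0$ for some $(i,j)$ (making the corresponding dual constraint vacuous), or when an entire item has $v_{ij} = 0$ for all $i$ (in which case the infimum over the corresponding $r_j$ is $-\infty$ and is not attained by a finite $r_j$). These degeneracies are resolved by keeping the dual as an infimum rather than a minimum and taking the appropriate limits; the equalities of optimum values are preserved. Once this dualization is handled carefully, the remaining manipulations are routine.
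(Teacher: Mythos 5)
Your proposal is correct and follows essentially the same route as the paper: fix $\bm{\delta}$, introduce auxiliary variables for the agents' utilities, apply Lagrangian duality with Slater's condition certified by $x_{ij}=1/n$, substitute $r_j$ and $\gamma_i$ as logs of the item and agent multipliers, and finally observe that negative coordinates of $\bm{\delta}$ can be raised to zero without changing the objective or violating feasibility. The only cosmetic difference is that you keep the item constraints as equalities with free multipliers while the paper relaxes them to inequalities with non-negative multipliers; your extra care about the degenerate $v_{ij}=0$ cases is a welcome addition but does not change the argument.
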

\begin{proof}
    For a fixed $\bm{\delta}$, let us first re-write the internal maximum of \eqref{cp:p1} as 
    \begin{align}
\label{cp:EG-Symmetric}
    \max_{\x, \mb{u}} \quad & \sum\limits_{i\in \cA} \log u_i + f(\bm{\delta}) \\
    \quad &u_i \leq \sum\limits_{j\in \cG}x_{ij}\, v_{ij} \,e^{-\delta_j} \quad \forall i \in \mathcal{\cA} \nonumber\\
    \quad &\sum\limits_{i\in \cA}x_{ij} \leq 1 \quad \forall j \in \cG \nonumber \\ 
    \quad & \mb{x} \geq \bm{0}\,, \nonumber
\end{align}
where $f(\bm{\delta}) = \sum_{j \in \cG} \max(0, \delta_j)$. 

Let $\beta_i$, $p_j$, and $\theta_{ij}$ be the Lagrange dual variables associated with the constraints corresponding to agent $i$, item $j$, and agent-item pair$(i,j)$, respectively. The Lagrangian of the above convex program is defined as follows
\begin{align*}
\label{cp:EG-sym}
L(\x, \mb{u}, \bm{\beta}, \bm{\theta}, \mb{p}) &= f(\bm{\delta}) +  \left[\sum\limits_{i\in \cA} \log u_i +\sum\limits_{i\in \cA}\beta_i \left(\sum\limits_{j\in \cG}x_{ij}\, v_{ij} \,e^{-\delta_j} -u_i\right)+\sum\limits_{j\in \cG}p_j(1-\sum\limits_{i\in \cA}x_{ij})+\sum\limits_{i,j}\theta_{ij}x_{ij}\right] \\
&= f(\bm{\delta})+\left[\sum\limits_{i\in \cA} (\log u_i-\beta_i u_i) +\sum_{i \in \cA} \sum_{j \in \cG} x_{ij}\left(\beta_{i}\, v_{ij} \,e^{-\delta_j}+\theta_{ij}-p_j\right)+\sum\limits_{j\in \cG}p_j\right]  .
\end{align*}
The Lagrange dual of \eqref{cp:EG-Symmetric} is given by 
\begin{equation}
   g(\bm{\beta},\bm{\theta},\bm{p}) =  \max_{\x \in \mathcal{P},\mb{u} \geq 0} L(\x, \mb{u}, \bm{\beta}, \bm{\theta}, \mb{p}). \label{eq:dual1}
\end{equation}
Observe that solution $x_{ij}=1/n$ for each $(i,j) \in \cA \times \cG$ lies in the relative interior of $\cP$. Since all the constraints are affine, Slater's condition is satisfied for \eqref{cp:p1}. Thus, the optimal value of the infimum of Lagrange dual over $\bm{\beta},\bm{\theta},\bm{p} \geq 0$ is exactly equal to the optimum of \eqref{cp:EG-Symmetric}.

The KKT conditions imply that the optimal solutions must satisfy
\begin{align*}
    \frac{1}{u_i} - \beta_i &= 0 \quad \forall i \in \cA \\
    \beta_i \, v_{ij} \,e^{-\delta_j} - \sum_{j \in \cG} p_j + \theta_{ij} &= 0 \quad \forall (i,j) \in \cA \times \cG.
\end{align*} 
The KKT conditions imply that $u_i = 1/\beta_i$ for each $i\in \cA$ maximizes the Lagrangian. For the supremum over $\mb{x}, \mb{u}$ in \eqref{eq:dual1} to stay finite, the second KKT condition is necessary and sufficient. Substituting these conditions in the Langrangian gives the following convex program.
\begin{align*}
\inf_{\mb{p}, \bm{\beta}, \bm{\theta}} &\quad f(\bm{\delta}) + \sum\limits_{j\in \cG}p_j -\sum\limits_{i\in \cA}\log \beta_i -n \\
\quad & p_j = \beta_{i} \, v_{ij} \, e^{-\delta_j} + \theta_{ij} \quad \forall (i,j) \in \cA \times \cG \\
\quad &\mb{p}, \bm{\beta}, \bm{\theta} \geq \mb{0}.
\end{align*}

Observe that we can remove $\bm{\theta}$ from the above program while making the first constraint an inequality.
By substituting $r_{j}=\log p_j, \gamma_i =-\log \beta_i $, the above program is equivalent to 
\begin{align*}
    \inf_{\mb{r}, \bm{\gamma}}\quad & f(\bm{\delta}) + \sum\limits_{j\in \cG} e^{r_j}+\sum\limits_{i\in \cA}\gamma_i +\sum\limits_{j\in \cG} -n \\
    \quad &r_{j}+\gamma_i+\delta_j \geq \log v_{ij} \quad \forall (i,j) \in \cA\times \cG.
\end{align*}
As \eqref{cp:p1} involves an infimum over $\bm{\delta}$, whenever $\delta_j<0$, we can increase it to $\delta_j=0$ without increasing the value of $f(\bm{\delta})$ and maintaining feasibility. Using this observation and taking an infimum over $\bm{\delta}$, the above program gives \eqref{cp:fSR-dual}.
\end{proof}

 \begin{lemma} \label{lem:sym-eq-3}
     The optimal values of \eqref{cp:fSR-dual} and \eqref{eq:cvx-sym} are the same.
 \end{lemma}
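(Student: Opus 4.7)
The plan is to show that \eqref{eq:cvx-sym} is exactly the Lagrangian dual of \eqref{cp:fSR-dual}, and then invoke strong duality. First I would observe that \eqref{cp:fSR-dual} is a convex program: the objective is linear in $\gamma_i$ and $\delta_j$ and convex (via $e^{r_j}$) in $\mb{r}$, and all constraints are affine. This will let me apply strong duality once I exhibit a strictly feasible point, which is immediate: taking $r_j, \gamma_i$ and $\delta_j$ all large enough makes every constraint $r_j + \gamma_i + \delta_j \ge \log v_{ij}$ and $\delta_j \ge 0$ strict.

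Next I would form the Lagrangian, using dual variables $b_{ij} \ge 0$ for the constraints $r_j + \gamma_i + \delta_j \ge \log v_{ij}$ and $\mu_j \ge 0$ for the constraints $\delta_j \ge 0$. Writing
\begin{align*}
L(\mb{r}, \bm{\gamma}, \bm{\delta}; \mb{b}, \bm{\mu}) &= \sum_{j\in \cG} e^{r_j} + \sum_{i\in \cA}\gamma_i + \sum_{j\in \cG}\delta_j - n \\
&\quad + \sum_{i,j} b_{ij}\bigl(\log v_{ij} - r_j - \gamma_i - \delta_j\bigr) - \sum_{j\in \cG}\mu_j\, \delta_j,
\end{align*}
setting $\partial_{r_j} L = 0$ yields $e^{r_j} = \sum_i b_{ij}$; setting $\partial_{\gamma_i} L = 0$ yields $\sum_j b_{ij} = 1$; and setting $\partial_{\delta_j} L = 0$ yields $\mu_j = 1 - \sum_i b_{ij}$, which combined with $\mu_j \ge 0$ gives $\sum_i b_{ij} \le 1$. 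These are precisely the constraints of \eqref{eq:cvx-sym}.

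The key step will then be substituting back. Using $e^{r_j} = \sum_i b_{ij}$, the term $\sum_j e^{r_j} - \sum_j r_j \sum_i b_{ij}$ simplifies to $\sum_j (\sum_i b_{ij}) - \sum_j (\sum_i b_{ij}) \log(\sum_i b_{ij})$. Using $\sum_j b_{ij} = 1$, the $\gamma_i$ terms cancel; using $\mu_j = 1 - \sum_i b_{ij}$, the $\delta_j$ terms cancel. Finally, summing $\sum_{i,j} b_{ij} = n$ (from $\sum_j b_{ij} = 1$) cancels the leading $-n$. What remains is exactly
\[
\sum_{i,j} b_{ij} \log v_{ij} - \sum_{j\in \cG}\Bigl(\sum_{i\in \cA} b_{ij}\Bigr) \log\Bigl(\sum_{i\in \cA} b_{ij}\Bigr),
\]
the objective of \eqref{eq:cvx-sym}. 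Invoking strong duality (Slater) then gives equality of optimal values, completing the proof.

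The only non-routine step I anticipate is handling the cleanup of Lagrangian terms and confirming Slater in a boundary case (namely when some $v_{ij} = 0$, where $\log v_{ij} = -\infty$ formally); this is handled by restricting attention to edges where $v_{ij} > 0$ and noting that Assumption~\ref{assumption:1} still guarantees a feasible assignment, so both programs behave identically on the effective support.
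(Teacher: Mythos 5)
Your proposal is correct and follows essentially the same route as the paper: form the Lagrangian dual of \eqref{cp:fSR-dual} with multipliers $b_{ij}$ and $\mu_j$ (the paper's $\tau_j$), read off $e^{r_j}=\sum_i b_{ij}$, $\sum_j b_{ij}=1$, and $\sum_i b_{ij}\le 1$ from stationarity/finiteness, substitute back to recover the objective of \eqref{eq:cvx-sym} using $\sum_{i,j}b_{ij}=n$, and close with Slater. The only cosmetic difference is that the paper verifies Slater without comment while you exhibit the strictly feasible point explicitly and flag the $v_{ij}=0$ edge case; both are fine.
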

\begin{proof}
Let $b_{ij}$ be the Lagrange dual variable associated with constraint $r_j+\gamma_i+\delta_j \geq \log v_{ij}$ of \eqref{cp:fSR-dual} and let $\tau_{ij}$ be the Lagrange dual variable associated with constraint $\delta_{ij} \geq 0$. The Lagrangian of \eqref{cp:fSR-dual} is defined as follows
\begin{align*}
   L(\mb{r},\bm{\gamma},\bm{\delta},\mb{b},\bm{\tau}) &= \sum\limits_{j\in \cG}e^{r_j}+\sum\limits_{i\in \cA}\gamma_{i} +\sum\limits_{j\in \cG}\delta_{j}-n+\sum\limits_{i,j}b_{ij}(\log v_{ij}-r_j-\gamma_i-\delta_j)-\sum\limits_{j\in \cG}\delta_j\tau_j  \\
    &=\sum\limits_{j\in \cG}(e^{r_j}-(\sum\limits_{i\in \cA}b_{ij})r_j) +\sum\limits_{i\in \cA}\gamma_i(1-\sum\limits_{j\in \cG}b_{ij}) \sum\limits_{j\in \cG}\delta_j(1-\tau_j-\sum\limits_{i\in \cA}b_{ij})+\sum\limits_{i,j}b_{ij}\log v_{ij} -n .
\end{align*}
The Lagrange dual of \eqref{cp:fSR-dual} is given by 
\begin{equation}
    g(\mb{b},\bm{\tau}) = \inf\limits_{\bm{\delta} \geq 0,\mb{r},\bm{\gamma}} L(\mb{r},\bm{\gamma},\bm{\delta},\mb{b},\bm{\tau}). \label{eq:dual2}
\end{equation}
One can verify that Slater's condition is satisfied by \eqref{cp:fSR-dual}. So, the supremum of \eqref{eq:dual2} with $\mb{b},\bm{\tau} \geq 0$ is equal to the optimum of \eqref{cp:fSR-dual}.

The KKT conditions for the Langrangian give
\begin{align*}
    e^{r_j} - \sum_{i\in \cA} b_{ij} = 0 \quad \quad 1-\sum_{j \in \cG} b_{ij} = 0 \quad \quad
    1-\tau_j - \sum_{j\in A} b_{ij} = 0.
\end{align*}
The KKT conditions imply $r_{j}= \log\left(\sum\limits_{i\in \cA}b_{ij}\right)$ for each $j \in \cG$ minimizes the Lagrangian. For the infimum over $\bm{\gamma}, \bm{\delta}$ in \eqref{eq:dual2} to stay finite, the conditions $1=\sum\limits_{j\in \cG}b_{ij}$ and $1-\tau_{j}=\sum\limits_{i\in \cA}b_{ij}$ are necessary and sufficient. Substituting these conditions in the Lagrangian, we get 
\begin{align*}
    \sup_{\mb{b}, \bm{\tau}}\quad & \sum\limits_{i,j}b_{ij}\log v_{ij}-\sum\limits_{j\in \cG} \sum\limits_{i\in \cA}b_{ij} \log \left(\sum\limits_{i\in \cA}b_{ij}\right)+ \sum\limits_{j\in \cG} \sum\limits_{i\in \cA}b_{ij}-n \\
    \quad &\sum\limits_{j\in \cG}b_{ij}=1 \\
    \quad &\sum\limits_{i\in \cA}b_{ij}=1-\tau_{j} \\
    \quad &\mb{b}, \bm{\tau} \geq \mb{0} .
\end{align*}
Observe that the supremum in the above program can be switched to maximum as the feasible region is compact and the objective is bounded. Also note that $\sum_{i,j} b_{ij} = n$ for any $\mb{b}$ in the feasible region. As a result, the last two terms in the objective cancel each other. Finally, on substituting $q_j = \sum_{i\in \cA} b_{ij}$ in the above program, we obtain \eqref{eq:cvx-sym}.
\end{proof}

\subsection{Generalization to Weighted Nash Social Welfare} \label{app:asym}
\label{sec:asymmetric}
Given an instance of weighted Nash Social Welfare $(\cA, \cG, \mb{v}, \mb{w})$ where $\sum\limits_{i\in \cA}w_i =1$ and $\mb{w}\geq \mb{0}$, we introduce the following program as a generalization of \eqref{eq:sym-poly} program.
\begin{align}
\label{cp:LogConcave-Asym}
    \max_{\mb{x} \geq 0}\min_{\mb{y} > 0} \quad &  \;\sum_{i \in \cA}w_i \log\left(\sum_{j \in \cG} x_{ij} \; v_{ij} \;y_j^{1/w_{i}}\right) \tag{LogConcave-Weighted}\\
    \mathrm{s.t.} \quad &  \sum_{i\in \cA} x_{ij} = 1 \quad \forall j \in \cG\notag\\
      \quad & \prod_{j \in S} y_j \geq 1 \quad \forall S \in \binom{\cG}{n}. \notag
\end{align}
Observe that the feasible region of \eqref{cp:LogConcave-Asym} is given by $\x \in \mathcal{P}$ and $\y \in \mathcal{Q}$, which is identical to that of \eqref{eq:sym-poly}.

The main result of this section is the following.
\begin{theorem}\label{thm:asym-eq}
    The optimal values of \eqref{cp:LogConcave-Asym} and \nameref{eq:b-cvx} are the same.
\end{theorem}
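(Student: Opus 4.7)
The plan is to mimic the three-dual chain used for Theorem~\ref{thm:sym-eq} (Lemmas~\ref{lem:sym-eq-1}--\ref{lem:sym-eq-3}), tracking how the weights $\mb{w}$ flow through each step. First, I would apply Lemma~\ref{lem:anari_nash_lemma} to the polynomial-like function $p_{\mb{x}}(\y) := \prod_{i\in\cA}\bigl(\sum_{j\in\cG} x_{ij} v_{ij} y_j^{1/w_i}\bigr)^{w_i}$, so that $\log p_{\mb{x}}(\y)$ coincides with the inner objective of \eqref{cp:LogConcave-Asym}. The hypotheses are easy to verify: scaling $\y\mapsto\alpha\y$ multiplies the $i$-th inner sum by $\alpha^{1/w_i}$, so the $i$-th factor contributes $\alpha$ and $p_{\mb{x}}(\alpha\y)=\alpha^n p_{\mb{x}}(\y)$; and in the variables $z_j=\log y_j$ each inner sum $\sum_j x_{ij}v_{ij}e^{z_j/w_i}$ is a sum of log-convex functions hence log-convex, so $\log p_{\mb{x}}$ is convex in $\log\y$ (weights $w_i\geq 0$ preserve convexity). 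Substituting $\delta_j=-\log y_j$, taking $\max$ over $\x\in\mathcal{P}$, swapping via Theorem~\ref{lem:sion-minimax} on the compact domains of $\x$ and $\bm\alpha$, and using exactly the same shift-invariance argument as in the proof of Lemma~\ref{lem:sym-eq-1} (which still applies because the shift $\delta_j\mapsto\delta_j+t$ changes the first term by $-nt$ and the $\bm\alpha$-term by $t\sum_j\alpha_j=nt$) reduces \eqref{cp:LogConcave-Asym} to
\begin{align*}
\inf_{\bm\delta}\,\max_{\x\in\mathcal{P}}\;\sum_{i\in\cA} w_i\log\!\Bigl(\sum_{j\in\cG} x_{ij}v_{ij}e^{-\delta_j/w_i}\Bigr)+\sum_{j\in\cG}\max(0,\delta_j).
\end{align*}

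Second, I would dualize the inner maximization as in Lemma~\ref{lem:sym-eq-2}. Writing $u_i\le\sum_j x_{ij}v_{ij}e^{-\delta_j/w_i}$ and introducing multipliers $\beta_i,p_j,\theta_{ij}\ge 0$, the KKT conditions yield $u_i=w_i/\beta_i$ (contributing $w_i\log(w_i/\beta_i)-w_i$ to the Lagrangian) and $p_j\ge\beta_i v_{ij}e^{-\delta_j/w_i}$ from the $x_{ij}$-coefficients. Substituting $r_j=\log p_j$, $\gamma_i=-\log\beta_i$ and restricting to $\bm\delta\ge\bm 0$ yields the weighted analogue of \eqref{cp:fSR-dual},
\begin{align*}
\inf_{\mb r,\bm\gamma,\bm\delta\ge\bm 0}\;\sum_{j\in\cG} e^{r_j}+\sum_{i\in\cA} w_i\gamma_i+\sum_{j\in\cG}\delta_j+\sum_{i\in\cA} w_i\log w_i-1\;\text{ s.t. }\; r_j+\gamma_i+\delta_j/w_i\ge\log v_{ij}\;\forall (i,j).
\end{align*}

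Third, I would dualize this program a second time, with multipliers $b_{ij}\ge 0$ for the inequality constraints and $\tau_j\ge 0$ for $\delta_j\ge 0$. Minimizing over $r_j$ gives $r_j=\log(\sum_i b_{ij})$ and contributes $\sum_i b_{ij}-(\sum_i b_{ij})\log(\sum_i b_{ij})$; linearity in $\gamma_i$ forces the feasibility relation $\sum_j b_{ij}=w_i$; and linearity in $\delta_j\ge 0$ forces $\sum_i b_{ij}/w_i\le 1$. The identity $\sum_{i,j}b_{ij}=\sum_i w_i=1$ cancels the constant $-1$. After rescaling $b_{ij}\mapsto w_i b_{ij}$, the constraints become exactly $\sum_j b_{ij}=1$ and $\sum_i b_{ij}\le 1$, and the objective becomes
\begin{align*}
\sum_{i,j} w_i b_{ij}\log v_{ij}-\sum_j\Bigl(\sum_i w_i b_{ij}\Bigr)\log\Bigl(\sum_i w_i b_{ij}\Bigr)+\sum_i w_i\log w_i=f_{\cvx}(\mb b),
\end{align*}
which is precisely \nameref{eq:b-cvx}. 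Strong duality at both steps then yields equality of the optima.

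The main obstacle is bookkeeping rather than conceptual: I must carefully track how $w_i$ enters as an exponent on $y_j$ in step one, as the coefficient on $\gamma_i$ and as a divisor of $\delta_j$ in the intermediate dual, and finally as the rescaling factor $b_{ij}\mapsto w_i b_{ij}$ that converts the ``wrong'' normalization $\sum_j b_{ij}=w_i$ into the matching-polytope normalization. I also need to verify Slater's condition at each dualization; this goes through as in the unweighted case since $x_{ij}=1/n$ is a strictly feasible interior point of $\mathcal{P}$ for the first dual, and for the second dual one can produce a strictly feasible $(\mb r,\bm\gamma,\bm\delta)$ by taking $\delta_j$ large enough, which is enough to invoke strong convex duality.
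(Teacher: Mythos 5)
Your proposal follows essentially the same route as the paper: the identical three-step dual chain (Lemma~\ref{lem:anari_nash_lemma} applied to $p_{\mb{x}}(\y)=\prod_i(\sum_j x_{ij}v_{ij}y_j^{1/w_i})^{w_i}$ plus Sion's minimax, then two Lagrangian dualizations with the same KKT substitutions and the final rescaling $\hat b_{ij}=w_i b_{ij}$), and the bookkeeping of the weights, the constant $\sum_i w_i\log w_i - 1$, and the Slater conditions all check out. The only substantive addition is that you actually verify the log-convexity claim (Fact~\ref{claim:log-convex}) that the paper states without proof, which is a correct and welcome detail rather than a different approach.
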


We prove Theorem \ref{thm:asym-eq} analogously to Theorem \ref{thm:sym-eq}, starting with \eqref{cp:LogConcave-Asym} and deriving \nameref{eq:b-cvx} via a sequence of duals presented in Lemmas \ref{lem:asym-eq-1}, \ref{lem:asym-eq-2}, and \ref{lem:asym-eq-3}.

We start by establishing that \ref{cp:LogConcave-Asym} is indeed a relaxation of the weighted Nash Social Welfare, and the inner infimum is bounded in the following lemma.
\begin{lemma}
\label{lem:LogConcave-ASym-isrelax}
    For any integral allocation $\mb{x}\in \cP \cap \{0,1\}^{|\cA|\times |\cG|}$, 
    \begin{align*}
  \inf_{\mb{y}\in \cQ} \quad & \sum_{i \in \cA} w_i\log\left(\sum_{j \in \cG} x_{ij} \, v_{ij} \,y_j^{1/w_i}\right) =   \sum_{i \in \cA}w_i \log\left(\sum_{j \in \cG} x_{ij} \, v_{ij} \right)
    \end{align*}
\end{lemma}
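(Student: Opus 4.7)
The plan is to mirror the structure of the symmetric analog, Lemma \ref{lem:LogConcave-Sym-isrelax}, but replace the direct product expansion (which relied on having no exponents on the inner sums) by a weighted AM--GM argument that handles the fractional exponents $1/w_i$. The easy direction is that $\mb{y} = \mb{1} \in \cQ$, which immediately gives $\inf \leq \sum_{i \in \cA} w_i \log(\sum_{j} x_{ij} v_{ij})$; the content is in establishing the matching lower bound for every $\mb{y} \in \cQ$.

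Let $\sigma : \cG \to \cA$ be the integral assignment corresponding to $\mb{x}$ (so $\sigma(j) = i$ iff $x_{ij} = 1$), and set $V_i := \sum_{j \in \sigma^{-1}(i)} v_{ij}$. The case where some $V_i = 0$ makes both sides $-\infty$, so I may assume $\sigma$ is surjective and each $V_i > 0$. Applying the weighted AM--GM inequality to each inner sum, with convex weights $v_{ij}/V_i$ for $j \in \sigma^{-1}(i)$, yields
\[
\sum_{j \in \sigma^{-1}(i)} v_{ij}\, y_j^{1/w_i} \;=\; V_i \sum_{j \in \sigma^{-1}(i)} \frac{v_{ij}}{V_i}\, y_j^{1/w_i} \;\geq\; V_i \prod_{j \in \sigma^{-1}(i)} y_j^{v_{ij}/(w_i V_i)}.
\]
Taking $w_i \log(\cdot)$ of both sides and summing over $i \in \cA$ gives
\[
\sum_{i \in \cA} w_i \log\Bigl(\sum_{j \in \sigma^{-1}(i)} v_{ij}\, y_j^{1/w_i}\Bigr) \;\geq\; \sum_{i \in \cA} w_i \log V_i \;+\; \sum_{j \in \cG} \alpha_j \log y_j,
\]
where $\alpha_j := v_{\sigma(j),\, j} / V_{\sigma(j)}$.

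It remains to verify that the residual $\sum_{j \in \cG} \alpha_j \log y_j$ is nonnegative for every $\mb{y} \in \cQ$, and this is where the feasibility constraints $\prod_{j \in S} y_j \geq 1$ for all $S \in \binom{\cG}{n}$ enter. Observe that $\alpha_j \in [0, 1]$ (since $v_{ij} \leq V_i$) and $\sum_{j} \alpha_j = \sum_i \sum_{j \in \sigma^{-1}(i)} v_{ij}/V_i = n$, so $\alpha$ lies in the hypersimplex $\{\beta \in [0,1]^{|\cG|} : \sum_j \beta_j = n\}$, whose vertices are exactly the indicators $\mb{1}_S$ for $|S| = n$. Since $\alpha \mapsto \sum_j \alpha_j \log y_j$ is linear, its minimum over this polytope equals $\min_{|S| = n} \sum_{j \in S} \log y_j$, which is $\geq 0$ by the constraint on $\mb{y}$. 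I expect the main obstacle to be just this last step: recognizing that $\alpha$ lies in the hypersimplex and that this reduction to a minimum over size-$n$ subsets is the weighted analog of the fact used in the symmetric proof that $y^S \geq 1$ for every transversal $S \in \cS$.
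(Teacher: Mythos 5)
Your proof is correct, but the second half takes a genuinely different route from the paper's. Both arguments begin by applying a per-agent convexity inequality to strip the exponent $1/w_i$ off the inner sum: the paper uses the weighted power-mean inequality with exponents $1$ and $1/w_i \geq 1$, which reduces each term $w_i\log\bigl(\sum_{j\in S_i} v_{ij} y_j^{1/w_i}/V_i\bigr)$ to the \emph{unweighted} quantity $\log\bigl(\sum_{j\in S_i} v_{ij} y_j/V_i\bigr)$ and then falls back on the transversal-expansion argument of Lemma~\ref{lem:LogConcave-Sym-isrelax} (note that the nonnegativity there really only holds for the sum over agents, via $\prod_i \sum_{j\in S_i} v_{ij}y_j/V_i = \sum_{S\in\cS} y^S\prod(\cdot) \geq 1$, not term by term); you instead use weighted AM--GM, which linearizes everything in $\log y_j$ and leaves the residual $\sum_j \alpha_j\log y_j$ with $\alpha$ in the hypersimplex $\{\beta\in[0,1]^{|\cG|}:\sum_j\beta_j=n\}$, whose vertex structure immediately reduces nonnegativity to the defining constraints $\prod_{j\in S} y_j\geq 1$ of $\cQ$. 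Your finishing step is polyhedral rather than combinatorial, applies directly to the aggregate (so you never need a per-agent nonnegativity claim), and is in fact a direct re-derivation of the special case of Lemma~\ref{lem:anari_nash_lemma} needed here --- the same hypersimplex over $\bm{\alpha}$ appears there. The paper's route has the virtue of exhibiting the weighted lemma as a corollary of the unweighted one via a single application of power means; yours is self-contained and arguably tighter in its bookkeeping. The degenerate cases ($V_i=0$, $\alpha_j=0$) are handled adequately in your write-up.
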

\begin{proof}
For each $i$, let $S_i = \{ j \in \cG: x_{ij} = 1\}$ be the allocation corresponding to $\mb{x}$.
Then for any $\mb{y}\in \cQ$, 
    \begin{align}
        \sum_{i \in \cA} w_i\log\left(\sum_{j \in \cG} x_{ij} \, v_{ij} \,y_j^{1/w_i}\right) &-  \sum_{i \in \cA}w_i \log\left(\sum_{j \in \cG}  x_{ij} \, v_{ij}\right)=  \sum_{i \in \cA}w_i \log\left(\frac{\sum_{j \in S_i}  v_{ij} \;y_j^{1/w_i}}{\sum\limits_{j\in S_i}v_{ij}}\right). \label{eq:w-difference}
    \end{align}

Now for positive reals $c_1,\dots,c_m$ with $\sum\limits_{j=1}^m c_j =1$, and $0 \leq p \leq q$, the weighted power mean inequality states that for any $\mb{z}\in \RR_{\geq 0}^m$,
   \begin{align}
       \left(\sum\limits_{j=1}^m c_jz_j^p\right)^{1/p} \leq \left(\sum\limits_{j=1}^m c_jz_j^q\right)^{1/q} . \label{eq:wpm}
   \end{align}
This inequality follows from Jensen's inequality.

For each $i\in \cA$, define $q_i = \frac{1}{w_i}$ and $c^{(i)}_j = \frac{v_{ij}}{\sum\limits_{j\in S_i}v_{ij}}$ for every $j \in S_i$. Since $q_i = \frac{1}{w_i} \geq 1$, using equation \ref{eq:wpm}, we get 
    \begin{align*}
        w_i \log\left(\frac{\sum_{j \in S_i}  v_{ij} \;y_j^{1/w_i}}{\sum\limits_{j\in S_i}v_{ij}}\right) \geq \log\left(\frac{\sum_{j \in S_i}  v_{ij} \;y_j}{\sum\limits_{j\in S_i}v_{ij}} \right)
       \geq 0
    \end{align*}
    for each agent $i$. Summing this inequality over all agents and substituting in \eqref{eq:w-difference} gives
    \begin{align*}
         \sum_{i \in \cA} w_i\log\left(\sum_{j \in \cG} x_{ij} \; v_{ij} \;y_j^{1/w_i}\right) \geq \sum_{i \in \cA}w_i \log\left(\sum_{j \in \cG}  x_{ij}v_{ij}\right).
    \end{align*}
    Observe that equality holds when $y_j=1$ for all $j \in \cG$.
\end{proof}

\begin{lemma} \label{lem:asym-eq-1}
    The optimal value of \eqref{cp:LogConcave-Asym} is the same as 
    \begin{align}\label{cp:Weighted_primal_1} 
        \inf\limits_{\bm{\delta}} \max\limits_{\x\in \mathcal{P}} \; \sum\limits_{i\in \cA}w_i\log\left(\sum\limits_{j\in \cG}x_{ij}v_{ij}e^{-\delta_j/w_i}\right)+\sum\limits_{j\in \cG}\max(0,\delta_j).  \tag{Weighted-Primal}
    \end{align}
\end{lemma}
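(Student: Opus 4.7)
\textbf{Proof Proposal for Lemma~\ref{lem:asym-eq-1}.}

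The plan is to adapt the argument of Lemma~\ref{lem:sym-eq-1} to the weighted setting. Fix $\x \in \cP$ and set $p_{\x}(\y) := \prod_{i\in\cA} \bigl(\sum_{j\in\cG} x_{ij}\,v_{ij}\,y_j^{1/w_i}\bigr)^{w_i}$. First I would verify that $p_\x$ satisfies the two hypotheses of Lemma~\ref{lem:anari_nash_lemma}: homogeneity of degree $n$ follows from $p_\x(\alpha\y) = \prod_i (\alpha^{1/w_i})^{w_i} \cdot p_\x(\y) \cdot (\text{per-agent factor}) = \alpha^{|\cA|}\,p_\x(\y)$; and log-convexity in $\log\y$ follows because each $w_i\log(\sum_j x_{ij}v_{ij}e^{(\log y_j)/w_i})$ is log-sum-exp of a linear function of $\log\y$, hence convex, and summing preserves convexity.

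Next, I would apply Lemma~\ref{lem:anari_nash_lemma} to $p_\x$ to rewrite $\inf_{\y\in\cQ} \log p_\x(\y)$ as $\sup_{\bm{\alpha}\in[0,1]^{|\cG|},\;\sum_j \alpha_j=n}\inf_{\y>0}\bigl[\log p_\x(\y) - \sum_j \alpha_j \log y_j\bigr]$. Substituting $\delta_j = -\log y_j$ recasts the inner infimum as one over $\bm{\delta}\in\RR^{|\cG|}$, with integrand $\sum_i w_i \log(\sum_j x_{ij}v_{ij}e^{-\delta_j/w_i}) + \sum_j \alpha_j \delta_j$. Taking the maximum over $\x\in\cP$ and applying Sion's minimax (Theorem~\ref{lem:sion-minimax}) allows the outer pair $(\x,\bm{\alpha})$, which ranges over a compact convex set, to be swapped with the infimum over $\bm{\delta}$: the integrand is concave in $\x$ (log of an affine function), linear in $\bm{\alpha}$, and convex in $\bm{\delta}$ (log-sum-exp plus a linear term), so the hypotheses are met.

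It remains to show that, after the swap, the inner $\max_{\x,\bm{\alpha}}$ collapses to $\max_{\x\in\cP}\bigl[\sum_i w_i \log(\sum_j x_{ij}v_{ij}e^{-\delta_j/w_i}) + \sum_j \max(0,\delta_j)\bigr]$. The direction $\le$ is immediate from $\alpha_j \delta_j \le \max(0,\delta_j)$ whenever $\alpha_j\in[0,1]$. For the reverse direction, I would use a translation-invariance trick: under $\delta_j \mapsto \delta_j + t$, the first term in the integrand changes by $-nt$ (since $\sum_i w_i \cdot (t/w_i)=nt$ across the $n$ agents), while $\sum_j \alpha_j \delta_j$ changes by $+nt$, so the integrand is invariant. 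Choosing $t=t_\delta$ so that exactly the top $n$ values of $\delta+t_\delta\bm{1}$ are nonnegative, the optimal $\bm{\alpha}$ is the indicator of those top-$n$ coordinates, giving $\sum_j \alpha_j(\delta_j+t_\delta) = \sum_j \max(0,\delta_j+t_\delta)$, and then infimizing over $\bm{\delta}$ absorbs the translation.

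I expect the main obstacle to be verifying the convexity hypothesis for Sion's theorem in the weighted case, specifically ensuring that $\sum_i w_i \log\bigl(\sum_j x_{ij}v_{ij}e^{-\delta_j/w_i}\bigr)$ is convex in $\bm{\delta}$. This is the point at which the weights $1/w_i$ inside the exponentials could conceivably spoil the structure, but each summand remains a log-sum-exp precomposed with a linear map $\delta_j \mapsto -\delta_j/w_i$, which is again convex; a nonnegative combination preserves convexity. Once this and the translation-invariance computation (which now carries an extra $1/w_i$ factor) are checked carefully, the rest is a direct rewrite of Lemma~\ref{lem:sym-eq-1}.
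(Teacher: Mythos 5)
Your proposal matches the paper's proof essentially step for step: the same $p_{\x}(\y)$ fed into Lemma~\ref{lem:anari_nash_lemma}, the same substitution $\delta_j=-\log y_j$, the same application of Sion's minimax, and the same translation-invariance argument to collapse the maximum over $\bm{\alpha}$ into $\sum_j\max(0,\delta_j)$. If anything, you supply details the paper defers — the log-sum-exp verification of Fact~\ref{claim:log-convex} and the check that $\sum_i w_i\cdot(t/w_i)=nt$ preserves the translation invariance in the weighted case — and both are correct.
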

The following fact is crucial to the proof of this lemma.
\begin{fact} \label{claim:log-convex}
    Let $p(\y) = w \log\left(\sum_{j=1}^m c_j \,y_j^{1/w}\right)$ with $w > 0$ and $c_j \geq 0$ for each $j$. Then $\log p(\y)$ is a convex function in $\log(\y)$.
\end{fact}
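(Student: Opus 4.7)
The plan is to prove the claim by a change of variables followed by a Hessian computation. Substituting $z_j = \log y_j$ so that ``convex in $\log\y$'' becomes ``convex in $\mb{z}$,'' we rewrite
\[
p(\y)\;=\;w\log g(\mb{z}),\qquad g(\mb{z})\;:=\;\sum_{j=1}^{m}c_{j}\,e^{z_{j}/w},
\]
so the goal reduces to showing $\log p = \log w + \log\log g(\mb{z})$ is convex in $\mb{z}$; equivalently, that $\log\log g$ is convex in $\mb{z}$.

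The first step is the intermediate fact that $f:=\log g$ is convex in $\mb{z}$. Writing $\log g(\mb{z})=\log\sum_{j}\exp(\log c_{j}+z_{j}/w)$ exhibits $\log g$ as the log-sum-exp function composed with an affine map in $\mb{z}$, which is convex; a direct computation gives $\nabla^{2} f = \tfrac{1}{w^{2}}\bigl(\mathrm{diag}(\mb{p})-\mb{p}\mb{p}^{\top}\bigr)\succeq 0$, where $p_{j}:=c_{j}e^{z_{j}/w}/g(\mb{z})$ are Gibbs probability weights summing to $1$. This already yields that $p=wf$ is convex in $\mb{z}$ — the weaker statement used in the downstream proof of Lemma~\ref{lem:asym-eq-1}.

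To upgrade to convexity of $\log p = \log w + \log f$, I would compute the Hessian of $\log f$ via the chain rule:
\[
\nabla^{2}\log f \;=\; \tfrac{1}{f}\,\nabla^{2} f \;-\; \tfrac{1}{f^{2}}\,(\nabla f)(\nabla f)^{\top}.
\]
Substituting $\nabla f=\tfrac{1}{w}\mb{p}$ and the expression for $\nabla^{2} f$ above, PSD-ness of $\nabla^{2}\log f$ reduces to the matrix inequality
\[
f\cdot\bigl(\mathrm{diag}(\mb{p})-\mb{p}\mb{p}^{\top}\bigr) \;\succeq\; \mb{p}\mb{p}^{\top}
\]
on the domain of interest, which I would attempt to verify using the explicit form $f = \log\sum_j c_j e^{z_j/w}$ together with $\sum_{j} p_{j}=1$ (testing on arbitrary vectors $\mb v$ gives the scalar inequality $f\cdot(\sum_j p_j v_j^2 - (\sum_j p_j v_j)^2)\ge (\sum_j p_j v_j)^2$, i.e.\ a variance-vs-squared-mean comparison weighted by $f$).

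The main obstacle is that the logarithm of a convex function is in general \emph{not} convex, so the matrix inequality above should not be expected to hold unrestrictedly and its verification will require nontrivial structural input. I expect the argument to rely on the implicit domain restriction needed merely for $\log p$ to be defined, namely $p>0$ (equivalently $g>1$, i.e.\ $f>0$), possibly supplemented by the feasibility region $\cQ=\{\y>\mb{0}:\prod_{j\in S}y_j\ge 1\ \forall\, S\in\binom{[m]}{n}\}$ within which this Fact is applied in the proof of Theorem~\ref{thm:asym-eq}. The delicate step is to show that on this restricted domain $f$ is large enough relative to the variance matrix $\mathrm{diag}(\mb{p})-\mb{p}\mb{p}^{\top}$ to dominate the negative rank-one term, and this is where the specific LSE structure of $g$ must enter most forcefully; if the required bound fails pointwise, one would need to exploit the $\cQ$-constraints (which force a lower bound on $g$ in terms of the weights $p_j$) to close the gap.
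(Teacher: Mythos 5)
Your first step already contains everything this Fact is meant to assert, and it matches what the paper (implicitly) relies on: the paper never proves the Fact, it invokes it as a standard log-sum-exp statement inside the proof of Lemma~\ref{lem:asym-eq-1}, and the statement as printed has a notational slip. The displayed formula is the formula for $\log p$ when $p(\y)=\bigl(\sum_{j}c_j y_j^{1/w}\bigr)^{w}$; the content actually used downstream is that $w\log\bigl(\sum_j c_j y_j^{1/w}\bigr)$ is convex in $\log\y$, since Lemma~\ref{lem:anari_nash_lemma} is applied to $p_{\x}(\y)=\prod_i\bigl(\sum_j x_{ij}v_{ij}y_j^{1/w_i}\bigr)^{w_i}$, whose logarithm is a sum of terms of exactly this form. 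Your substitution $z_j=\log y_j$ and the Hessian computation $\nabla^2 f=\tfrac1{w^2}\bigl(\mathrm{diag}(\mb{p})-\mb{p}\mb{p}^{\top}\bigr)\succeq 0$ (equivalently, log-sum-exp composed with the affine map $z_j\mapsto\log c_j+z_j/w$) is precisely the standard proof of that claim, so up to that point your argument is complete and is exactly what Lemma~\ref{lem:asym-eq-1} needs.

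The second half of your proposal, the attempted upgrade to convexity of $\log\bigl(w\log(\cdot)\bigr)$, is chasing the typo, and it cannot be closed: that literal statement is false and no domain restriction rescues it. Along the direction $\bm{1}$ one has $g(\mb{z}+t\bm{1})=e^{t/w}g(\mb{z})$, so $f(\mb{z}+t\bm{1})=f(\mb{z})+t/w$ is affine in $t$, and hence $t\mapsto\log f(\mb{z}+t\bm{1})=\log\bigl(f(\mb{z})+t/w\bigr)$ is strictly concave wherever it is defined (already for $m=1$, $c_1=1$, $w=1$ the function is $\log z_1$). In your own reduction this is visible by testing the matrix inequality on $\mb{v}=\bm{1}$: the left side is $f\cdot\bigl(\sum_j p_j-(\sum_j p_j)^2\bigr)=0$ while the right side is $1$. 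Since this failure occurs at every point with $f>0$, restricting to $p>0$ or to the full-dimensional region $\cQ$ cannot help. The fix is editorial, not mathematical: state the Fact as ``$w\log\bigl(\sum_j c_j y_j^{1/w}\bigr)$ is convex in $\log\y$'' (or define $p$ as the $w$-th power and keep the conclusion ``$\log p$ convex in $\log\y$''), end your proof after the Hessian computation, and the application in Lemma~\ref{lem:asym-eq-1} goes through unchanged.
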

\begin{proof}
For a fixed $\mb{x} \in \mathcal{P}$, the function 
\begin{equation*}
    p_{\x}(\y) = \prod_{i\in \cA}\left(\sum\limits_{j\in \cG}x_{ij} \, v_{ij} \, y_j^{1/w_i}\right)^{w_i}
\end{equation*} satisfies all the prerequisites of Lemma \ref{lem:anari_nash_lemma}. The first property is easy to verify and the second property follows from Fact \ref{claim:log-convex}. Therefore, by Lemma \ref{lem:anari_nash_lemma}, we get
\begin{align*}
   \inf_{\y > 0: y^S \geq 0, \forall S\in \binom{\cG}{n}} \log{p_{\x}(\y)} &= \inf_{\y > 0: y^S \geq 0, \forall S\in \binom{\cG}{n}}  \sum\limits_{i\in \cA}w_i \log \left(\sum\limits_{j\in \cG}x_{ij} \, v_{ij} \, y_j^{1/w_i} \right)\\
 &= \sup_{\bm{\alpha}\in [0,1]^{|\cG|}, \sum_j \alpha_j = n} \inf_{\y > 0} \; \sum\limits_{i\in \cA}w_i\log \left(\sum\limits_{j\in \cG}x_{ij}\,v_{ij}\, y_j^{1/w_i} \right)-\sum\limits_{j\in \cG}\alpha_j \log(y_j) .
\end{align*}
Substituting $\delta_j = -\log(y_j)$, and taking the supremum over $\x$, we get
\begin{align*}
\max_{\x \in \mathcal{P}} \inf_{\y > 0: y^S \geq 0, \forall S\in \binom{\cG}{n}} \log{p_{\x}(\y)} &= \sup_{\x \in \mathcal{P},\bm{\alpha}\in [0,1]^{|\cG|}, \sum_j \alpha_j = n}\inf_{\bm{\delta} } \; \sum\limits_{i\in \cA} w_i \log \left(\sum\limits_{j\in \cG}x_{ij} \, v_{ij} \, e^{-\delta_j/w_i} \right)+\sum\limits_{j\in \cG}\alpha_j \delta_j .
\end{align*}
As the domains of both $\x$ and $\bm{\alpha}$ are compact, using Theorem \ref{lem:sion-minimax}, we get
\begin{align*}
\max_{\x \in \mathcal{P}} \inf_{\y > 0: y^S \geq 0, \forall S\in \binom{\cG}{n}} &\sum\limits_{i\in \cA} w_i\log \left(\sum\limits_{j\in \cG}x_{ij} \, v_{ij} \, y_j^{1/w_i} \right)\\
 &= \inf_{\bm{\delta} } \;\max_{\x \in \mathcal{P}} \;\max_{\bm{\alpha}\in [0,1]^{|\cG|}, \sum_j \alpha_j = n}\; \sum\limits_{i\in \cA}w_i\log \left(\sum\limits_{j\in \cG}x_{ij} \, v_{ij} \, e^{-\delta_j/w_i} \right)+\sum\limits_{j\in \cG}\alpha_j \delta_j .
\end{align*}

Finally, we claim that 
\begin{align*}
   &\inf\limits_{\bm{\delta}}\; \max\limits_{\x\in \mathcal{P}}\;\max_{\bm{\alpha} \in [0,1]^{|\cG|}, \sum_j \alpha_j = n} \quad \sum\limits_{i\in \cA}w_i\log\left(\sum\limits_{j\in \cG}x_{ij}\, v_{ij}\,e^{-\delta_j/w_i}\right)+\sum\limits_{j\in \cG}\alpha_{j}\delta_j  \\
  &= \inf\limits_{\bm{\delta}} \; \max\limits_{\x\in \mathcal{P}} \quad\sum\limits_{i\in \cA} w_i \log\left(\sum\limits_{j\in \cG}x_{ij}\,v_{ij}\,e^{-\delta_j/w_i}\right)+\sum\limits_{j\in \cG}\max(0,\delta_j). 
\end{align*}
The proof of this claim is identical to the proof of the unweighted case in equation \eqref{eq:fin-eq}.
\end{proof}

\begin{lemma} \label{lem:asym-eq-2}
    The optimal value of \eqref{cp:Weighted_primal_1} is the same as that of the following program.
    \begin{align} \label{cp:weighted_dual_1}
        \inf_{\bm{\delta}, \mb{r}, \bm{\gamma}} \quad &\sum\limits_{j\in \cG} e^{r_j}+\sum\limits_{i\in \cA}w_i\gamma_i +\sum\limits_{j\in \cG}  \delta_j+\sum\limits_{i\in \cA}(w_i\log w_i-w_i) \tag{Weighted-Dual} \\
    \quad &r_{j}+\gamma_i+\frac{\delta_j}{w_i} \geq \log v_{ij} \quad \forall (i, j) \in \cA\times \cG. \notag
    \end{align} 
\end{lemma}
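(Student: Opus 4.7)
The plan is to mirror the proof of Lemma~\ref{lem:sym-eq-2} almost step-for-step, with weights $w_i$ threaded through at the appropriate places. First, I would fix $\bm{\delta}$ and rewrite the inner maximum of \eqref{cp:Weighted_primal_1} as a concave maximization by introducing auxiliary variables $\mb{u}$ with constraints $u_i \le \sum_{j\in\cG} x_{ij}\, v_{ij}\, e^{-\delta_j/w_i}$ for each $i\in\cA$, and objective $\sum_{i\in\cA} w_i \log u_i + f(\bm{\delta})$, where $f(\bm{\delta}) := \sum_{j\in\cG}\max(0,\delta_j)$. This program satisfies Slater's condition (the uniform $x_{ij}=1/n$ gives a strictly feasible interior point, and all other constraints are affine), so strong duality holds.

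Next, I would form the Lagrangian with multipliers $\beta_i \ge 0$ for the $u_i$-constraints, $p_j \ge 0$ for the item constraints $\sum_i x_{ij} \le 1$, and $\theta_{ij} \ge 0$ for the non-negativity of $x_{ij}$. Writing out the KKT stationarity conditions gives $\partial L/\partial u_i = w_i/u_i - \beta_i = 0$, hence $u_i = w_i/\beta_i$ (this is where weights enter nontrivially, compared to $u_i = 1/\beta_i$ in the unweighted case), and $\partial L/\partial x_{ij} = \beta_i v_{ij} e^{-\delta_j/w_i} - p_j + \theta_{ij} = 0$. Substituting $u_i = w_i/\beta_i$ back into the Lagrangian collapses the $\mb{u}$ terms to $\sum_i w_i\log(w_i/\beta_i) - \sum_i w_i$, while the stationarity condition in $\x$ (needed to keep the supremum over $\x \ge 0$ finite) is the inequality $p_j \ge \beta_i v_{ij} e^{-\delta_j/w_i}$ after eliminating $\theta_{ij}$.

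After performing the change of variables $r_j = \log p_j$ and $\gamma_i = -\log \beta_i$, the Lagrange dual for fixed $\bm{\delta}$ becomes
\begin{align*}
    \inf_{\mb{r},\bm{\gamma}} \quad & f(\bm{\delta}) + \sum_{j\in\cG} e^{r_j} + \sum_{i\in\cA} w_i \gamma_i + \sum_{i\in\cA}(w_i \log w_i - w_i) \\
    \mathrm{s.t.} \quad & r_j + \gamma_i + \tfrac{\delta_j}{w_i} \ge \log v_{ij} \quad \forall (i,j)\in \cA\times\cG.
\end{align*}
Finally, I would take the outer infimum over $\bm{\delta}$ and argue, exactly as in the unweighted case, that whenever $\delta_j < 0$ one can lift $\delta_j$ to $0$ without changing $f(\bm{\delta})$ and while only relaxing the constraints (since $\delta_j/w_i$ is increasing in $\delta_j$ and $w_i > 0$); this lets us replace $f(\bm{\delta})$ with $\sum_j \delta_j$ and drops the sign constraint, recovering \eqref{cp:weighted_dual_1}.

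The main things to be careful about are the weights appearing inside the exponent $e^{-\delta_j/w_i}$ (which does not affect concavity of the primal in $u_i$, only the constraint data) and the appearance of the term $\sum_i(w_i\log w_i - w_i)$ from substituting $u_i = w_i/\beta_i$ — this is the one place where the derivation genuinely differs from the unweighted calculation, and it is what produces the extra constant $\sum_i w_i \log w_i$ that distinguishes \nameref{eq:b-cvx} from \eqref{eq:cvx-sym} after the second duality step in Lemma~\ref{lem:asym-eq-3}. The rest is bookkeeping.
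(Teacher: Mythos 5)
Your proposal is correct and follows essentially the same route as the paper's proof: the same reformulation of the inner maximum with auxiliary variables $\mb{u}$, the same Lagrangian with multipliers $\bm{\beta},\mb{p},\bm{\theta}$, the same KKT substitution $u_i = w_i/\beta_i$ producing the $\sum_{i\in\cA}(w_i\log w_i - w_i)$ term, the same change of variables $r_j=\log p_j$, $\gamma_i=-\log\beta_i$, and the same final observation that negative $\delta_j$ can be lifted to $0$. You also correctly identify the single point where the weighted computation genuinely departs from Lemma~\ref{lem:sym-eq-2}.
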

\begin{proof}
    For a fixed $\bm{\delta}$, let us first re-write the internal maximum of \eqref{cp:Weighted_primal_1} as 
    \begin{align}
\label{cp:EG-weighted}
    \max_{\x, \mb{u}} \quad & \sum\limits_{i\in \cA} w_i \log u_i + f(\bm{\delta}) \\
    \quad &u_i \leq \sum\limits_{j\in \cG}x_{ij}\, v_{ij} \,e^{-\delta_j/w_i} \quad \forall i \in \mathcal{\cA} \nonumber\\
    \quad &\sum\limits_{i\in \cA}x_{ij} \leq 1 \quad \forall j \in \cG \nonumber \\ 
    \quad & \mb{x} \geq \bm{0}\,, \nonumber
\end{align}
where $f(\bm{\delta}) = \sum_{j \in \cG} \max(0, \delta_j)$. 

Let $\beta_i$, $p_j$, and $\theta_{ij}$ be the Lagrange dual variables associated with the constraints corresponding to agent $i$, item $j$, and agent-item pair$(i,j)$, respectively. The Lagrangian of the above convex program is defined as follows
\begin{align*}
L(\x, &\mb{u}, \bm{\beta}, \bm{\theta}, \mb{p}) \\
&= f(\bm{\delta}) +  \sum\limits_{i\in \cA} w_i\log u_i +\sum\limits_{i\in \cA}\beta_i \left(\sum\limits_{j\in \cG}x_{ij}\, v_{ij} \,e^{-\delta_j/w_i} -u_i\right) +\sum\limits_{j\in \cG}p_j\left(1-\sum\limits_{i\in \cA}x_{ij}\right)+\sum\limits_{i,j}\theta_{ij}x_{ij} \\
&= f(\bm{\delta})+\left[\sum\limits_{i\in \cA} (w_i\log u_i-\beta_i u_i) +\sum\limits_{i,j}x_{ij}\left(\beta_{i}\, v_{ij} \,e^{-\delta_j/w_i}+\theta_{ij}-p_j\right)+\sum\limits_{j\in \cG}p_j\right].
\end{align*}
The Lagrange dual of \eqref{cp:EG-weighted} is given by 
\begin{equation*}
   g(\bm{\beta},\bm{\theta},\bm{p}) =  \max_{\x \in \mathcal{P},\mb{u} \geq 0} L(\x, \mb{u}, \bm{\beta}, \bm{\theta}, \mb{p}).
\end{equation*}
Observe that solution $x_{ij}=1/n$ is in the relative interior of $\cP$. Since all the constraints are affine, Slater's condition is satisfied. Thus the optimum value of the infimum of Lagrange dual over $\bm{\beta},\bm{\theta},\bm{p} \geq 0$ is exactly equal to the optimum of \eqref{cp:EG-weighted}.

The KKT conditions for the Lagrangian imply
\begin{align*}
    \frac{w_i}{u_i} - \beta_i &= 0 \quad \forall i \in \cA \\
    \beta_i \, v_{ij} \,e^{-\delta_j/w_i} - \sum_{j \in \cG} p_j + \theta_{ij} &= 0 \quad \forall (i,j) \in \cA \times \cG.
\end{align*} 
The KKT conditions imply that $u_i = w_i/\beta_i$ for each $i\in \cA$ maximizes the Lagrangian. For the supremum over $\mb{x}, \mb{u}$ in \eqref{eq:dual2} to stay finite, the second KKT condition is necessary and sufficient. Substituting these conditions in the Langrangian gives the following convex program.
\begin{align*}
\inf_{\mb{p}, \bm{\beta}, \bm{\theta}} &\quad f(\bm{\delta}) + \sum\limits_{j\in \cG}p_j +\sum_{i \in \cA} (w_i \log w_i-w_i)-\sum\limits_{i\in \cA}w_i\log \beta_i  \\
\quad & p_j = \beta_{i} \, v_{ij} \, e^{-\delta_j/w_i} + \theta_{ij} \quad \forall (i,j) \in \cA \times \cG \\
\quad &\mb{p}, \bm{\beta}, \bm{\theta} \geq 0
\end{align*}

Observe that we can remove $\bm{\theta}$ from the above program while making the first constraint an inequality.
By substituting $r_{j}=\log p_j, \gamma_i =-\log \beta_i $, the above program is equivalently to
\begin{align*}
    \inf_{\mb{r}, \bm{\gamma}}\quad & f(\bm{\delta}) + \sum\limits_{j\in \cG} e^{r_j}+\sum\limits_{i\in \cA}w_i \,\gamma_i +\sum\limits_{i\in \cA} (w_i \log{w_i} - w_i) \\
    \quad &r_{j}+\gamma_i+\frac{\delta_j}{w_i} \geq \log v_{ij} \quad \forall (i,j) \in \cA\times \cG.
\end{align*}
As \eqref{cp:Weighted_primal_1} involves an infimum over $\bm{\delta}$, whenever $\delta_j<0$, we can increase it to $\delta_j=0$ without increasing the value of $f(\bm{\delta})$ and maintaining feasibility in the above program. Using this observation and taking an infimum over $\bm{\delta}$ gives \eqref{cp:weighted_dual_1}.
\end{proof}
\begin{lemma} \label{lem:asym-eq-3}
    The optimal value of \eqref{cp:weighted_dual_1} is the same as that of  \nameref{eq:b-cvx}.
\end{lemma}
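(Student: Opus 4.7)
The plan is to mirror the proof of Lemma \ref{lem:sym-eq-3} by computing the Lagrangian dual of \eqref{cp:weighted_dual_1}. I introduce multipliers $b_{ij} \geq 0$ for each constraint $r_j + \gamma_i + \delta_j/w_i \geq \log v_{ij}$ and $\tau_j \geq 0$ for each constraint $\delta_j \geq 0$. Slater's condition is easily verified (take any $\delta_j > 0$ and $r_j, \gamma_i$ sufficiently large), so strong duality applies and the optimal value of \eqref{cp:weighted_dual_1} equals the supremum over $b_{ij}, \tau_j \geq 0$ of the infimum of the Lagrangian over the primal variables.

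Grouping the Lagrangian by primal variable, the coefficient of $r_j$ is $e^{r_j} - r_j \sum_i b_{ij}$, the coefficient of $\gamma_i$ is $\gamma_i(w_i - \sum_j b_{ij})$, and the coefficient of $\delta_j$ (which is free after dualizing $\delta_j \geq 0$ via $\tau_j$) is $\delta_j(1 - \tau_j - \sum_i b_{ij}/w_i)$. Minimizing gives $r_j = \log(\sum_i b_{ij})$, forces $\sum_j b_{ij} = w_i$ for each agent $i$, and forces $\tau_j = 1 - \sum_i b_{ij}/w_i$; combined with $\tau_j \geq 0$, the last condition enforces $\sum_i b_{ij}/w_i \leq 1$. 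The residual constant $\sum_j \sum_i b_{ij} = \sum_i w_i = 1$ exactly cancels the $-\sum_i w_i = -1$ coming from the original objective, leaving the dual as
\begin{align*}
    \sup_{\mb{b} \geq 0} \quad & \sum_{i,j} b_{ij} \log v_{ij} - \sum_{j \in \cG} \Bigl(\sum_{i \in \cA} b_{ij}\Bigr) \log\Bigl(\sum_{i \in \cA} b_{ij}\Bigr) + \sum_{i \in \cA} w_i \log w_i \\
    \text{s.t.} \quad & \sum_{j \in \cG} b_{ij} = w_i \;\; \forall i \in \cA, \qquad \sum_{i \in \cA} b_{ij}/w_i \leq 1 \;\; \forall j \in \cG.
\end{align*}

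The final step is the change of variable $b'_{ij} := b_{ij}/w_i$. Under this substitution the constraints become $\sum_j b'_{ij} = 1$ and $\sum_i b'_{ij} \leq 1$, which is precisely the feasibility polytope of \nameref{eq:b-cvx}, and the objective becomes
\begin{align*}
    \sum_{i,j} w_i b'_{ij} \log v_{ij} - \sum_{j \in \cG} \Bigl(\sum_{i \in \cA} w_i b'_{ij}\Bigr) \log\Bigl(\sum_{i \in \cA} w_i b'_{ij}\Bigr) + \sum_{i \in \cA} w_i \log w_i \;=\; f_{\cvx}(\mb{b}').
\end{align*}
The only real subtlety relative to Lemma \ref{lem:sym-eq-3} is this rescaling: the natural Lagrange multipliers $b_{ij}$ have row sums $w_i$ rather than $1$, so identifying them with the variables in \nameref{eq:b-cvx} requires this explicit substitution (which is well-defined since $w_i > 0$). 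Once it is carried out, the equality of optima follows directly from strong duality.
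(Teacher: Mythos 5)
Your proposal is correct and follows essentially the same route as the paper: dualize \eqref{cp:weighted_dual_1} with multipliers for the valuation constraints and for $\delta_j \geq 0$, invoke Slater's condition, read off the stationarity/finiteness conditions $r_j = \log(\sum_i b_{ij})$, $\sum_j b_{ij} = w_i$, and $\sum_i b_{ij}/w_i \leq 1$, cancel the constant terms, and rescale $b_{ij} \mapsto b_{ij}/w_i$ to land on \nameref{eq:b-cvx} (the paper denotes the raw multipliers $\hat{b}_{ij}$ and the rescaled ones $b_{ij}$, but the computation is identical). The rescaling subtlety you flag is exactly the final substitution the paper performs.
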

\begin{proof}
Let $\hat{b}_{ij}$ be the Lagrange dual variable associated with constraint $r_j+\gamma_i+\delta_j \geq \log v_{ij}$ of \eqref{cp:fSR-dual} and let $\mb{y}_{ij}$ be the Lagrange dual variable associated with constraint $\delta_{ij} \geq 0$. The Lagrangian of \eqref{cp:weighted_dual_1} is defined as follows
\begin{align*}
   L(\mb{r},\bm{\gamma},\bm{\delta},\mb{\hat{b}},\bm{\tau}) &= \sum\limits_{j\in \cG}e^{r_j}+\sum\limits_{i\in \cA}w_i \, \gamma_{i} +\sum\limits_{j\in \cG}\delta_{j}+\sum\limits_{i,j}\hat{b}_{ij}(\log v_{ij}-r_j-\gamma_i-\frac{\delta_j}{w_i}) \\
   &\quad-\sum\limits_{j\in \cG}\delta_j\tau_j + \sum_{i\in \cA} (w_i\log{w_i} - w_i ) \\
    &=\sum\limits_{j\in \cG}(e^{r_j}-(\sum\limits_{i\in \cA}\hat{b}_{ij})r_j) +\sum\limits_{i\in \cA}\gamma_i(w_i-\sum\limits_{j\in \cG}\hat{b}_{ij}) +\sum\limits_{j\in \cG}\delta_j(1-\tau_j-\sum\limits_{i\in \cA}\frac{\hat{b}_{ij}}{w_i})\\
    &\quad+\sum\limits_{i,j}\hat{b}_{ij}\log v_{ij} +\sum_{i\in \cA} (w_i\log{w_i} - w_i )  .
\end{align*}
The Lagrange dual of \eqref{cp:weighted_dual_1} is given by 
\begin{equation}
    g(\mb{\hat{b}},\bm{\tau}) = \inf\limits_{\bm{\delta} \geq 0,\mb{r},\bm{\gamma}} L(\mb{r},\bm{\gamma},\bm{\delta},\mb{\hat{b}},\bm{\tau}). \label{eq:dual4}
\end{equation}
One can verify that Slater's condition is satisfied by \eqref{cp:weighted_dual_1}. So, the supremum of \eqref{eq:dual4} with $\mb{b},\bm{\tau} \geq 0$ is equal to the optimum of \eqref{cp:weighted_dual_1}.

The KKT conditions for the Langrangian imply
\begin{align*}
    e^{r_j} - \sum_{i\in \cA} \hat{b}_{ij} &= 0\\
    w_i-\sum_{j \in \cG} \hat{b}_{ij} &= 0 \\
    1-\tau_j - \sum_{j\in A} \frac{\hat{b}_{ij}}{w_i} & = 0.
\end{align*}
The KKT conditions imply that the minimizer for $r_j$ is given by $r_{j}= \log\left(\sum\limits_{i\in \cA}\hat{b}_{ij}\right)$. For the infimum over $\bm{\gamma}, \bm{\delta}$ to stay finite, the conditions $w_i=\sum\limits_{j\in \cG}\hat{b}_{ij}$ for each $i \in \cA$ and $1-\tau_{j}=\sum\limits_{i\in \cA}\hat{b}_{ij}$ for each $j\in \cG$ are necessary and sufficient. Substituting these conditions in the Lagrange dual, we get 
\begin{align*}
    \sup_{\mb{\hat{b}}, \bm{\tau}}\quad & \sum\limits_{i,j}\hat{b}_{ij}\log v_{ij}-\sum\limits_{j\in \cG}\sum\limits_{i\in \cA}\hat{b}_{ij}\log\left(\sum\limits_{i\in \cA}\hat{b}_{ij}\right)+ \sum\limits_{j\in \cG}\sum\limits_{i\in \cA}\hat{b}_{ij}+\sum_{i\in \cA} (w_i \log{w_i} - w_i) \\
    \quad &\sum\limits_{j\in \cG}\hat{b}_{ij}=w_i \\
    \quad &\sum\limits_{i\in \cA} \frac{\hat{b}_{ij}}{w_i}=1-\tau_{j} \\
    \quad &\mb{\hat{b}}, \bm{\tau} \geq \mb{0} .
\end{align*}
Observe that the supremum in the above program can be switched to maximum because the feasible region is compact and the objective is bounded.  
Also, $\sum_{i,j} \hat{b}_{ij} = \sum_{i\in \cA} w_i$ for any feasible $\mb{\hat{b}}$.  
Finally, substituting $b_{ij} = \frac{\hat{b}_{ij}}{w_i}$ and $q_j = \sum_{i \in \cA} b_{ij}$, we obtain  \nameref{eq:b-cvx}.
\end{proof}

\end{document}